\setlist[itemize]{leftmargin=20pt}
\setlist[enumerate]{leftmargin=20pt}
\setlist[itemize]{leftmargin=20pt}
\setlist[enumerate]{leftmargin=20pt}
\theoremstyle{plain}
\newtheorem{theorem}{Theorem}[section]
\newtheorem{corollary}{Corollary}[section]
\newtheorem{lemma}{Lemma}[section]
\theoremstyle{definition}
\newtheorem*{problem*}{Problem}
\newcommand{\bigO}{\mathcal{O}}
\newcommand{\vertiii}[1]{{\left\vert\kern-0.25ex\left\vert\kern-0.25ex\left\vert #1
		\right\vert\kern-0.25ex\right\vert\kern-0.25ex\right\vert}}
\newcommand{\Vertiii}[1]{{\vert\kern-0.25ex\vert\kern-0.25ex\vert #1
		\vert\kern-0.25ex\vert\kern-0.25ex\vert}}
\newtheorem{observation}{Observation}
\DeclareMathOperator{\support}{\mathrm{supp}}
\DeclareMathOperator*{\rprod}{\overrightarrow{\prod}}
\DeclareMathOperator*{\lprod}{\overleftarrow{\prod}}
\newcommand{\id}{\mathbb{I}}
\newcommand{\s}{\mathbb{S}}
\newcommand{\normH}[1]{{\left\vert\kern-0.25ex\left\vert\kern-0.25ex\left\vert #1
   \right\vert\kern-0.25ex\right\vert\kern-0.25ex\right\vert}}
\newcommand{\re}{\mathrm{re}}
\renewcommand{\d}{\mathrm{d}}
\newcommand{\mc}{\mathcal}
\begin{document}
\title{  Trotterization, Operator Scrambling, and Entanglement}
 \author{Tianfeng Feng}
 \author{Yue Cao} 
 \author{Qi Zhao}
 \email{zhaoqi@cs.hku.hk}
\address{QICI Quantum Information and Computation Initiative, School of Computing and Data Science,
 The University of Hong Kong, Pokfulam Road, Hong Kong SAR, China}
 \date{\today}

\begin{abstract}

Operator scrambling, which governs the spread of quantum information in many-body systems, is a central concept in both condensed matter and high-energy physics. 
Accurately capturing the emergent properties of these systems remains a formidable challenge for classical computation, while quantum simulators have emerged as a powerful tool to address this complexity. In this work, we reveal a fundamental connection between operator scrambling and the reliability of quantum simulations.
We show that the Trotter error in simulating operator dynamics is bounded by the degree of operator scrambling, providing the most refined analysis of Trotter errors in operator dynamics so far.
Furthermore, we investigate the entanglement properties of the evolved states, revealing that sufficient entanglement can lead to error scaling governed by the normalized Frobenius norms of both the observables of interest and the error operator, thereby enhancing simulation robustness and efficiency compared to previous works. We also show that even in regimes where the system’s entanglement remains low, operator-induced entanglement can still emerge and suppress simulation errors. Our results unveil a comprehensive relationship between Trotterization, operator scrambling, and entanglement, offering new perspectives for optimizing quantum simulations.
\end{abstract}

\maketitle

The dynamics of operator growth and quantum scrambling are pivotal for understanding quantum chaos and information propagation in many-body systems \cite{Shenker_2014otoc,xu2024scrambling,hayden2007black,maldacena2016bound}. Operator growth describes the evolution of initially local observables into increasingly non-local operators under Heisenberg dynamics. This process can be characterized by measures such as Krylov complexity and the decay of out-of-time-order correlators (OTOCs) \cite{Shenker_2014otoc,xu2024scrambling,caputa2021geometrykrylovcomplexity}. These phenomena are intrinsically connected to the quantum Lyapunov exponent, which quantifies the rate at which quantum information delocalizes in chaotic systems \cite{SrednickiPhysRevE.50.888,Parker_2019OperatorGrowth,caputa2021geometrykrylovcomplexity}.

Operator scrambling, closely related to OTOCs, describes the diminishing ability to recover quantum information locally within quantum many-body systems. It involves altered commutative relations between the evolution operator and local operators, reflecting how information becomes irretrievably spread across the system. This concept has emerged as a universal tool for investigating quantum chaos \cite{hosur2016chaos,roberts2017chaos,swingle2018unscrambling,mezei2017entanglement,couch2020speed}, holographic duality \cite{liu2014entanglement}, and thermalization \cite{banuls2011strong}.
Yet, quantum many-body systems, particularly those exhibiting non-equilibrium dynamics or strong correlations, are notoriously intractable for classical computers \cite{feynman2018simulating}. The prediction of various properties of a quantum many-body system is challenging in general. This intrinsic complexity necessitates quantum simulation \cite{lloyd1996universal,suzuki1991general,childs2018toward,childs2019nearly,chailds2021TrottertheoryPhysRevX.11.011020,low2019hamiltonian}.

Despite significant advances in quantum simulation algorithms, simulation errors in observables and operator dynamics remain poorly understood. Most Trotter error bounds—whether based on spectral norm distances between unitaries \cite{lloyd1996universal,childs2019nearly,chailds2021TrottertheoryPhysRevX.11.011020} or Heisenberg-evolved observables—are worst-case estimates that often substantially overestimate the actual errors for local observables \cite{heyl2019quantum, LiPhysRevA.110.062614,yu2024observabledrivenspeedupsquantumsimulations}. This leads to a significant gap between theoretical predictions and the accuracy attainable in practice for quantum many-body systems.

In this work, we establish a rigorous connection between Trotter errors of observables and operator scrambling. 
We demonstrate that the Trotter error in observable growth is fundamentally bounded by the cumulative operator scrambling. 
Specifically, we show that the square of the observable error is constrained by the scrambling of the observable of interest and the multiplicative error of the Trotterized unitary.
Based on this insight, we achieved a tighter analysis of the Trotter error bound for observable dynamics.

We further advance our analysis by developing an entanglement-based one.
The entanglement-based bound of operator dynamics indicates that if the state’s entanglement is sufficiently large for given partitions, the scaling of Trotter error for observables can simultaneously achieve the normalized Frobenius norm for both the observable 
 and the error operator, akin to the effect of random inputs \cite{yu2024observabledrivenspeedupsquantumsimulations,zhao2022hamiltonian}, leading to acceleration in quantum simulation.
Remarkably, different from previous entangled-based bound \cite{zhao2024entanglementacceleratesquantumsimulation}, even if entanglement is limited during evolution, a sufficiently entangled final state ensures observable simulation error approaches their average performance, which highlights its strong dependence on the final state rather than the intermediate process.
Intriguingly, even in systems with little intrinsic entanglement, the intrinsic summation property of observables and error operators can potentially induce effective entanglement \cite{Nozaki_2014OE}, thereby still reducing the Trotter error. 
 We have conducted extensive numerical experiments to validate our theoretical analysis. Our results provide a rigorous theoretical guarantee for unifying entangled state-driven and observables-driven speedup within quantum simulations. 

Our results yield a key insight: In strongly coupled quantum systems, Trotter errors of observables can be significant from the perspective of operator commutators, while in weakly coupled systems, their growth is generally suppressed. 
 However, this feature shifts when considering the entanglement of states: strong coupling generates substantial entanglement, which facilitates thermalization and dynamically constrains Trotter errors via entanglement entropy. Conversely, limited entanglement in weakly coupled systems provides less capacity for error suppression. These findings reveal a balance between competing effects: strong coupling tends to increase Trotter errors, whereas the accompanying entanglement growth acts to suppress them.

 \section{Trotterization}
                    
The principle of Trotterization is to decompose global dynamics into a sequence of easily implementable evolutions. Specifically,
 suppose that the target Hamiltonian $H=\sum_{l=1}^L H_l$ can be decomposed into $L$ terms whose time evolution operators $e^{-iH_l t}$ can be implemented on a quantum computer.
Using the Trotter formula, e.g., the first-order product formula (PF1), the full time-evolution operator $U_0(t) = e^{iHt}$ can be approximated by discretizing it into $r\in N$ segments of the fundamental gates,

\begin{align}
\mathscr{U}^r_1(t/r):=\big(e^{-iH_1 t/r}e^{-iH_2 t/r}\cdots e^{-iH_L t/r}\big)^r,
\end{align}
where $\norm{U_0(t)-\mathscr{U}^r_1(t/r)} =\bigO(t^2/r)$. 
Here $\norm{A}$ denotes the spectrum norm of $A$. The total error of long-time evolution is upper bounded via the triangle inequality as the sum of the Trotter errors for each segment. This analysis is generally reasonably tight (except for PF1, which can exhibit destructive error interference \cite{Tran_2020,Layden_2022}). For simplicity, in the following, we set $t/r=\delta t$ and focus on one one-segment case.

Generally, a $p$-order product formula (PF$p$), denoted as $\mathscr{U}_p(\delta t)$,  can achieve a more precise approximation of idea evolution \cite{chailds2021TrottertheoryPhysRevX.11.011020}. 
In the following, our primary concern lies in the diverse physical properties of a quantum system \cite{trivedi2023quantumadvantagestabilityerrors,Huang_2020}, necessitating a precise characterization of the Trotterization error associated with the observable, e.g., $|\bra{\psi}O( t)-{\mathscr{U}^r_p}^\dagger (\delta t) O \mathscr{U}^r(\delta t)_p \ket{\psi}|$.

 \section{ Operator Scrambling in Observable Error}

Typically, the operator growth in quantum mechanics is non-local. One of the most important concepts related to operator growth is the operator scrambling (or quantum information scrambling) to quantify the degree of the spreading of quantum information across subsystems \cite{landsman2019verified,Shenker_2014otoc,xu2024scrambling}.
Specifically, given a Hamiltonian of a many-body system,  the operator scrambling is defined as  \cite{Shenker_2014otoc}
\begin{equation}
    C(t):=\langle [O(t),V]^\dagger [O(t),V]\rangle,
\end{equation}
where $O(t)=e^{iHt} O e^{-iHt}$ and $V$ is a (local) operator at a given position. Here $\langle A \rangle=\Tr(\rho A)$ denotes the expectation value, where $\rho$ is the quantum state of the system of interest. Suppose $O$ is a local operator at site $r$ and $V$ is at site $r^\prime$ with $r\ne r^\prime$. At $t=0$,  $[O,V]=0$, indicating no correlation between $O$ and $V$. For $t>0$ and if there is an overlap between the support of $O(t)$ and $V$, in this case, the operator scrambling is non-zero, suggesting quantum information has been spread into the region of $V$.

Here we show that the observable error in Trotterization is bound by the
operator scrambling caused by the observable $O$ and the multiplicative error operator $\mathscr{M}$ in the approximate unitary. For simplicity, we denote the ideal unitary evolution operator as $U_0:=U_0(\delta t)$, and the approximate unitary operator for PF$p$ as $U_p:=U_p(\delta t)$. The multiplicative error $\mathscr {M}$ is then defined by the relation $\mathscr{U}_p= U_0(I+\mathscr{M})$.
For a given initial state $\ket{\psi}$ and the (Hermitian) observable of interest $O$, the observable error is donated as 
\begin{equation}\label{main:epsilon1}
\epsilon_O=\abs{\langle \psi| O(\delta t)- \mathscr{U}_p^\dagger  O \mathscr{U}_p\ket{\psi}},
\end{equation}
where $O(\delta t)=U^\dagger_0 O U_0$. For bounding the observable error,  by employing the Cauchy–Schwarz inequality, one can simply find that the upper bound of ${\epsilon_O}^2$  is exactly bounded by the operator scrambling with a pure state \cite{footnote2}, 
\begin{equation}
{\epsilon_O}^2 \le 
\bra{\psi} [O(\delta t),\mathscr{M}]^\dagger[O(\delta t),\mathscr{M}]\ket{\psi}.
\label{main:eq:scambling_bound}
\end{equation}    
Note that the Lieb-Robinson (LR) bound characterizes the worst-case scenario for square root of operator scrambling, as $\max_{\ket{\psi}}\sqrt{\bra{\psi} [O(\delta t),\mathscr{M}]^\dagger[O(\delta t),\mathscr{M}]\ket{\psi}}=\norm{[O(\delta t),\mathscr{M}]}$. Therefore, for any given $\ket{\psi}$, Eq. (\ref{main:eq:scambling_bound})  provides a strictly tighter bound of $\epsilon_O$ than the LR bound.
We call this simple bound a scrambling-based bound of observable error in Trotterization.
This result is in agreement with and compatible with the effects of the light cone \cite{tran2020hierarchylightcone,chailds2021TrottertheoryPhysRevX.11.011020}, as it demonstrates that errors originating from outside the evolved light cone at  $O(\delta t)$
 does not contribute. Compared to all previous Trotter error analyses concerning observables \cite{heyl2019quantum,LiPhysRevA.110.062614,chailds2021TrottertheoryPhysRevX.11.011020,yu2024observabledrivenspeedupsquantumsimulations}, this simple bound is tighter and successfully preserves information about the quantum state, which is crucial and accurately reflects the actual error.
 We will later see how this plays a key role in reducing errors during entanglement analysis. 

Generally, in PF$p$, $\mathscr{M}=\sum_j M_j \delta t^{p+1} + \mathscr{M}_{Re}$, where $M_j$ is a local operator and $\mathscr{M}_{Re}=\bigO(\delta t^{p+2})$ \cite{footnote2, chailds2021TrottertheoryPhysRevX.11.011020}. That is, $\mathscr{M}$ is not a local operator. For simplicity, below we set $\bra{\psi}B^\dagger B\ket{\psi}=\bra{\psi}|B|^2\ket{\psi}=\norm{B\ket{\psi}}^2$. By utilizing the triangle inequality of vector norm, one has
\begin{equation}
{\epsilon_O} \le \sum_j  \sqrt{\bra{\psi} |[O(\delta t),M_j]|^2\ket{\psi} }\delta t^{p+1}+\bigO(\delta t^{p+2}).
\label{main:eq:scambling_bound2}
\end{equation}  
This can be interpreted as the observable error being determined by the sum of the square root of the operator scrambling of each local error in a short-time evolution.

From the perspective of communication \cite{xu2024scrambling}, if the scrambling is small, the physical implication is that $U_0$ and $\mathscr{U}_p
$ cannot be distinguished by measuring O. In other words, it shows that $U_0$ and $\mathscr{U}_p$ have low distinguishability from the perspective of measurement of $O$.

For analyzing the long-time evolution error, we can bound the Trotter error as 
\begin{equation}
    \begin{split}
        \epsilon_O &=\abs{\langle \psi| {\mathscr{U}^{\dagger}_p}^r  O \mathscr{U}^r_p-  O(r\delta t)\ket{\psi}} \\
        &\le \sum_{k=1}^{r} \abs{\langle \psi_{k}|  ({\mathscr{U}^{\dagger}_p} {O}_{k-1} \mathscr{U}_p -   {U^\dagger_0}{O}_{k-1} {U_0} )  \ket{\psi_{k}}},
        \label{main:longtime}
    \end{split}
\end{equation}
where $O_k={U^\dagger}^{k}_0 O {U}^{k}_0=O(k\delta t)$ and $\ket{\psi_{k}}=\mathscr{U}^{r-k}_p\ket{\psi}$. Since  Eq. (\ref{main:eq:scambling_bound}) and Eq. (\ref{main:eq:scambling_bound2}) are general for any state and observable, substituting them into Eq. (\ref{main:longtime}), we obtain the following theorem \cite{footnote2}:

\begin{theorem}[Accumulated operator scrambling-based bound of observable error]
\label{main:theorem:scramble}
      Given ideal unitary $U^r_0=e^{-iH  t }$, where $t=r\delta t$ and approximate unitary $\mathscr{U}^r_p$ with $\mathscr{U}_p=U_0(I+\mathscr{M})$,  for the Hamiltonian $H=\sum_{l=1}^L H_l$ and an input pure state $\ket{\psi}$, the additive Trotter error of of an observable $O$ can be bound as
        \begin{equation}
           \epsilon_O\le  \sum_{k=1}^{r}\sqrt{C_{k}}\delta t^{p+1} + 2 r\norm{O} \norm{\mathscr{M}_{Re}}, 
        \end{equation}
        where
        $C_{k}=\bra{\psi_{k}} |[O(k \delta t),M]|^2 \ket{\psi_k}$ which is the operator scrambling for evolution of time step $k$ with initial state $\ket{\psi_{k}}=\mathscr{U}^{r-k}_p \ket{\psi}$, $\mathscr{M}=\sum_j M_j \delta t^{p+1}+\mathscr{M}_{Re}$, $M=\sum_j M_j$,
        $O(k\delta t)=e^{iH k\delta t} O e^{-iH k \delta t}$, 
       and $\norm{\mathscr{M}_{Re}}=\bigO(\sum_{l_1,...,l_{p+2}=1}^L \norm{[H_{l_1},[H_{l_2},...,[H_{l_{p+1}},H_{l_{p+2}}]]]} \delta t^{p+2})$.   
\end{theorem}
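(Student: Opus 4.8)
The plan is to lift the one-segment scrambling estimate of \eqref{main:eq:scambling_bound} to the full $r$-segment evolution using the telescoping identity already recorded in \eqref{main:longtime}; no new inequality is needed beyond what the excerpt establishes, so the work is entirely in the bookkeeping of which observable and which state feed the single-step bound at each segment. First I would note that, because every segment applies the same product-formula step $\mathscr{U}_p=U_0(I+\mathscr{M})$ with a \emph{fixed} multiplicative error $\mathscr{M}$, each summand in \eqref{main:longtime},
\[
\abs{\bra{\psi_k}\big(\mathscr{U}_p^\dagger O_{k-1}\mathscr{U}_p - U_0^\dagger O_{k-1}U_0\big)\ket{\psi_k}},
\]
is structurally identical to the one-segment error of \eqref{main:epsilon1} with the observable $O$ replaced by the ideally evolved $O_{k-1}=O((k-1)\delta t)$ and the input replaced by $\ket{\psi_k}=\mathscr{U}_p^{r-k}\ket{\psi}$. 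Since \eqref{main:eq:scambling_bound} is stated for an arbitrary Hermitian observable and an arbitrary pure state, I would apply it after these substitutions, using $U_0^\dagger O_{k-1}U_0 = U_0^{\dagger k}OU_0^{k}=O(k\delta t)$ so that the relevant commutator is $[O(k\delta t),\mathscr{M}]$; this gives, for each $k$, the per-segment bound $\norm{[O(k\delta t),\mathscr{M}]\ket{\psi_k}}$.

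Next I would split off the remainder by writing $\mathscr{M}=M\delta t^{p+1}+\mathscr{M}_{Re}$ with $M=\sum_j M_j$, so that $[O(k\delta t),\mathscr{M}]=[O(k\delta t),M]\delta t^{p+1}+[O(k\delta t),\mathscr{M}_{Re}]$, and apply the triangle inequality once at the level of the vector norm $\norm{\,\cdot\,\ket{\psi_k}}$. The leading piece is exactly $\sqrt{C_k}\,\delta t^{p+1}$ with $C_k=\bra{\psi_k}\abs{[O(k\delta t),M]}^2\ket{\psi_k}$, while the remainder piece I would bound in a state-independent way by submultiplicativity, $\norm{[O(k\delta t),\mathscr{M}_{Re}]\ket{\psi_k}}\le\norm{[O(k\delta t),\mathscr{M}_{Re}]}\le 2\norm{O(k\delta t)}\norm{\mathscr{M}_{Re}}$, and then use unitary invariance of the spectral norm, $\norm{O(k\delta t)}=\norm{O}$, to make this $2\norm{O}\norm{\mathscr{M}_{Re}}$ uniformly in $k$. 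Summing over $k=1,\dots,r$ collapses the telescope into
\[
\epsilon_O\le\sum_{k=1}^{r}\sqrt{C_k}\,\delta t^{p+1}+2r\norm{O}\norm{\mathscr{M}_{Re}},
\]
the factor $r$ being the accumulation of $r$ identical worst-case remainder contributions. The commutator-scaling form of $\norm{\mathscr{M}_{Re}}=\bigO(\delta t^{p+2})$ I would not re-derive but import from the standard product-formula analysis \cite{chailds2021TrottertheoryPhysRevX.11.011020}, where the order-$(p{+}2)$ remainder is expressed through nested commutators of the fragments $H_{l}$.

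I expect the only real obstacle to be this bookkeeping rather than any hard estimate: one must verify that the effective observable entering the scrambling bound at segment $k$ is the \emph{ideally} evolved $O(k\delta t)$ --- not $O$ itself, nor any Trotterized surrogate --- and that the correct input is the partially Trotterized state $\ket{\psi_k}$, so that the resulting $C_k$ genuinely is the operator scrambling of the $k$-th intermediate configuration. The remaining manipulations (splitting $\mathscr{M}$, the single triangle inequality, and the crude spectral-norm bound on the remainder) are routine; the only deliberate loss of tightness is in replacing the state-dependent remainder $\norm{[O(k\delta t),\mathscr{M}_{Re}]\ket{\psi_k}}$ by its worst case, which is precisely what turns the refined scrambling term into the clean but coarser $2r\norm{O}\norm{\mathscr{M}_{Re}}$ contribution.
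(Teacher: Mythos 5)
Your proposal is correct and follows essentially the same route as the paper: telescope via Eq.~(\ref{main:longtime}), apply the one-segment scrambling bound with observable $O(k\delta t)$ and state $\ket{\psi_k}$, split $\mathscr{M}=M\delta t^{p+1}+\mathscr{M}_{Re}$, and bound the remainder by $2\norm{O}\norm{\mathscr{M}_{Re}}$ uniformly in $k$. The only cosmetic difference is that the paper separates the leading and remainder commutators by expanding $\abs{[O(k\delta t),\mathscr{M}]}^2$ and applying Cauchy--Schwarz to the cross terms, which is equivalent to your single vector-norm triangle inequality.
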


Without loss of generality, $\mathscr{U}^r_p$ can be replaced by any approximate circuit $\prod_{k=1}^r V_k$ of $U_0$ where $V_k$ is the approximate unitary for a single step. Thus, our result also suggests that the observable error for evolution under imperfect unitary circuits is bounded by the accumulated operator scrambling with respect to the observable $O$ and the corresponding multiplicative error operator $\mathscr{M}_k$.

\section{Entanglement-based bound of observables error}
Classical simulation of quantum systems faces insurmountable barriers when confronted with entangled states. The area-law entanglement of ground states enables efficient representation using tensor networks \cite{eisert2010colloquium}, but the volume-law entanglement of excited states quickly renders such methods intractable \cite{bianchi2022volume}.
Recently, it has been shown that a quantum state with sufficient entanglement can suppress Trotter error, reducing additive error $\norm{ \mathscr{U}_p-e^{-iH\delta t}}$ from spectrum norm to the random input case \cite{zhao2024entanglementacceleratesquantumsimulation}.
Here, we analyze the impact of quantum state entanglement on the observable errors arising from Trotterization.

Without loss of generality, we still begin to study the single-segment case. 
Expanding the Eq. (\ref{main:eq:scambling_bound}), and it  can be further bounded by \cite{footnote2},
        \begin{equation}
        \label{main:vectornorm}
        \begin{split}
           \epsilon_O\le  
          \norm{O \ket{\psi (\delta t)}} \norm{\mathscr{M}\ket{\psi_{O(\delta t)}}}+\norm{\mathscr{M}\ket{\psi}}\norm{O \ket{\psi_{U(\delta t)\mathscr{M}}}} , 
                   \end{split}
        \end{equation}
        $\ket{\psi(\delta t)}= U(\delta t) \ket{\psi} =e^{-iH\delta t} \ket{\psi}$,
        $\ket{\psi_{O(\delta t)}}=\frac{O(\delta t)\ket{\psi}}{\norm{O(\delta t)\ket{\psi}}}$,  $\ket{\psi_{M}}=\frac{\mathscr{M}\ket{\psi}}{\norm{\mathscr{M}\ket{\psi}}}$, and $\ket{\psi_{U(\delta t)\mathscr{M}}}=U(\delta t) \ket{\psi_{\mathscr{M}}}$.
It is shown that for any operator $A=\sum_j A_j$, where $A_j$ acts nontrivially on the subsystem with $\support(A_j)$, $\norm{A\ket{\psi}}$ can be related to the entanglement entropy \cite{zhao2024entanglementacceleratesquantumsimulation}. 
Here we give an entanglement-based bound for observable error \cite{footnote2}:

\begin{theorem}
    [Entanglement-based bound for observable error]
    For the Hamiltonian $H=\sum_{l=1}^L H_l$ and an input pure state $\ket{\psi}$, 
    $\mathscr{U}_p=e^{-iH\delta t}(I+M t^{p+1}+\bigO(\delta t^{p+2}))$ with $M=\sum_j M_j $, the additive Trotter error of the $p$th-order product formula of an observable $O=\sum_j O_j$ can be bound as 
    \begin{equation}
    \begin{split}
         \epsilon_O 
         &\le  \Bigg[\sqrt{\norm{O}^2_F+\Delta_{O}(\ket{\psi(\delta t)})} \sqrt{\norm{M}^2_F+\Delta_{M}( \ket{\psi_{O(\delta t)}})} \\&+\sqrt{\norm{O}^2_F+\Delta_{O}(\ket{\psi_{U(\delta t)\mathscr{M}}})} \sqrt{\norm{M}^2_F+\Delta_{M}(\ket{\psi})}\bigg] \delta t^{p+1} \\
         &+\bigO (\delta t^{p+2}),
             \end{split}
    \end{equation}
    where  
    $\|A\|^2_F:= \tr(A^{\dagger}A) /d$ is the (square) of the normalized Frobenius norm,
    $\Delta_A(\ket{\chi})= \sum_{j,j'} \|A_j^{\dag} A_j'\| ~\sqrt{2\log(d_{\support(A_j^{\dag}A_{j'})})-2S(\rho_{j,j'})},$ 
    and   $\rho_{j,j'}:=\tr_{[N]\setminus \support(A_j A_j')}(\ket{\chi}\bra{\chi})$ is the reduced density matrix of $\ket{\chi}\bra{\chi}$ on the subsystem of $\support(A_jA_j')$, and $\text{S}(\rho_{j,j'})$ is the entanglement entropy of $\rho_{j,j'}$.
\end{theorem}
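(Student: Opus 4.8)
The plan is to reduce the theorem to the scrambling vector-norm bound of Eq.~(\ref{main:vectornorm}) and then control each factor $\norm{A\ket{\chi}}$, with $A\in\{O,M\}$, by a single operator estimate. Since $\mathscr{M}=M\delta t^{p+1}+\bigO(\delta t^{p+2})$, for any normalized $\ket{\chi}$ one has $\norm{\mathscr{M}\ket{\chi}}=\norm{M\ket{\chi}}\,\delta t^{p+1}+\bigO(\delta t^{p+2})$, while $\norm{O\ket{\chi}}=\bigO(1)$; substituting into Eq.~(\ref{main:vectornorm}) and collecting orders isolates the $\delta t^{p+1}$ prefactor and pushes every cross term into $\bigO(\delta t^{p+2})$. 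The whole statement then follows once I establish the key estimate
\begin{equation}
\norm{A\ket{\chi}}^2 \le \norm{A}^2_F + \Delta_A(\ket{\chi})
\nonumber
\end{equation}
for any $A=\sum_j A_j$ and any pure state $\ket{\chi}$, and apply it to the four factors $\norm{O\ket{\psi(\delta t)}}$, $\norm{M\ket{\psi_{O(\delta t)}}}$, $\norm{O\ket{\psi_{U(\delta t)\mathscr{M}}}}$, and $\norm{M\ket{\psi}}$.

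To prove the key estimate I would expand $\norm{A\ket{\chi}}^2=\sum_{j,j'}\bra{\chi}A_j^\dagger A_{j'}\ket{\chi}$ and treat each pair $(j,j')$ separately. Writing $S=\support(A_j^\dagger A_{j'})$ with dimension $d_S=d_{\support(A_j^\dagger A_{j'})}$, each term is the expectation of an operator supported on $S$, so $\bra{\chi}A_j^\dagger A_{j'}\ket{\chi}=\tr(A_j^\dagger A_{j'}\rho_{j,j'})$ with $\rho_{j,j'}$ the reduced state on $S$. The decisive move is to split the reduced state as $\rho_{j,j'}=I/d_S+(\rho_{j,j'}-I/d_S)$. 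The maximally mixed part contributes $\tr(A_j^\dagger A_{j'})/d$ (using that $A_j^\dagger A_{j'}$ acts as the identity off $S$, so that $\tr_S(\cdot)/d_S=\tr(\cdot)/d$), and summing these over $j,j'$ reproduces exactly $\tr(A^\dagger A)/d=\norm{A}^2_F$.

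The remaining traceless-reference part is where the entanglement enters. I would bound it by Hölder's inequality, $\abs{\tr(A_j^\dagger A_{j'}(\rho_{j,j'}-I/d_S))}\le \norm{A_j^\dagger A_{j'}}\,\norm{\rho_{j,j'}-I/d_S}_1$, and then estimate the trace-norm deviation from maximal mixedness by Pinsker's inequality with reference state $I/d_S$: since the relative entropy satisfies $D(\rho_{j,j'}\Vert I/d_S)=\log d_S-S(\rho_{j,j'})$, this yields $\norm{\rho_{j,j'}-I/d_S}_1\le \sqrt{2\log(d_{\support(A_j^\dagger A_{j'})})-2S(\rho_{j,j'})}$. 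Summing over $(j,j')$ reproduces $\Delta_A(\ket{\chi})$, and because the correction is at most its absolute value this delivers the one-sided inequality in the key estimate.

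The main obstacle is the Pinsker step together with its bookkeeping, rather than any single hard inequality. One must keep the logarithm base consistent between $\log d_S$ and the entropy $S(\rho_{j,j'})$ so that the Pinsker constant is exactly $2$; handle the fact that $A_j^\dagger A_{j'}$ is generally non-Hermitian, so only the absolute-value Hölder bound is available and the residual real part must be absorbed into $\Delta_A$; and track that each pair $(j,j')$ carries its own support $S$, hence its own dimension $d_S$ and reduced state $\rho_{j,j'}$. Care is also needed to confirm that the maximally mixed contributions assemble precisely into $\norm{A}^2_F$ and that every term dropped in the order counting is genuinely $\bigO(\delta t^{p+2})$, uniformly in $\ket{\psi}$.
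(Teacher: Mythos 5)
Your proposal is correct and follows essentially the same route as the paper: reduce to the vector-norm bound of Eq.~(\ref{main:vectornorm}) (the paper's Lemma~\ref{Vectornormbound}), then control each factor $\norm{A\ket{\chi}}^2$ by splitting $\rho_{j,j'}$ into $\mathbb{I}/d_S$ plus a remainder, recovering $\norm{A}_F^2$ from the maximally mixed part and bounding the remainder via H\"older and the quantum Pinsker inequality --- exactly the paper's Lemma~\ref{SMLemma:local}. The only cosmetic difference is that the paper imports that key estimate from Ref.~\cite{zhao2024entanglementacceleratesquantumsimulation} (reproving it in the same way), whereas you derive it inline.
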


Consider a $k$-local operator
 $O=\sum_j O_j$. If the entanglement of state $\ket{\psi}$ is sufficiently large such that both $O(\delta t)\ket{\psi}$ and $\mathscr{M}\ket{\psi}$ also exhibit substantial entanglement for subsystem so that $\Delta_A(\ket{\chi})\approx 0$, one may obtain the following bound:
  \begin{equation}
      \epsilon_O \lesssim 
      2 \norm{O}_F \norm{M}_F \delta t^{p+1}+\bigO(\delta t^{p+2}).
  \end{equation}
Given that $\norm{U^\dagger_0 O U_0-\mathscr{U}^\dagger_pO\mathscr{U}_p}=\norm{[O(\delta t),\mathscr{M}]}\le
2\norm{O}\norm{\mathscr{M}}=2\norm{O}\norm{M} \delta t^{p+1}+ \bigO(\delta t^{p+2})$,  the above result suggest we can enjoy error reduction exemplified by the normalized Frobenius norm $\norm{O}_F \norm{M}_F $, which aligns with the behavior observed in the Haar random case \cite{yu2024observabledrivenspeedupsquantumsimulations}.
In the context of Trotterization, it is generally known that  $\norm{\mathscr{M}}=\bigO (N\delta t^{p+1})$, $\norm{\mathscr{M}}_F=\bigO (\sqrt{N}\delta t^{p+1})$ \cite{chailds2021TrottertheoryPhysRevX.11.011020}. 
Similarly, for a general operator expressed as $O=\sum_{i=1}^M P_i$ ($P_i$ is Pauli operator), we may have $\norm{O}=\bigO (M)$ and $\norm{O}_F=\bigO (\sqrt{M})$.
Consequently, different from the previous result \cite{zhao2024entanglementacceleratesquantumsimulation}, our result suggests that entanglement allows for a parallel quadratic speedup in quantum simulation with respect to the observable and the multiplicative error.

In situations where the entanglement of the state $\ket{\psi}$ grows slowly, i.e., many-body localization case \cite{nandkishore2015manybodylocalize,MBLRevModPhys.91.021001}, we demonstrate that operator-induced entanglement can still occur, potentially reducing simulation errors.  The operator-induced entanglement can be quantified by the von Neumann entropy of the state $\ket{\psi_X}=X\ket{\psi}/\norm{X\ket{\psi}}$ \cite{Nozaki_2014OE}. 
The operator $O(\delta t)$ is expressed as a sum of terms $O(\delta t)=\sum_{i=1} \alpha_i P_i$, where different $P_i$ and $P_j$ are orthogonal if $i\ne j$.  These orthogonal projectors can, in general, generate entanglement in the state $\ket{\psi}$, leading to a non-negligible entanglement entropy for the evolved state $\ket{\psi_{O(\delta t)}}$.
Similarly, $\mathscr{M} $ can also induce significant entanglement for state $\ket{\psi_{\mathscr{M}}}$ across the subsystems associated with $P_iP_j$.  
As shown in Fig. \ref{fig:induced_entropy}, we present a numerical test showing that even when the entanglement entropy of the evolved quantum state is weak (around $0.2$) within a four-body subsystem partition (red dots), a suitable operator $O(\delta t)$
 and $\mathscr{M}$ can simultaneously induce an entanglement entropy of approximately $0.8$ and $1$ within the same four-body subsystem, respectively.

\begin{figure}
    \centering
    \includegraphics[width=0.93\textwidth]{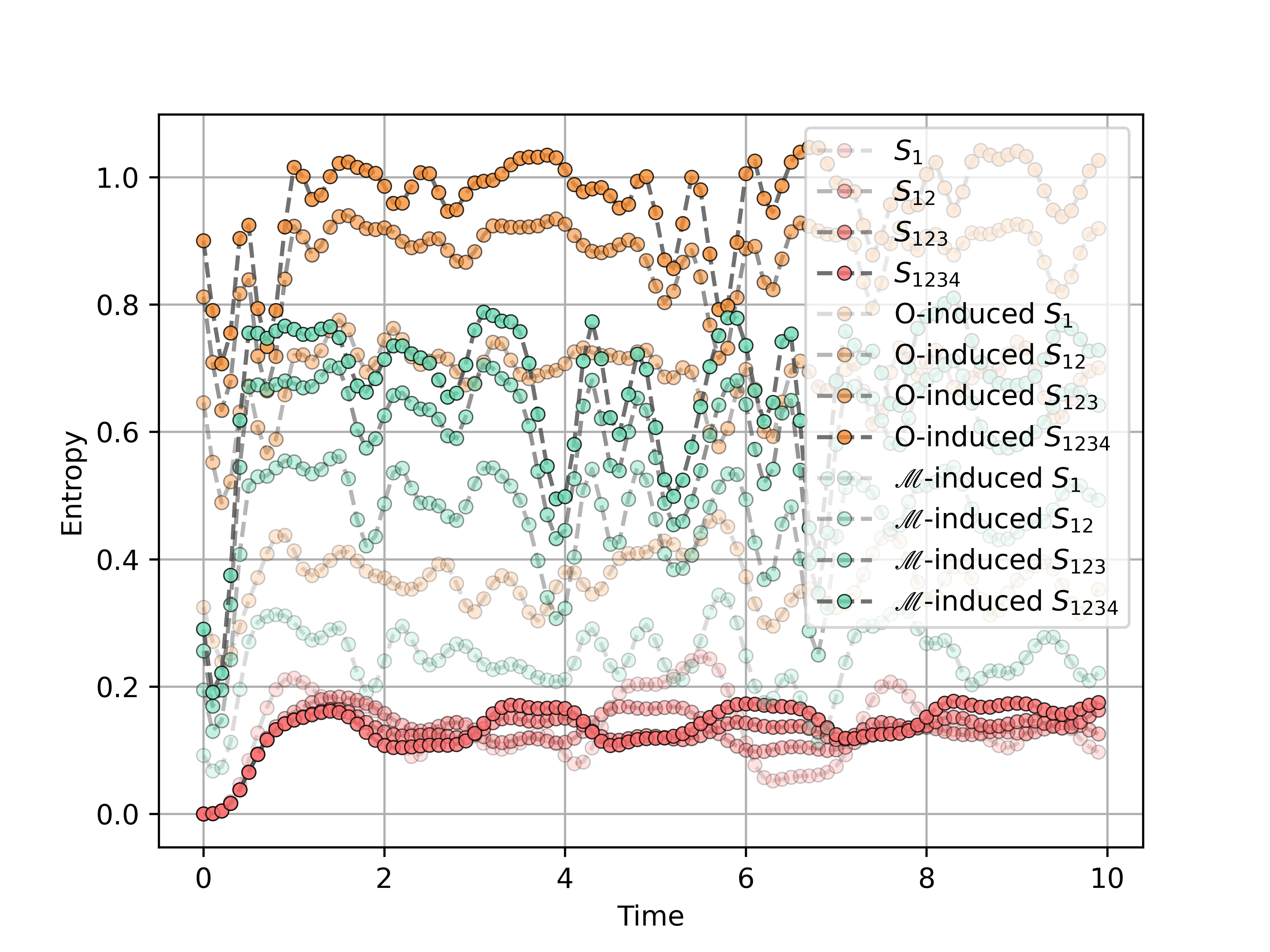}
    \caption{Operator-induced entanglement. Entanglement entropy of state $\ket{\psi(t)}$ and operator induced entropy, i.e. entropy of $\ket{\psi_O}$ and $\ket{\psi_{\mathscr M}}$ defined as $\ket{\psi_O}={O\ket\psi}/{\|O\ket\psi\|}$, $\ket{\psi_{\mathscr M}}={{\mathscr M}\ket\psi}/{\|{\mathscr M}\ket\psi\|}$, where $O=\sum_{i=1}^{N-1}Z_iZ_{i+1}/{\|\sum_{i=1}^{N-1}Z_iZ_{i+1}\|}$. The system of Hamiltonian $H=0.809\sum_{j=1}^{N}X_j+0.9045\sum_{j=1}^NY_j+\sum_{j=1}^{N-1}X_jX_{j+1}$ is initialized in the product state $\ket+^{\otimes N}$, ensuring that the entanglement entropy remains suppressed throughout the evolution. $S_A$ denotes entanglement entropy defined on subsystem $A$.}
    \label{fig:induced_entropy}
\end{figure}

\begin{figure*}[tb]
\centering
\subfloat[]{
    \includegraphics[width=0.45\textwidth]{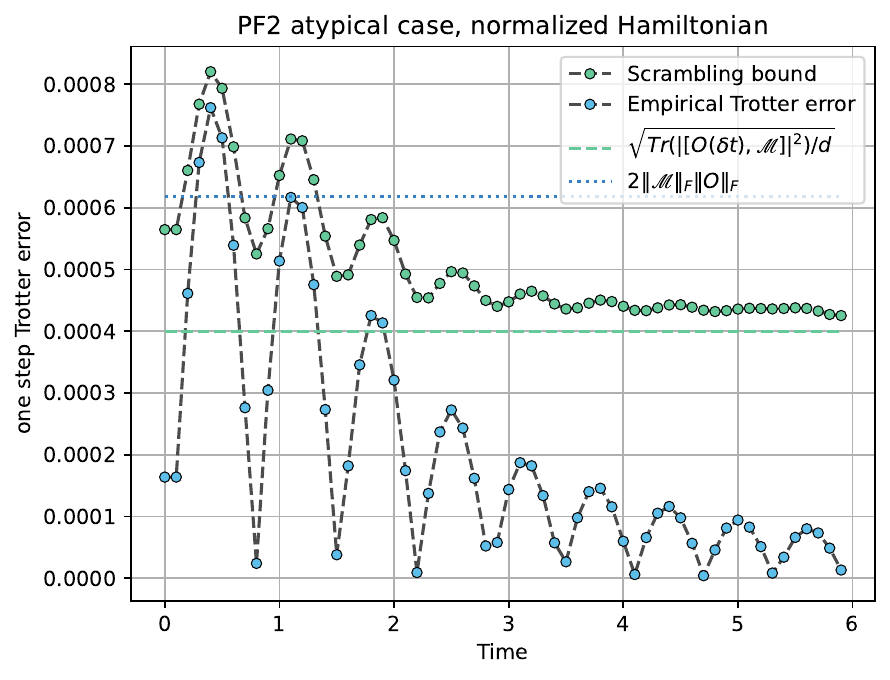}
}
\subfloat[]{
    \includegraphics[width=0.46\textwidth]{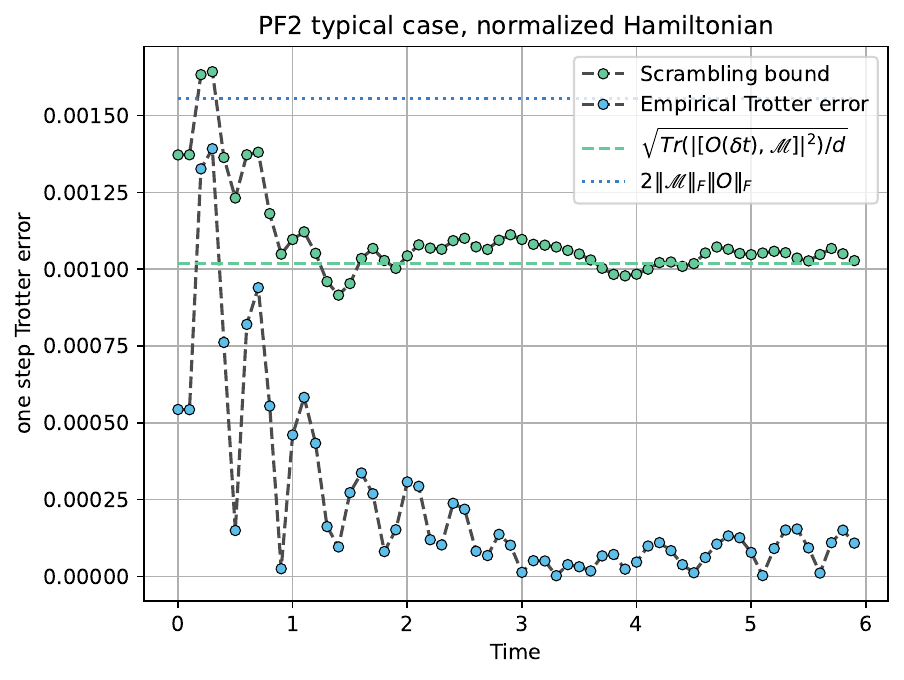}
}
    \caption{The one-step Trotter error for the expectation value of the normalized Hamiltonian ($O=H/\norm{H}$), given by $|\langle\psi(t)|U_p^\dagger HU_p-U_0^\dagger  HU_0|\psi(t)\rangle| /\norm{H}$ and the scrambling bound given by Eq.\ \ref{main:eq:scambling_bound}, are presented. The initial state is $|01\rangle^{\otimes N/2}$, which evolves under the Hamiltonian specified at Eq. (\ref{main:Hamil}).  Each Trotter segment has a duration of  $t/r=0.1$. In panels (a) and (b), the results of the PF2 method are presented for an atypical parameter set $(h_x,h_y,h_z)=(0,0.9045,0.4)$ a typical parameter set $(h_x,h_y,J)=(0.809, 0.9045, 1)$, respectively. The atypical parameter set suppresses the growth of the entanglement entropy in the state $|\psi(t)\rangle$, preventing it from reaching its maximum value. Compared to the worst case (LB bound) error value $\|U^\dagger OU-U_0^\dagger OU_0\|=\norm{[O(\delta t), \mathscr{M}]}$ (equal to $1.0912\times10^{-3}$ for the atypical case and $3.3254\times10^{-3}$ for the typical case), the error with respect to entangled states is remarkably smaller. }
    \label{main:fig:scramble_H}
\end{figure*}

To analyze the long-time evolution error, we combine Eq. (\ref{main:longtime}) and Eq. (\ref{main:vectornorm}), leading to the following theorem.

\begin{theorem}
    [Accumulated entanglement-based bound of observable error]\label{main:eq:bound_for_lt} For a quantum circuit $U_0^r$ and an approximate unitary $\mathscr{U}^r_p$ with
 $\mathscr{U}_p=U_0(1+\mathscr{M})$, 
   the error of observable $O$ with  input state $\ket{\psi}$ satisfies $$ \epsilon_O \le  
        \sum_{k=0}^{r-1} v(O, \tilde{\psi}_{r_k}) v(\mathscr{M}, {\psi_k}_{O_k}) + v(O,  \tilde{\psi_{r_k}}_\mathscr{M})v(\mathscr{M},  \psi_k), $$ 
where $v(A, \chi)$ is the vector norm $v(A, \chi)=\norm{A\ket{\psi}}$ with   $v(A, \chi)\le\sqrt{\norm{A}^2_F+\Delta_{A}(\ket{\chi})}$, $\ket{{\tilde{\psi}_{r_k}}}=(U^k_0\mathscr{U}^{r-k}_p) \ket{\psi}$, and $\ket{\tilde{\psi_{r_k}}_\mathscr{M}}=U^k_0\mathscr{M}\mathscr{U}^{r-k}_p \ket{\psi}/ \norm{U^k_0\mathscr{M}\mathscr{U}^{r-k}_p \ket{\psi}}$,
$\ket{\psi_k}=\mathscr{U}^{r-k}_p\ket{\psi}$,  
$\ket{{\psi_k}_{O_k}}=\frac{O_k\ket{\psi_k}}{\norm{O_k\ket{\psi_k}}}$, $O_k={U_0^\dagger}^k O U^k_0$.
\end{theorem}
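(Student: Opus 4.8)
The plan is to bootstrap the two single-segment results already in hand---the telescoping inequality of Eq.~(\ref{main:longtime}) and the vector-norm bound of Eq.~(\ref{main:vectornorm})---and to finish by invoking, once per segment, the entanglement surrogate $v(A,\chi)\le\sqrt{\norm{A}^2_F+\Delta_A(\ket{\chi})}$ already used in the single-segment entanglement bound. The starting point is essentially free: Eq.~(\ref{main:longtime}) already writes $\epsilon_O$ as a sum over $k$ of genuinely single-step errors, in which the $k$-th term carries the Heisenberg-evolved observable $O_{k-1}=O((k-1)\delta t)$ in the role of the ``observable'' and $\ket{\psi_k}=\mathscr{U}^{r-k}_p\ket{\psi}$ in the role of the ``input state.'' Because Eq.~(\ref{main:vectornorm}) was derived for an arbitrary Hermitian observable on an arbitrary pure input, I can apply it verbatim to each summand under the substitution $O\mapsto O_{k-1}$, $\ket{\psi}\mapsto\ket{\psi_k}$, and $O(\delta t)\mapsto U_0^\dagger O_{k-1}U_0=O_k$.

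Second, I would carry out the per-segment bookkeeping. Written out for the $k$-th summand, Eq.~(\ref{main:vectornorm}) gives a product in which $O_k=U_0^\dagger O_{k-1}U_0$ acts on $\ket{\psi_k}$ and a product in which $\mathscr{M}$ acts on $\ket{\psi_k}$; the single decisive move is unitary invariance of the vector norm, which lets me peel the outer conjugation off $O_k={U_0^\dagger}^kOU_0^k$. Concretely $\norm{O_k\ket{\psi_k}}=\norm{O\,U_0^k\ket{\psi_k}}=\norm{O\ket{\tilde{\psi}_{r_k}}}$ with $\ket{\tilde{\psi}_{r_k}}=U_0^k\mathscr{U}^{r-k}_p\ket{\psi}$, while the vector $U_0^k\mathscr{M}\ket{\psi_k}$ normalises to $\ket{\tilde{\psi_{r_k}}_\mathscr{M}}$ and $O_k\ket{\psi_k}$ normalises to $\ket{{\psi_k}_{O_k}}$. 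Propagating these normalisations turns the two products into exactly $v(O,\tilde{\psi}_{r_k})\,v(\mathscr{M},{\psi_k}_{O_k})$ and $v(O,\tilde{\psi_{r_k}}_\mathscr{M})\,v(\mathscr{M},\psi_k)$, i.e.\ the $k$-th summand of the claim; the stated range $k=0,\dots,r-1$ versus $k=1,\dots,r$ in Eq.~(\ref{main:longtime}) is a harmless relabeling of the index.

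Finally, I would replace each of the four vector-norm factors per segment by its surrogate $v(A,\chi)\le\sqrt{\norm{A}^2_F+\Delta_A(\ket{\chi})}$ and sum over $k$, which delivers the stated inequality. The one conceptual point that needs care---and which I expect to be the crux of writing the proof cleanly rather than a computational obstacle---is that after the unitary cancellations it is always the \emph{bare} operators $O=\sum_jO_j$ and $\mathscr{M}=\sum_j(\mathscr{M})_j$ that act on the rotated states, never their fully scrambled images $O_k$. This is precisely what legitimises the last step: the surrogate relies on the \emph{local summand} structure of its operator argument, which is available for $O$ and $\mathscr{M}$ but would be destroyed by the Heisenberg conjugation hidden in $O_k$, with all of the scrambling instead absorbed into the rotation of the state $\ket{\chi}$ at which $\Delta_A$ is evaluated. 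No separate remainder tail is needed, since the full multiplicative error (including $\mathscr{M}_{Re}=\bigO(\delta t^{p+2})$) is carried exactly inside the factors $v(\mathscr{M},\cdot)$.
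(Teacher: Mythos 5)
Your proposal is correct and follows essentially the same route as the paper: the telescoping decomposition of Eq.~(\ref{main:longtime}), application of the single-segment vector-norm bound of Eq.~(\ref{main:vectornorm}) to each summand with $O\mapsto O_k$ and $\ket{\psi}\mapsto\ket{\psi_k}$, the unitary-invariance step $\norm{O_k\ket{\psi_k}}=\norm{O\ket{\tilde{\psi}_{r_k}}}$ that transfers the Heisenberg conjugation from the operator into the state, and finally the surrogate $v(A,\chi)\le\sqrt{\norm{A}^2_F+\Delta_A(\ket{\chi})}$. Your closing remark—that the peeling is what preserves the local summand structure needed for the entanglement surrogate while absorbing all scrambling into the state—is exactly the point the paper's argument relies on, and the index relabeling you flag is the same harmless off-by-one present in the paper's own derivation.
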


It is noteworthy that, although the upper bound for the long-time evolution of the observable error appears to be a straightforward extension of the short-time result (here we express the bound in terms of the vector norm, which can be modified to the entanglement entropy due to $v(A, \chi)\le\sqrt{\norm{A}^2_F+\Delta_{A}(\ket{\chi})}$), it exhibits properties distinct from those of short-time evolution. More importantly, it is fundamentally different from a naive accumulation of state errors over long times. In Theorem \ref{main:eq:bound_for_lt}, the two terms involving the vector norm of $O$ are, to a large extent, determined by the final state. This is because their effective states involve the full $r$-step ideal or approximate evolution (even though one of them includes an intermediate error operator $M$), and are thus nearly independent of the intermediate processes. This feature is absent for the vector norm of the error operator $M$ itself, which varies throughout the evolution like \cite{zhao2024entanglementacceleratesquantumsimulation}. We summarize this observation as follows:

\begin{observation}\
 The error term associated with the vector norm of $O=\sum_j O_j$ is mainly influenced by the final state. Although the entanglement is weak in intermediate stages, the final state with the entanglement entropies of subsystems $\text{supp}(O^\dagger_{i} O_{i^{\prime}})$ satisfy $S(\rho_{i,i^\prime})\ge \text{supp}(O^\dagger_i O_{i^{\prime}})-\bigO(\norm{O}^4_F/(\sum_i \norm{O}_i))$, indicates that for any segment $k$, 
 one has
$$v(O, \tilde{\psi}_{r_k})\approx v(O,  \tilde{\psi_{r_k}}_\mathscr{M})=\bigO(\norm{O}_F).$$
\end{observation}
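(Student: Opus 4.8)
The plan is to exploit the structural asymmetry, already visible in Theorem~\ref{main:eq:bound_for_lt}, between the effective states carried by the factors $v(O,\cdot)$ and those carried by $v(\mathscr{M},\cdot)$. Both states attached to $O$, namely $\ket{\tilde{\psi}_{r_k}}=U_0^k\mathscr{U}_p^{r-k}\ket{\psi}$ and $\ket{\tilde{\psi_{r_k}}_\mathscr{M}}=U_0^k\mathscr{M}\mathscr{U}_p^{r-k}\ket{\psi}/\norm{U_0^k\mathscr{M}\mathscr{U}_p^{r-k}\ket{\psi}}$, are generated by a \emph{full} $r$-step propagation (a mixture of ideal and Trotterized steps, with $\mathscr{M}$ inserted in the second case), whereas the states attached to $\mathscr{M}$ only see the partial evolution $\mathscr{U}_p^{r-k}$. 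First I would show that $\ket{\tilde{\psi}_{r_k}}$ is $\bigO(r\,\delta t^{p+1})$-close in trace distance to the genuine final state $\ket{\psi(t)}=U_0^r\ket{\psi}$, using $\norm{\mathscr{U}_p^{r-k}-U_0^{r-k}}=\bigO((r-k)\,\delta t^{p+1})$ from the standard product-formula error estimate. By the Fannes--Audenaert continuity bound, the reduced states $\rho_{i,i'}$ on each bounded support $\support(O_i^\dagger O_{i'})$ then inherit the entropies of the final state up to a correction $\bigO(r\,\delta t^{p+1}\log d_{\support(O_i^\dagger O_{i'})})$, which is negligible because $O$ is $k$-local and each such support has bounded size.

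Next I would insert the entropy hypothesis of the Observation directly into the definition of $\Delta_O$. Writing the deficit as $\delta_{i,i'}=\log d_{\support(O_i^\dagger O_{i'})}-S(\rho_{i,i'})$, which the hypothesis caps uniformly by some $\delta$, each summand of $\Delta_O$ carries $\sqrt{2\delta_{i,i'}}\le\sqrt{2\delta}$, so
\begin{equation}
\Delta_O(\ket{\chi})\le\sqrt{2\delta}\sum_{i,i'}\norm{O_i^\dagger O_{i'}}\le\sqrt{2\delta}\Big(\sum_i\norm{O_i}\Big)^2=\bigO(\norm{O}_F^2),
\end{equation}
where the last step uses precisely the threshold on $\delta$ stated in the Observation. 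Combined with $v(O,\chi)\le\sqrt{\norm{O}_F^2+\Delta_O(\ket{\chi})}$ from Theorem~\ref{main:eq:bound_for_lt}, this yields $v(O,\tilde{\psi}_{r_k})=\bigO(\norm{O}_F)$ for every $k$, and makes transparent why the $O$-terms are governed by the final state alone: each $\ket{\tilde{\psi}_{r_k}}$ collapses onto the same near-final entanglement profile irrespective of where the ideal/Trotter split $k$ occurs.

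The harder part is the $\mathscr{M}$-inserted state $\ket{\tilde{\psi_{r_k}}_\mathscr{M}}$, which need not be close to $\ket{\psi(t)}$ since the non-local operator $\mathscr{M}$ acts in the middle and the normalization can amplify small components. Here I would split on the position of $\mathscr{M}$ using the fact that the pre- and post-insertion propagations consist of $r-k$ and $k$ steps with $\max(r-k,k)\ge r/2$, so at least one of them is a long, scrambling evolution. When $k$ is large, the post-insertion propagator $U_0^k$ scrambles $\mathscr{M}\mathscr{U}_p^{r-k}\ket{\psi}$ into a state with near-maximal reduced entropy on each $\support(O_i^\dagger O_{i'})$; when $k$ is small, $\mathscr{M}$ acts on the already scrambled state $\mathscr{U}_p^{r-k}\ket{\psi}$, and the operator-induced-entanglement mechanism—arising from the mutual orthogonality of the Pauli components of $\mathscr{M}=\sum_j M_j\,\delta t^{p+1}+\mathscr{M}_{Re}$—keeps those entropies near maximal. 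In either regime I would again invoke Fannes--Audenaert and repeat the $\Delta_O$ estimate above to obtain $v(O,\tilde{\psi_{r_k}}_\mathscr{M})=\bigO(\norm{O}_F)$, which together with the previous paragraph gives the stated $v(O,\tilde{\psi}_{r_k})\approx v(O,\tilde{\psi_{r_k}}_\mathscr{M})=\bigO(\norm{O}_F)$.

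The main technical obstacle I anticipate is making the operator-induced-entanglement step quantitative and uniform in $k$: I would need to show that applying $\mathscr{M}$ and renormalizing perturbs the relevant reduced-subsystem entropies by at most $\bigO(\delta)$. This requires a lower bound on $\norm{\mathscr{M}\ket{\cdot}}$ to keep the normalization under control, together with an argument that the entanglement induced on each $\support(O_i^\dagger O_{i'})$ does not drop below the threshold fixed by $\delta$; handling the short-$k$ regime, where little post-insertion scrambling is available to repair a potentially low-entropy insertion, is where the bulk of the work lies.
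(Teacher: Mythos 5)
Your core mechanism is the same one the paper relies on: the effective states $\ket{\tilde{\psi}_{r_k}}$ and $\ket{\tilde{\psi_{r_k}}_\mathscr{M}}$ each involve a full $r$-step propagation regardless of where the ideal/Trotter split sits, so they are governed by the final state, and feeding the entropy hypothesis into $\Delta_O$ via $v(O,\chi)\le\sqrt{\norm{O}^2_F+\Delta_{O}(\ket{\chi})}$ yields $v(O,\cdot)=\bigO(\norm{O}_F)$. Be aware, however, that the paper does not actually prove this statement: it is labelled an Observation and is justified only by this heuristic together with the appendix numerics showing the distributions of $\norm{O\ket{\tilde{\psi}_{r_k}}}$ and $\norm{O\ket{\tilde{\psi}_{r_k,\mathscr{M}}}}$ concentrating near $\norm{O}_F$. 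Your proposal is therefore more ambitious than the source; in particular, the Fannes--Audenaert transfer of reduced entropies from $U_0^r\ket{\psi}$ to $\ket{\tilde{\psi}_{r_k}}$ using $\norm{\mathscr{U}_p^{r-k}-U_0^{r-k}}=\bigO((r-k)\,\delta t^{p+1})$ is a genuine addition that would upgrade the first half of the claim from an observation to a theorem.

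Two caveats. First, your $\Delta_O$ arithmetic does not close with the threshold as stated: with deficit $\delta=\bigO(\norm{O}_F^4/\sum_i\norm{O_i})$ one gets $\sqrt{2\delta}\,\big(\sum_i\norm{O_i}\big)^2=\bigO\big(\norm{O}_F^2(\sum_i\norm{O_i})^{3/2}\big)$ rather than $\bigO(\norm{O}_F^2)$; you would need $\delta=\bigO\big(\norm{O}_F^4/(\sum_i\norm{O_i})^4\big)$ for the last step to hold, so either the Observation's threshold contains a typo or your claim that the stated threshold suffices is wrong --- this should be resolved rather than inherited. Second, the obstacle you flag for $\ket{\tilde{\psi_{r_k}}_\mathscr{M}}$ is real and is precisely where the paper retreats to numerics: the hypothesis constrains the entropies of the final state, not of the $\mathscr{M}$-inserted state, and neither the paper nor your sketch controls the reduced entropies of $U_0^k\mathscr{M}\mathscr{U}_p^{r-k}\ket{\psi}/\norm{U_0^k\mathscr{M}\mathscr{U}_p^{r-k}\ket{\psi}}$ after renormalization, nor lower-bounds $\norm{\mathscr{M}\ket{\cdot}}$ to keep that renormalization tame. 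The case split on $\max(r-k,k)\ge r/2$ combined with ``operator-induced entanglement'' is a plausible heuristic but not a proof; as written, the second half of the claim remains open in both your proposal and the paper itself.
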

If the final state with sufficient entanglement, for any segment $k$, one has $v(O, \tilde{\psi}_{r_k})\approx v(O,  \tilde{\psi_{r_k}}_\mathscr{M}) \approx \norm{O}_F.$
We substantiate this observation with extensive numerical results (see subsection \ref{AP:numerical} ).

\section{More Numerical experiments}

Here we numerically test the proposed error bounds with a one-dimensional quantum Ising spin model with mixed field (QIMF), governed by the Hamiltonian:
\begin{equation}
H=h_x\sum_{j=1}^{N}X_j+h_y\sum_{j=1}^NY_j+J\sum_{j=1}^{N-1}X_jX_{j+1},
\label{main:Hamil}
\end{equation}
where $X_j$ and $Y_j$ denote Pauli operators on site $j$, $h_x$ and $h_y$ represent transverse and longitudinal field strengths, respectively, and $J$ is the nearest-neighbor spin coupling. The evolution operator $\mathscr U_0=e^{-iHt}$ can be approximated using a first-order product formula (PF1) $\mathscr U_1=e^{-iA\delta t}e^{-iB\delta t}$ or a second-order product formula (PF2) $\mathscr U_2=e^{-iA\delta t/2}e^{-iB\delta t}e^{-iA\delta t/2}$, where $A=h_x\sum_{j=1}^N X_j+J\sum_{j=1}^{N-1}X_jX_{j+1}$ and $B=h_y\sum_{j=1}^NY_j$.
The multiplicative Trotter error $\mathscr{M}$ is calculated according to $\mathscr{M}=U_0^\dagger \mathscr U_1-I$ and $\mathscr{M}=U_0^\dagger \mathscr U_2-I$ for PF1 and PF2 respectively. We present the concrete leading error term of $\mathscr{M}$ in the Supplementary Materials.

Figure.\ \ref{main:fig:scramble_H} illustrates the relationship between the one-step Trotter error defined as $|\langle\psi(t)|O(\delta t)-\mathscr{U}^\dagger_p O \mathscr{U}_p|\psi(t)\rangle|$, the scrambling bound, and the worst case (LB bound) $\norm{O(\delta t)-\mathscr{U}^\dagger_p O \mathscr{U}_p}=\norm{[O(\delta t), \mathscr{M}]}$. Here, the observable corresponds to the normalized Hamiltonian $O=H/{\|H\|}$. The initial state is chosen as a separable state $|0\rangle^{\otimes N}$ so that its entanglement entropy is zero. The duration of each trotter step is fixed as $t/r=0.1$. The Trotter error is rigorously bounded by $2\|O\|_F\|M\|_F$ after the system evolves into an entangled regime. For typical parameter choices $(h_x,h_y,J)=(0.809,0.9045,1)$, the entanglement entropy increases rapidly, and the scrambling error converges to the average bound $\sqrt{\text{Tr}(|[O(\delta t),\mathscr M]|^2)/d}$, where $d$ is the Hilbert space dimension (Figure.\ \ref{main:fig:scramble_H}(a)). 
Figure.\ \ref{main:fig:scramble_H}(b) demonstrates the Trotter error for the parameters $(h_x,h_y,J)=(0,0.9045,0.4)$, where the entanglement entropy remains far below its maximum. In this regime, the scrambling bound exceeds the average case value, though with a modest reduction in error over time. These results confirm the reliability of the scrambling bound across both high- and low-entanglement regimes and reveal its tightness in low-entanglement cases.

\begin{figure}
    \centering
    \includegraphics[width=0.93\textwidth]{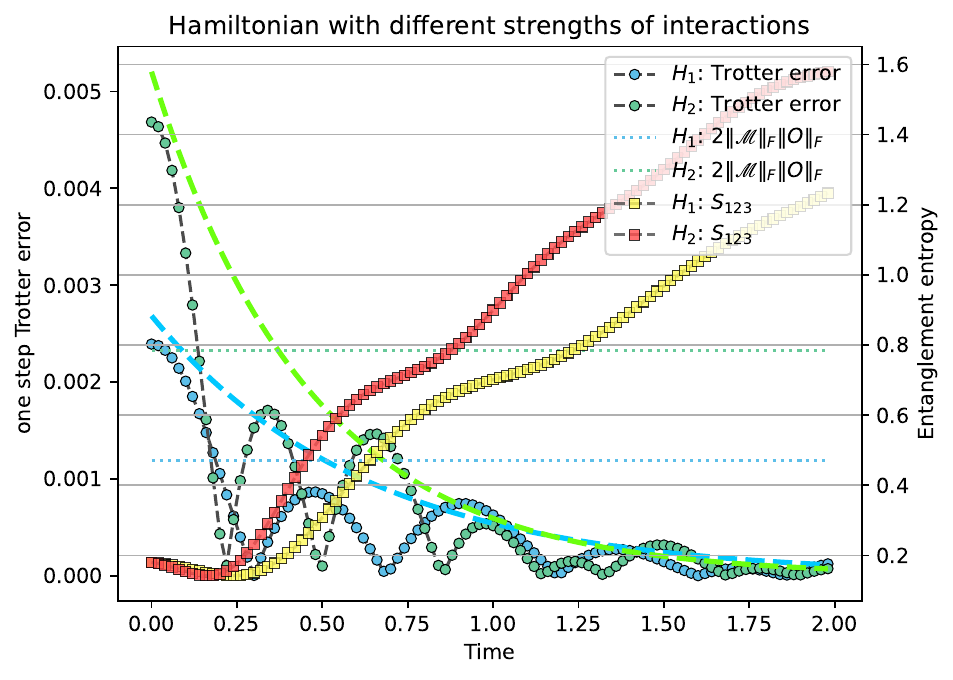}
    \caption{
One-step Trotter error for the two-body observable O
 under $H_1$
 (weakly coupling) and $H_2$
 (strongly coupling, $H_2=1.4 H_1$
) using PF1. The error is initially larger for $H_2$ with the worst input state due to the stronger commutator, but decays faster owing to rapid entanglement growth ($S_{123}$).
    }
    \label{fig:interaction_of_different_strengths}
\end{figure}

From a mathematical perspective, the Trotter error is bounded as $\epsilon^2_O\le \bra{\psi} |[O(\delta t),\mathscr{M}]|^2\ket{\psi} $, suggesting a fundamental trade-off between the magnitude of operator commutators involved in scrambling and entanglement growth, which together govern the increase and suppression of Trotter errors.
To numerically illustrate this phenomenon, we consider the two-body observable 
$O=\sum_{i=1}^NX_iX_{i+1}/{\|\sum_{i=1}^NX_iX_{i+1}\|_\infty}$
, and compare its Trotter error growth under two Hamiltonians: $H_1$ and $H_2$
 with parameters $(h_x,h_y,J)=(0.8,0.9,1)$ and $(h_x,h_y,J)=(1.12,1.26,1.4)$ receptively, such that $H_2=1.4H_1$.
For short times, the leading Trotter error is governed by $[O(\delta t),M]$
, where $M$ is the leading multiplicative error operator; notably, $1.4^2[O(\delta t), M_1]\approx [O(\delta t), M_2]$ 
 for $H_1$ and $H_2$. With the worst-case initial state (spectral norm case), as shown in Figure~\ref{fig:interaction_of_different_strengths}, the initial Trotter error for $H_2$ (strong coupling one) exceeds that for $H_1$ (weak coupling one)  by a factor of $1.9585\approx1.4^2$, in agreement with the increased magnitude in the commutator. However, at later times, the Trotter error for $H_2$ decays more rapidly than for $H_1$
, which we attribute to faster entanglement growth. This crossover behavior is well captured by fitting a damped oscillatory envelope to the error dynamics (see dashed line with light color).

\section{Discussions and conclusions}
Our results provide a rigorous theoretical guarantee for observable errors in quantum simulation.
We elucidate the fundamental relationship between Trotterization, operator scrambling, and entanglement.
We integrate scrambling and entanglement into a novel error paradigm for Trotterization. 
Even in the worst-case scenario of input states \cite{zhao2024entanglementacceleratesquantumsimulation}, we show that the observable-driven speedup of quantum simulation remains attainable.
Additionally, we derive a tighter upper bound for the observable under the 1-design case, indicating that the error scales with $\norm{[O(\delta t),\mathscr{M}]}_F$
, consistent with the results in Frobenius lightcone \cite{tran2020hierarchylightcone}. This can also be achieved using the scrambling bound in the broad entanglement regime.
It is also intriguing to study the entanglement rate in Trotterization \cite{Bravyi_2007,_Anthony_Chen_2023}, which constrains the minimum evolution time required to achieve a given level of entanglement.
 While our analysis primarily focuses on Trotter errors of observables, it is also applicable to coherent noise in general quantum circuits and to analog quantum simulations subject to perturbation noise \cite{buluta2009quantum,georgescu2014quantum_anolog,cai_et_al:LIPIcs.TQC.2024.2,PhysRevX.15.021017}.

Reducing Trotter error can essentially decrease the gate complexity and communication complexity of quantum simulations \cite{chailds2021TrottertheoryPhysRevX.11.011020,childs2018toward,childs2019nearly,PhysRevLett.133.010603Zhaohongzhen,zhao2022hamiltonian,DQSPhysRevResearch.5.L022003,feng2024distributedquantumsimulation}.
Consequently, our work not only reveals the fundamental interconnection in Trotterization and many-body phenomena but also provides enhanced resource efficiency for both current noisy quantum processors and future fault-tolerant quantum computers.

\section{acknowledgments}
The authors thank  Wenjun Yu, Zhongxia Shang, and Jue Xu for their helpful discussions.
Q.Z. acknowledges funding from Innovation Program for Quantum Science and Technology via Project 2024ZD0301900, National Natural Science Foundation of China (NSFC) via Project No. 12347104 and No. 12305030, Guangdong Basic and Applied Basic Research Foundation via Project 2023A1515012185, Hong Kong Research Grant Council (RGC) via No. 27300823, N\_HKU718/23, and R6010-23, Guangdong Provincial Quantum Science Strategic Initiative No. GDZX2303007, HKU Seed Fund for Basic Research for New Staff via Project 2201100596. \\

\section{Appendix}

\begin{figure*}[tb]
    \centering
    \subfloat[]{\includegraphics[width=0.45\textwidth]{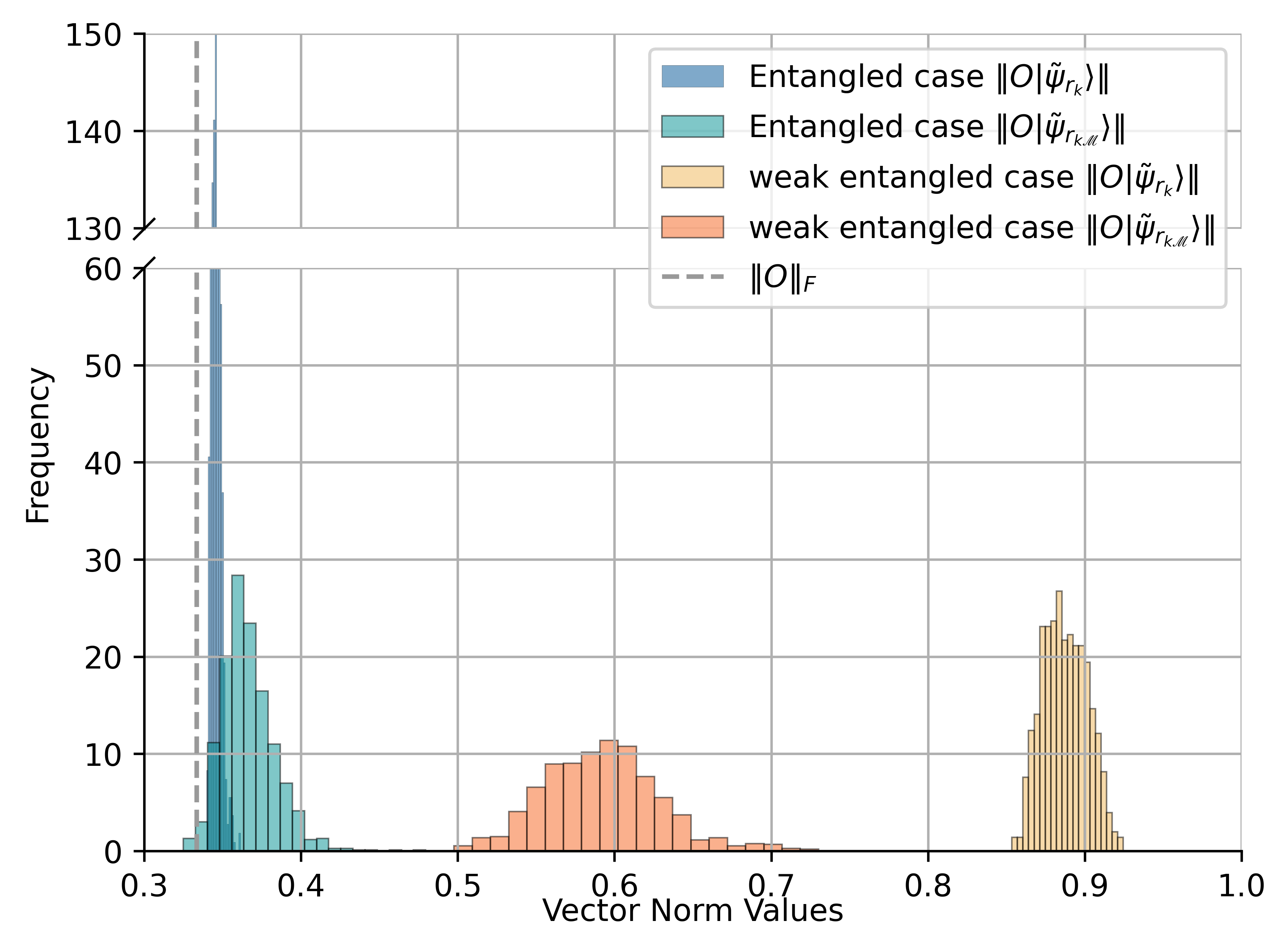}
    }
    \subfloat[]{
    \includegraphics[width=0.44\textwidth]{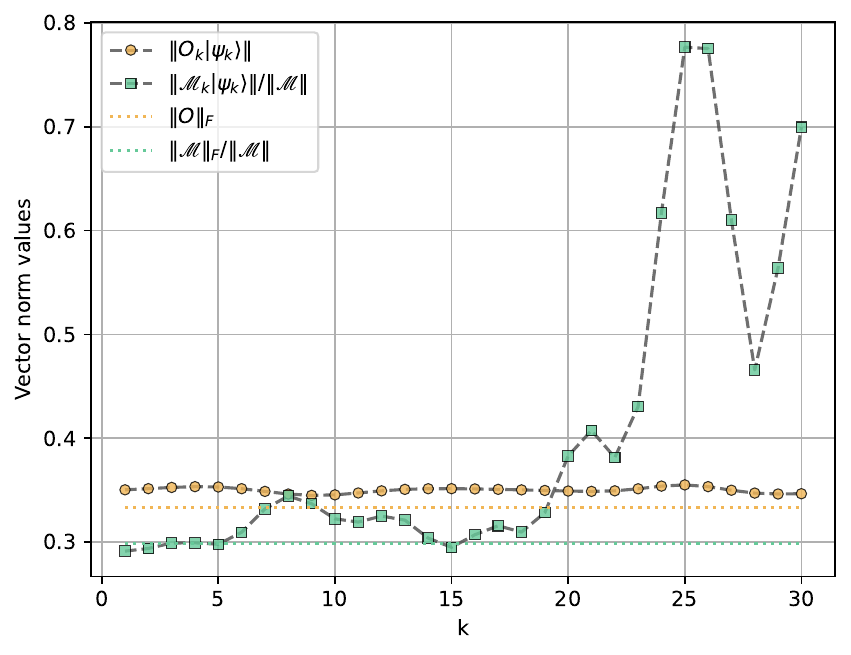}
    }
    \caption{(a)The distribution of the vector-norms $v(O, \tilde{\psi}_{r_k})=\norm{O\ket{\tilde\psi_{r_k}}}$ and $v(O,  \tilde{\psi_{r_k}}_\mathscr{M})=\norm{O\ket{\tilde\psi_{r_k,\mathscr M}}}$ defined in Eq.\ \ref{main:eq:bound_for_lt} with $k\in[0,r-1]$ and $r=1000$, $\delta t=0.1$. This figure shows cases of two different initial states. In the entangled case, the initial state is $\ket\psi=\ket0^{\otimes N}$ and the final state $\ket{\tilde\psi_{r_k}}$ is fully entangled. In the weak entangled case, the initial state is $\ket\psi=\ket +^{\otimes N}$ and the entanglement of the final state $\ket{\tilde\psi_{r_k}}$ is weak. (b)The value of vector norms of $\|O_k\ket{\psi_k}\|$ and $\|\mathscr M_k\ket{\psi_k}\|/\|\mathscr M\|$ versus $k$ for the entangled case $\ket\psi=\ket0^{\otimes N}$. Parameters are chosen as $r=30$ and $\delta t=0.1$. The value of $\|O_k\ket{\psi_k}\|$ keeps close to $\|O\|_F$ while the value of $\|\mathscr M_k\ket{\psi_k}\|/\|\mathscr M\|$ deviates from $\|\mathscr M\|_F/\|\mathscr M\|$ as $k$ get larger.}
    \label{fig:distribution}
\end{figure*}

\subsection{Other numerical results} \label{AP:numerical}

When bounding the long-time evolution error with Eq.\ \ref{main:eq:bound_for_lt}, one can estimate 
$v(O, \tilde{\psi}_{r_k})=\norm{O\ket{\tilde\psi_{r_k}}}$ and $v(O,  \tilde{\psi_{r_k}}_\mathscr{M})=\norm{O\ket{\tilde\psi_{r_k,\mathscr M}}}$
with normalized Frobenius bound $\|O\|_F$ as long as the final state $U_0^r\ket\psi$ is sufficiently entangled. To validate this point, Figure.\ \ref{fig:distribution}(a) shows the distribution of the values of $\norm{O\ket{\tilde\psi_{r_k}}}$ and $\norm{O\ket{\tilde\psi_{r_k,\mathscr M}}}$ for different $k$. In entangled cases, these values are shown to be close to $\|O\|_F$. Figure.\ \ref{fig:distribution}(b) shows that the value of $\|\mathscr M_k\ket{\psi_k}\|/\|\mathscr M\|$ derivates from $\|\mathscr M\|_F/\|\mathscr M\|$ if $k$ is large enough so that $\ket {\psi_k}=\mathscr U_p^{r-k}\ket\psi$ is no longer entangled, while $\norm{O\ket{\tilde\psi_{r_k}}}\approx\|O_k\ket{\psi_k}\|$ remains close to $\|O\|_F$

\begin{figure*}[tb]
    \centering
    \includegraphics[width=0.5\textwidth]{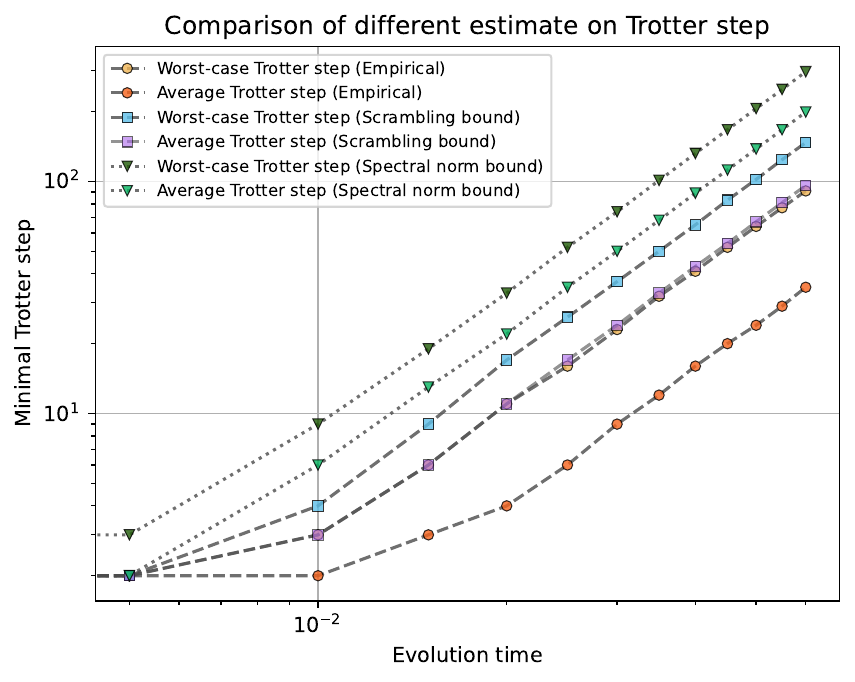}
    \caption{Number of Trotter steps needed to achieve precision $\varepsilon=10^{-4}$ with first-order product formula for operators in a set $\{O\}$. Observables in $\{O\}$ include Hamiltonian $\dfrac H{\|H\|_\infty}$, $\dfrac{\sum_iX_i}{\|\sum_iX_i\|_\infty}$, $\dfrac{\sum_iZ_i}{\|\sum_iZ_i\|_\infty}$, $\dfrac{\sum_iX_iX_{i+1}}{\|\sum_iX_iX_{i+1}\|_\infty}$, $\dfrac{\sum_iZ_iZ_{i+1}}{\|\sum_iZ_iZ_{i+1}\|_\infty}$, 10 combination of random 2-local Pauli observables,10 combination of random 3-local Pauli observables and 10 combination of random 4-local Pauli observables. We compare our Trotter step estimation and that of Ref.\cite{LiPhysRevA.110.062614} for observables in set $\{O\}$.
}
    \label{fig:minimal_trotter_step}
\end{figure*}

Considering a set of observables of interest $\{O\}=\{O_{\mathcal{J}_1}, O_{\mathcal{J}_2}, ..., O_{\mathcal{J}_M} \}$ and input state $\ket{\psi}$ with Hamiltonian $H$, our task is to simulate these observables to the evolution $e^{-iHt}$ simultaneously. We define the worst simulation error in this observable set $\epsilon_{\{O\}}= \text{max}\epsilon_{O_{\mathcal{J}}}$. 

We give the minimal number of Trotter steps that can limit the worst simulation error to a target precision $\varepsilon=10^{-4}$ in Figure.\ \ref{fig:minimal_trotter_step}. The set $\{O\}$ comprises combinations of Pauli operators, including Hamiltonian, magnetizations, low-order correlation functions, and some combinations of random local Pauli operators (these observables are normalized). In our example of QIMF, each Trotter step of the first-order Trotter-formula method can be implemented with 2 exponentials. Therefore, the Trotter step number indicates the gate complexity of the simulation.
We also compare the theoretical estimate given by our scrambling-based bound and former work Ref.\cite{LiPhysRevA.110.062614}. We show that our results yield an estimate of the minimum number of Trotter steps that is approximately half that of previous results.

\begin{widetext}

\subsection{Concrete upper bound for PF1 and PF2}

\begin{lemma}\label{main:SMLemma:AB}
(Operator scrambling-based bound for PF1)
For a two-term Hamiltonian $H=A+B$, consider the first-order product formula $\mathscr{U}_1(\delta t)=e^{-iA\delta t}e^{-iB\delta t}$ with initial state $\ket{\psi}$. Let $M=-\frac{1}{2}[A,B]=\sum_j M_j$. Then the Trotter error for the observable $O$ is upper bounded as
     \begin{equation}
           \epsilon_O\le  \frac{1}{2}\norm{[O(\delta t), [A,B]] \ket{\psi}}\delta t^{2} + \mathscr{O}_{Re}, 
        \end{equation}
        where $O(\delta t)=e^{iH\delta t}Oe^{-iH\delta t}$ and
   $$\mathscr{O}_{Re}=\norm{O}(1+\norm{[A,B]}^2 \delta t^2) \left( \frac{\delta t^3}{6}\|[A,[A,B]]\|+ \frac{\delta t^3}{3}\|[B,[A,B]]\| \right) + \frac{\delta t^4}{2} \norm{O}\norm{[A,B]}^2.$$   
   \end{lemma}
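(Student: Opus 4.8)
The plan is to specialize the scrambling bound of Eq.~(\ref{main:eq:scambling_bound}) to the two-term first-order formula and then render the multiplicative error operator completely explicit. First I would apply Eq.~(\ref{main:eq:scambling_bound}), which gives $\epsilon_O\le\|[O(\delta t),\mathscr{M}]\ket{\psi}\|$ with $\mathscr{M}=U_0^\dagger\mathscr{U}_1-I$ and $U_0=e^{-iH\delta t}$; the entire problem then collapses to controlling this single commutator norm for $\mathscr{U}_1=e^{-iA\delta t}e^{-iB\delta t}$.

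The next step is an exact integral representation of $\mathscr{M}$. Applying variation of parameters to $\mathscr{U}_1(\tau)=e^{-iA\tau}e^{-iB\tau}$ together with $\tfrac{\d}{\d s}(e^{-iAs}Be^{iAs})=-ie^{-iAs}[A,B]e^{iAs}$ yields
\begin{equation}
\mathscr{M}=-\int_0^{\delta t} e^{iH\tau}\Big(\int_0^\tau e^{-iAs}[A,B]e^{iAs}\,\d s\Big)\mathscr{U}_1(\tau)\,\d\tau .
\end{equation}
Replacing the three rotations $e^{iH\tau}$, the inner integral, and $\mathscr{U}_1(\tau)$ by the identity isolates the leading term $-\int_0^{\delta t}\tau[A,B]\,\d\tau=-\tfrac12[A,B]\delta t^2$, so $M=-\tfrac12[A,B]$ as claimed, leaving a remainder $\mathscr{M}_{Re}=\mathcal{O}(\delta t^3)$. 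The triangle inequality then splits $\|[O(\delta t),\mathscr{M}]\ket{\psi}\|\le\tfrac12\|[O(\delta t),[A,B]]\ket{\psi}\|\delta t^2+\|[O(\delta t),\mathscr{M}_{Re}]\ket{\psi}\|$, and since conjugation by the unitary $U_0$ preserves the spectral norm, $\|[O(\delta t),\mathscr{M}_{Re}]\ket{\psi}\|\le 2\|O(\delta t)\|\,\|\mathscr{M}_{Re}\|=2\|O\|\,\|\mathscr{M}_{Re}\|$. This already produces the leading term of the lemma; it remains to show $2\|O\|\,\|\mathscr{M}_{Re}\|\le\mathscr{O}_{Re}$.

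To bound $\|\mathscr{M}_{Re}\|$ I would set $G(\tau)=\int_0^\tau e^{-iAs}[A,B]e^{iAs}\,\d s$ and decompose the integrand of $\mathscr{M}_{Re}$ into the inner-rotation piece $e^{iH\tau}(G(\tau)-\tau[A,B])e^{-iH\tau}$, the outer-rotation piece $\tau\big(e^{iH\tau}[A,B]e^{-iH\tau}-[A,B]\big)$, and a cross term carrying the residual multiplicative error of $\mathscr{U}_1(\tau)$. Using $\|G(\tau)-\tau[A,B]\|\le\tfrac{\tau^2}{2}\|[A,[A,B]]\|$, the identity $[H,[A,B]]=[A,[A,B]]+[B,[A,B]]$ for the outer rotation, and $\|\mathscr{U}_1(\tau)-e^{-iH\tau}\|\le\tfrac{\tau^2}{2}\|[A,B]\|$ for the cross term, integration in $\tau$ assembles the nested-commutator terms $\tfrac{\delta t^3}{6}\|[A,[A,B]]\|+\tfrac{\delta t^3}{3}\|[B,[A,B]]\|$ together with the residual $\mathcal{O}(\delta t^4)$ corrections, which I would then collect into the multiplicative prefactor $(1+\|[A,B]\|^2\delta t^2)$ and the additive $\tfrac{\delta t^4}{2}\|O\|\|[A,B]\|^2$.

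The main obstacle is obtaining the coefficient $\tfrac16$ of $\|[A,[A,B]]\|$ rather than the loose $\tfrac12$ a direct estimate would give. The two $[A,[A,B]]$ contributions---one from the inner rotation $e^{-iAs}[A,B]e^{iAs}$ inside $G(\tau)$, the other from the outer rotation $e^{iH\tau}[A,B]e^{-iH\tau}$---carry opposite signs and partially cancel at order $\tau^2$, a cancellation that a naive term-by-term triangle inequality destroys. To retain it I would combine the two nested-commutator integrands \emph{symbolically} at order $\tau^2$ before taking any norm, and bound only the genuinely higher-order ($\mathcal{O}(\tau^3)$) remainders term-by-term. The same scheme, with $M=-\tfrac12[A,B]$ replaced by the symmetric double-commutator leading error of the second-order formula, then transfers directly to the companion PF2 bound.
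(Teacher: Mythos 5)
Your overall architecture matches the paper's: invoke the scrambling bound $\epsilon_O\le\norm{[O(\delta t),\mathscr{M}]\ket{\psi}}$, represent $\mathscr{M}$ exactly by variation of parameters, peel off the leading term $-\tfrac12[A,B]\delta t^2$, and control the remainder via the triangle inequality and $\norm{[O(\delta t),\mathscr{M}_{Re}]\ket{\psi}}\le 2\norm{O}\norm{\mathscr{M}_{Re}}$. Your integral representation of $\mathscr{M}$ is correct and equivalent to the paper's, and your treatment of the cross term and of the $\norm{[A,B]}^2$ contributions is fine.

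The gap is in the last step, exactly where you flag the difficulty. Your decomposition splits $e^{iH\tau}G(\tau)\mathscr{U}_1(\tau)-\tau[A,B]$ around the conjugation $e^{iH\tau}(\cdot)e^{-iH\tau}$, which produces an inner piece worth $\tfrac{\delta t^3}{6}\norm{[A,[A,B]]}$ and an outer piece worth $\tfrac{\delta t^3}{3}\norm{[H,[A,B]]}$, totalling $\tfrac{\delta t^3}{2}\norm{[A,[A,B]]}+\tfrac{\delta t^3}{3}\norm{[B,[A,B]]}$. The rescue you propose --- combining the two integrands symbolically at order $\tau^2$ to exploit the sign cancellation --- only holds for the leading Taylor coefficients: the two contributions are conjugated by \emph{different} unitaries ($e^{-iAu}$ inside $G$ versus $e^{-iHu}$ outside), so the exact integrands do not cancel, and making the cancellation rigorous forces you to Taylor-expand one step further, introducing $\bigO(\delta t^4)$ remainders containing triple-nested commutators such as $[A,[A,[A,B]]]$ and $[H,[H,[A,B]]]$ that do not appear in, and are not controlled by, the stated $\mathscr{O}_{Re}$. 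The paper avoids this entirely: it never conjugates by $e^{\pm iH\tau}$, but instead writes $e^{iH\tau}G(\tau)\mathscr{U}_1(\tau)=(I+\mathscr{M}(\tau))\,\mathscr{U}_1(\tau)^{\dagger}G(\tau)\mathscr{U}_1(\tau)$ and Duhamel-expands the conjugations $e^{iB\tau}e^{iA(\tau-s)}[A,B]e^{-iA(\tau-s)}e^{-iB\tau}$ that match the product-formula structure. There the $[A,[A,B]]$ correction carries the measure $(\tau-s)\,\d s\,\d\tau$ over the simplex, whose integral is exactly $\delta t^3/6$, and the $[B,[A,B]]$ correction carries $\tau\,\d s\,\d\tau$, giving $\delta t^3/3$ --- the coefficients fall out of the iterated integral with no cancellation and no higher-order debris. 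To close your proof you should replace the inner/outer splitting around $e^{\pm iH\tau}$ by this factorization.
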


\begin{lemma}\label{main:SMLemma:AB}
(Operator scrambling-based bound for PF2)
For a two-term Hamiltonian $H=A+B$, consider the first-order product formula $\mathscr{U}_1(\delta t)=e^{-iA\delta t}e^{-iB\delta t}$ with initial state $\ket{\psi}$. Let $M=\frac{1}{12}[-iB,[-iB,-iA]]+\frac{1}{24}[iA,[iA,iB]]=\sum_j M_j$. Then the Trotter error for the observable $O$ is upper bounded as
     \begin{equation}
           \epsilon_O\le  \frac{1}{12}\norm{[O(\delta t), [B[B,A]]] \ket{\psi}}\delta t^{3} +\frac{1}{24}\norm{[O(\delta t), [A[A,B]]] \ket{\psi}}\delta t^{3} + \mathscr{O}_{Re}, 
        \end{equation}
        where $O(\delta t)=e^{iH\delta t}Oe^{-iH\delta t}$ and
  $\mathscr{O}_{Re}= 2\norm{O} (\zeta_{\re,1}+\zeta_{\re,2}+\zeta^*_{\re,(2,1)}+\zeta^*_{\re,(2,2)})
  $ with
  \begin{equation}      
  \begin{split}
     & \zeta_{\re,1}=\frac{{\delta t}^4}{32}\|\left[A,\left[B,\left[B,A\right]\right]\right]\|+  \frac{{\delta t}^4}{12} \|\left[B,\left[B,\left[B,A\right]\right]\right]\|, \\
     &\zeta_{\re,2}= \frac{{\delta t}^4}{32}\|\left[B,\left[A,\left[A,B\right]\right]\right]\|+  \frac{{\delta t}^4}{48}\|\left[A,\left[A,\left[A,B\right]\right]\right]\|,\\
     &\zeta^*_{\re,(2,1)}= \big( \frac{{\delta t}^6}{144} \| \left[B,\left[B,A\right]\right]\|+ \frac{{\delta t}^6}{288} \|\left[A,\left[A,B\right]\right]\|+\frac{{\delta t}^7}{384}\|\left[A,\left[B,\left[B,A\right]\right]\right]\|+  \frac{{\delta t}^7}{144} \|\left[B,\left[B,\left[B,A\right]\right]\right]\|\\
 &\quad + \frac{{\delta t}^7}{384}\|\left[B,\left[A,\left[A,B\right]\right]\right]\|+  \frac{{\delta t}^7}{576}\|\left[A,\left[A,\left[A,B\right]\right]\right]\| \big) \norm{\left[B,\left[B,A\right]\right] }\\
 &\zeta^*_{\re,(2,2)}=\big( \frac{{\delta t}^6}{288} \| \left[B,\left[B,A\right]\right]\|+ \frac{{\delta t}^6}{576} \|\left[A,\left[A,B\right]\right]\|+\frac{{\delta t}^7}{768}\|\left[A,\left[B,\left[B,A\right]\right]\right]\|+  \frac{{\delta t}^7}{288} \|\left[B,\left[B,\left[B,A\right]\right]\right]\|\\
 &\quad + \frac{{\delta t}^7}{768}\|\left[B,\left[A,\left[A,B\right]\right]\right]\|+  \frac{{\delta t}^7}{1152}\|\left[A,\left[A,\left[A,B\right]\right]\right]\| \big) \norm{\left[A,\left[A,B\right]\right]}.
  \end{split}
\end{equation}
   \end{lemma}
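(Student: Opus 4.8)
\emph{Proof proposal.} The plan is to build directly on the scrambling-based bound of Eq.~(\ref{main:eq:scambling_bound}), which gives $\epsilon_O\le\norm{[O(\delta t),\mathscr{M}]\ket{\psi}}$ once the PF2 multiplicative error $\mathscr{M}$, defined through $\mathscr{U}_2=U_0(I+\mathscr{M})$, is known; the whole task then reduces to isolating the leading $\delta t^3$ part of $\mathscr{M}$ and controlling the tail $\mathscr{M}_{Re}$ with explicit constants. (Note that, despite the wording, the relevant formula here is the symmetric second-order one $\mathscr{U}_2(\delta t)=e^{-iA\delta t/2}e^{-iB\delta t}e^{-iA\delta t/2}$, which is consistent with the stated $M$.)

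First, I would compute $\mathscr{M}$ for the symmetric product formula. Writing $X=-iA\delta t$, $Y=-iB\delta t$ and applying the symmetric Baker--Campbell--Hausdorff expansion to $e^{X/2}e^{Y}e^{X/2}$, the exponent equals $X+Y+\tfrac{1}{12}[Y,[Y,X]]-\tfrac{1}{24}[X,[X,Y]]+\bigO(\delta t^5)$, whose cubic part is exactly $M\delta t^3$ with $M=\tfrac{i}{12}[B,[B,A]]-\tfrac{i}{24}[A,[A,B]]$ as stated. Factoring out $U_0=e^{X+Y}$ and expanding $U_0^\dagger\mathscr{U}_2=e^{Z}$ once more via BCH yields $Z=M\delta t^3+\bigO(\delta t^4)$, so that $\mathscr{M}=e^{Z}-I=M\delta t^3+\mathscr{M}_{Re}$ with $\mathscr{M}_{Re}=\bigO(\delta t^4)$.

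Second, a single use of the triangle inequality splits the bound,
\begin{equation}
\epsilon_O\le\norm{[O(\delta t),M]\ket{\psi}}\delta t^3+\norm{[O(\delta t),\mathscr{M}_{Re}]\ket{\psi}}.
\end{equation}
Decomposing $M$ into its two nested-commutator pieces and applying the triangle inequality again (using $\abs{\pm i}=1$) produces the two explicit $\delta t^3$ terms with coefficients $\tfrac{1}{12}$ and $\tfrac{1}{24}$. For the remainder I would use $\norm{[O(\delta t),\mathscr{M}_{Re}]\ket{\psi}}\le\norm{[O(\delta t),\mathscr{M}_{Re}]}\le 2\norm{O}\norm{\mathscr{M}_{Re}}$, which already explains the prefactor $2\norm{O}$ appearing in $\mathscr{O}_{Re}$; what remains is to bound $\norm{\mathscr{M}_{Re}}$ by $\zeta_{\re,1}+\zeta_{\re,2}+\zeta^*_{\re,(2,1)}+\zeta^*_{\re,(2,2)}$.

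Third---and this is where the real work lies---I would obtain $\norm{\mathscr{M}_{Re}}$ not from naive BCH truncation (which does not reproduce the stated rational coefficients) but from the exact integral, variation-of-parameters representation of $\mathscr{U}_2-U_0$ in the style of the Trotter-error theory of Ref.~\cite{chailds2021TrottertheoryPhysRevX.11.011020}. Taylor-expanding the integrand to third order and bounding the integral remainders term by term yields the $\delta t^4$ single- and triple-commutator contributions $\zeta_{\re,1},\zeta_{\re,2}$; the higher cross terms $\zeta^*_{\re,(2,1)},\zeta^*_{\re,(2,2)}$, which carry a multiplicative factor $\norm{[B,[B,A]]}$ or $\norm{[A,[A,B]]}$, arise when the already-identified cubic error re-enters the remainder of the Duhamel expansion, giving the $\delta t^6$ and $\delta t^7$ pieces. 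The main obstacle is the exact bookkeeping: tracking precisely which nested commutators appear at each order, assembling their rational coefficients ($\tfrac{1}{32},\tfrac{1}{48},\tfrac{1}{144},\dots$), and closing the self-referential estimate for the $\zeta^*$ terms so that the tail is controlled with genuinely explicit constants rather than an $\bigO(\delta t^4)$ symbol.
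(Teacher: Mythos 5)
Your proposal follows essentially the same route as the paper: apply the scrambling-based bound $\epsilon_O\le\norm{[O(\delta t),\mathscr{M}]\ket{\psi}}$, extract the leading $\delta t^{3}$ nested-commutator term of $\mathscr{M}$ from the exact Duhamel/variation-of-parameters integral representation of $\mathscr{U}_2-U_0$ (not from naive BCH truncation), bound the conjugation remainders $R_1,R_2$ term by term to get $\zeta_{\re,1},\zeta_{\re,2}$, and close the self-referential estimate in which $\norm{\mathscr{M}(\tau_1)}$ re-enters the integrand to produce the $\zeta^{*}$ pieces proportional to $\norm{[B,[B,A]]}$ and $\norm{[A,[A,B]]}$. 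You correctly identify the statement's typo (the formula is the symmetric PF2, consistent with the stated $M$) and the $2\norm{O}\norm{\mathscr{M}_{Re}}$ prefactor; only the explicit coefficient bookkeeping, which the paper carries out in full, is left unexecuted.
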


The detailed proofs can be found in the Supplementary Materials.

\end{widetext}



\bibliographystyle{apsrev4-2}
\bibliography{ref}

\onecolumngrid
\appendix



\newpage

\contentsmargin{2.55em}
\dottedcontents{section}[3.8em]{}{2.3em}{1pc}
\dottedcontents{subsection}[6.1em]{}{2.2em}{1pc}
\dottedcontents{subsubsection}[9em]{}{2.1em}{1pc}
\startcontents[supp]
\printcontents[supp]{l}{1}{\section*{Supplementary Materials Contents}\vspace{10pt}}

\section{Brief introduction  of Trotterization}
In this work, we mainly focus on product formula (PF) algorithms \cite{suzuki1991general,chailds2021TrottertheoryPhysRevX.11.011020}.
For a short evolution time $\delta t$, the first-order product formula (PF1) algorithm for a Hamiltonian $\sum_{l=1}^LH_l$ applies the unitary operation
\begin{align}
    \mathscr{U}_1(\delta t):=e^{-iH_1\delta t}e^{-iH_2\delta t}\cdots e^{-iH_L\delta t}=\rprod_l e^{-iH_l \delta t}.
\end{align}
Here the right arrow indicates the product is in the order of increasing indices (i.e, for $l=1,2,\ldots,L$). Second-order product formulas (PF$2$)
can be obtained by combining evolutions in both increasing and decreasing orders of indices, with
\begin{align}
  \mathscr{U}_2(\delta t):= \rprod_l e^{-iH_l\delta t} \lprod_l e^{-iH_l\delta t},
\end{align}
where the left arrow indicates the product in decreasing order (i.e., for $l=L,L-1,\ldots,1$).

More generally, Suzuki constructed $p$th-order product formulas (PF$p$ for even $p=2k$) recursively from the second-order formula as~\cite{suzuki1991general}
\begin{align}
    \mathscr{U}_{2k}(\delta t)&= [\mathscr{U}_{2k-2}(p_k \delta t)]^2 \mathscr{U}_{2k-2}((1-4p_k)\delta t) [\mathscr{U}_{2k-2}(p_k \delta t)]^2,
\end{align}
where $p_k:=\frac{1}{4-4^{1/(2k-1)}}$ and $k>1$ \cite{suzuki1991general}.
Overall, we have $S=2\cdot 5^{k-1}$ stages (operators of the form $\mathscr{U}_1=\rprod_l e^{-iH_l\delta t}$ or its reverse ordering $\lprod_l e^{-iH_l\delta t}$) and the evolution can be rewritten as \cite{chailds2021TrottertheoryPhysRevX.11.011020}
\begin{equation}\label{eq:pf2k}
  \mathscr{U}_{2k}(\delta t)=  \prod_{s=1}^{S} \prod_{l=1}^{L} e^{-i \delta t a_s H_{\pi_s(l)}},
\end{equation}
where each $\pi_s$ is the identity permutation or reversal permutation and $\delta t a_s$ is the simulation time for stage $s$.

To simulate the evolution of a quantum system for a long time, we divide the entire duration into many smaller time segments. Each segment represents a short-time evolution that can be simulated with small error. The total error can be upper bounded using the triangle inequality as the sum of the Trotter errors for each segment, which we also refer to as Trotter steps.


\section{Analysis of the leading-order multiplicative error in the $\text{PF}p$}
Here we analyze the leading error terms $\mathscr{M}$ in $\text{PF}p$.
First define an order for pairs $(s,l)$ of stage indices $s \in \{1,\ldots,S\}$ and term indices $l \in \{1,\ldots,L\}$.
Let $(s,l)\prec (s',l')$ when $s<s'$ or $s=s',l<l'$. Furthermore, let $(s,l)\preceq (s',l')$ when $s<s'$ or $s=s',l\le l'$. Set $e^{-iH\delta t}:=U_0(\delta t)$, 
according to Theorem 3 in Ref.~\cite{chailds2021TrottertheoryPhysRevX.11.011020}, the additive error $\mathscr{E} (\delta t)$
can be expressed as
\begin{equation}\label{app:aderror}
\mathscr{E}(\delta t):=U_0(\delta t)-\mathscr{U}_p(\delta t)=\int_0^{\delta t} \d\tau_1 \, e^{-i( \delta t-\tau_1)H} \mathscr{U}_p(\tau_1) \mathscr{N}(\tau_1),
\end{equation}
where $\mathscr{U}_p$ is the $p$th-order Trotter formula and
\begin{equation}
\begin{aligned}
 \mathscr{N}(\tau_1)&= \sum_{(s,l)}
 \rprod_{(s',l')\prec(s,l)}
e^{i\tau_1 a_s H_{\pi_{s'}(l')}} \left(- ia_s H_{\pi_s(l)}\right)       \lprod_{(s',l')\prec(s,l)}
e^{-i\tau_1 a_s H_{\pi_{s'}(l')}}  \\
&\quad-  \rprod_{(s',l')}
e^{i\tau_1 a_s H_{\pi_{s'}(l')}} (-iH) \lprod_{(s',l')}
e^{-i\tau_1 a_s H_{\pi_{s'}(l')}},
\end{aligned}
\end{equation}
where $\mathscr{N}(\tau_1)=\mathcal O(\tau^p_1)$.
Since we are concerned about the  multiplicative error, one can rewrite 
    \begin{equation}\label{app:mlterror}
\mathscr{U}_p(\delta t)
=U_0(\delta t) \big[I-\int_0^{\delta t} \d\tau_1 \, e^{i\tau_1 H} \mathscr{U}_p(\tau_1) \mathscr{N}(\tau_1)\big]=U_0(\delta t)(I+\mathscr{M}(\delta t)).
\end{equation}
Now, the multiplicative error of PF$p$ can be defined as 
\begin{equation}
    \mathscr{M}(\delta t):= -\int_0^{\delta t} \d\tau_1 \, e^{i\tau_1 H} \mathscr{U}_p(\tau_1) \mathscr{N}(\tau_1) .
\end{equation}

Now we define the vector $\vec{j}_{p+1}=(j_1,j_2,\dots,j_{p+1})$ with $p+1$ entries, $j_1,j_2,\dots,j_{p+1} \in \{(s,l): s\in \{1,\dots,S\},l\in\{1,\dots,L\}\}$, and the corresponding nested commutators as
\begin{equation}\label{SM:Nnest}
 N_{\vec{j}_{p+1}}=[H_{j_1},[H_{j_2},\dots,[H_{j_p},H_{j_{p+1}}]]].
\end{equation}
According to Theorem 5 in Ref.~\cite{chailds2021TrottertheoryPhysRevX.11.011020}, we can further write
\begin{align}
 \mathscr{N}(\tau_1)&=  \int_0^{\tau_1}\d\tau_2    \sum_{i=1,2} \sum_{\vec{j}_{p+1}\in \Gamma_i}   (\tau_1-\tau_2)^{q(\vec{j}_{p+1})-1}\tau_1^{p-q(\vec{j}_{p+1})} c_{\vec{j}_{p+1}} F_{\vec{j}_{p+1}}^{\dagger} N_{\vec{j}_{p+1}} F_{\vec{j}_{p+1}}, \label{eq:Nint}
 \end{align}
where we use the following definitions. Let
\begin{equation}
    \begin{aligned}
\Gamma_1&:=\{(j_1,j_2,\dots,j_{p+1}) : j_1\preceq j_2\preceq\dots \preceq j_{p+1}\},    \\
\Gamma_2&:=\{(j_1,j_2,\dots,j_{p+1}) : j_1\preceq j_2\dots \preceq j_{p}, \, j_{p+1}=(1,l_{p+1})\} .
    \end{aligned}
\end{equation}
The values $c_{\vec{j}_{p+1}}$ are real coefficients that are functions of $\vec{j}_{p+1}$ and $p$, satisfying $|c_{\vec{j}_{p+1}}|\le 1$. The function
$q(\vec{j}_{p+1})$
is the maximal number $q$ satisfying $j_1=j_2=\dots=j_q$.
The operators $F_{\vec{j}_{p+1}}$ are unitary, constructed as products of terms of the form $e^{-iH_l\tau_1}$.
Plugging \cref{eq:Nint} into \cref{app:aderror}, we have
  \begin{align}
 \mathscr{M}(\delta t)&=\int_0^{\delta t} \d\tau_1 \int_0^{\tau_1}\d\tau_2
 \sum_{i=1,2} \sum_{\vec{j}_{p+1}\in \Gamma_i} (\tau_1-\tau_2)^{q(\vec{j}_{p+1})-1}\tau_1^{p-q(\vec{j}_{p+1})} c_{\vec{j}_{p+1}}  e^{i\tau_1 H} \mathscr{U}_p(\tau_1)F_{\vec{j}_{p+1}}^{\dagger} N_{\vec{j}_{p+1}} F_{\vec{j}_{p+1}}.\label{Eint}
 \end{align}

Since $F_{\vec{j}_{p+1}}$ is a product of short-time evolutions, we can expand it as
\begin{align}
   F_{\vec{j}_{p+1}}^{\dagger} N_{\vec{j}_{p+1}} F_{\vec{j}_{p+1}}= N_{\vec{j}_{p+1}} + R_{\vec{j}_{p+1}}
\end{align}
where the nested commutator $N_{\vec{j}_{p+1}}$ is the $0$th-order term and $R_{\vec{j}_{p+1}}$ represents the remaining higher-order terms.
To realize this decomposition,
we repeatedly apply the formula
\begin{align}
    e^{iA\tau_1}B e^{-iA\tau_1}= B + \int_0^{\tau_1}  \d\tau_2 e^{iA(\tau_1-\tau_2)} [A, B] e^{-iA(\tau_1-\tau_2)},
\end{align}
satisfying
\begin{align}
    \| e^{iA\tau_1}B e^{-iA\tau_1}&- B\| \le \|[A,B]\|\tau_1,
\end{align}
which accounts for the effect of a short-time evolution generated by $A$ on $B$. For a product of short-time evolutions, each time we apply it to the innermost layer, giving
\begin{align}
 \| e^{iA_1\tau_1}\cdots e^{iA_{s-1}\tau_1}e^{iA_s\tau_1} B  e^{-iA_s\tau_1}\cdots e^{-iA_{s-1}\tau_1}e^{-iA_1\tau_1}- e^{iA_1\tau_1}\cdots e^{iA_{s-1}\tau_1} B e^{-iA_{s-1}\tau_1}e^{-iA_1\tau_1}\|\le \|[A_s,B]\|\tau_1.
\end{align}
Iterating, we find
\begin{align}
\left \| e^{iA_1\tau_1}\cdots e^{iA_{s-1}\tau_1}e^{iA_s\tau_1} B  e^{-iA_s\tau_1}\cdots e^{-iA_{s-1}\tau_1}e^{-iA_1\tau_1}- B \right\|\le \sum_s \|[A_s,B]\| \tau_1.
\end{align}
By replacing $B$ with $N_{\vec{j}_{p+1}} $, we thus can control the remaining term as $\|R_{\vec{j}_{p+1}}\|= \mathcal O(\sum_l \|[H_{l}, N_{\vec{j}_{p+1}}]\|\tau_1)$.

Substituting $F_{\vec{j}_{p+1}}^{\dagger} N_{\vec{j}_{p+1}} F_{\vec{j}_{p+1}}$ for $N_{\vec{j}_{p+1}}$ in $\mathscr{E}(\delta t)$ of \cref{Eint}, we define
\begin{equation}
\begin{aligned}
 \mathscr{M}^*(\delta t):=\int_0^{\delta t} \d\tau_1 \int_0^{\tau_1}\d\tau_2
 \sum_{i=1,2} \sum_{\vec{j}_{p+1}\in \Gamma_i} (\tau_1-\tau_2)^{q(\vec{j}_{p+1})-1}\tau_1^{p-q(\vec{j}_{p+1})} c_{\vec{j}_{p+1}}  e^{i\tau_1 H} \mathscr{U}_p(\tau_1)N_{\vec{j}_{p+1}}.
 \end{aligned}\label{Eq.additive*}
\end{equation}

Since the higher-order terms of multiplicative error of the product-formula approximation are given by the difference of $\mathscr{M}$ and $\mathscr{M}^*$, we can bound it as follows:
\begin{equation}
    \begin{aligned}
  \|   \mathscr{M}_{\re}(\delta t)\|&= \|   \mathscr{M}(\delta t)-  \mathscr{M}^*(\delta t)\|\\
  &=\left\|\int_0^{\delta t} \!\d\tau_1 \int_0^{\tau_1} \!\d\tau_2
 \sum_{i=1,2} \sum_{\vec{j}_{p+1}\in \Gamma_i} (\tau_1-\tau_2)^{q(\vec{j}_{p+1})-1}\tau_1^{p-q(\vec{j}_{p+1})} c_{\vec{j}_{p+1}}  e^{i\tau_1 H} \mathscr{U}_p(\tau_1)(F_{\vec{j}_{p+1}}^{\dagger} \! N_{\vec{j}_{p+1}} \! F_{\vec{j}_{p+1}} \!\! -N_{\vec{j}_{p+1}})\right\|\\
  &=\left\|\int_0^{\delta t} \d\tau_1 \int_0^{\tau_1}\d\tau_2
 \sum_{i=1,2} \sum_{\vec{j}_{p+1}\in \Gamma_i} (\tau_1-\tau_2)^{q(\vec{j}_{p+1})-1}\tau_1^{p-q(\vec{j}_{p+1})} c_{\vec{j}_{p+1}}  e^{i\tau_1 H} \mathscr{U}_p(\tau_1)R_{\vec{j}_{p+1}}\right\|\\\
 &\leq \int_0^{\delta t} \d\tau_1 \int_0^{\tau_1}\d\tau_2
 \sum_{i=1,2} \sum_{\vec{j}_{p+1}\in \Gamma_i} (\tau_1-\tau_2)^{q(\vec{j}_{p+1})-1}\tau_1^{p-q(\vec{j}_{p+1})} |c_{\vec{j}_{p+1}}| \left\|e^{i\tau_1 H} \mathscr{U}_p(\tau_1)R_{\vec{j}_{p+1}}\right\|\\
 &= \int_0^{\delta t} \d\tau_1 \int_0^{\tau_1}\d\tau_2
 \sum_{i=1,2} \sum_{\vec{j}_{p+1}\in \Gamma_i} (\tau_1-\tau_2)^{q(\vec{j}_{p+1})-1}\tau_1^{p-q(\vec{j}_{p+1})} |c_{\vec{j}_{p+1}}| \left\|R_{\vec{j}_{p+1}}\right\| \\
  &=\mathcal O\left(\sum_{l,\vec{j}_{p+1}} \|[H_{l} , N_{\vec{j}_{p+1}}]\| \delta t^{p+2}\right)\\
 &= \mathcal O\left(\sum_{l_1,\dots,l_{p+2}=1}^L
\left\|[H_{l_1},[H_{l_2},\dots[H_{l_{p+1}},H_{l_{p+2}}]]] \right\| \delta t^{p+2} \right).
\end{aligned}
\end{equation}
Here the first inequality uses the triangle inequality for the spectral norm, and the fourth line uses the fact that $\|VA\|=\|A\|$ for any unitary $V$ and operator $A$.


We aim to derive the leading error of $\mathscr{M}(\delta t)=\mathscr{M}^*(\delta t) +\mathscr{M}_{\text{re}}(\delta t)$.
However, $\mathscr{M}^*(\delta t)$ is defined in \cref{Eq.additive*} still complicated.
For simplify $ \mathscr{M}^*(\delta t)$, we need to expand $e^{i\tau_1 H} \mathscr{U}_p(\tau_1)$.
Since $\mathscr{U}_p(\tau_1)=e^{-iH\tau_1}(I+\mathscr{M}(\tau_1))$, $$e^{i\tau_1 H} \mathscr{U}_p(\tau_1)=I+\mathscr{M}(\tau_1).$$
Substituting it into $\mathscr{M}^*(\delta t)$, we have

\begin{equation}\label{SMleadingE2}
    \begin{aligned}
 \mathscr{M}^*(\delta t) &= 
-\int_0^{\delta t} \d\tau_1 \int_0^{\tau_1}\d\tau_2
 \sum_{i=1,2} \sum_{\vec{j}_{p+1}\in \Gamma_i} (\tau_1-\tau_2)^{q(\vec{j}_{p+1})-1}\tau_1^{p-q(\vec{j}_{p+1})} c_{\vec{j}_{p+1}}  e^{i\tau_1 H} \mathscr{U}_p(\tau_1)N_{\vec{j}_{p+1}} \\
 & =- \int_0^{\delta t} \d\tau_1 \int_0^{\tau_1}\d\tau_2 \sum_{i=1,2} \sum_{\vec{j}_{p+1}\in \Gamma_i}  (\tau_1-\tau_2)^{q(\vec{j}_{p+1})-1}\tau_1^{p-q(\vec{j}_{p+1})} c_{\vec{j}_{p+1}} [I+\mathscr{M}(\tau_1)] N_{\vec{j}_{p+1}} \\
 &=  \int_0^{\delta t} \d\tau_1  \sum_{i=1,2} \sum_{\vec{j}_{p+1}\in \Gamma_i}  c'_{\vec{j}_{p+1}}\tau_1^{p}
 N_{\vec{j}_{p+1}}   + \int_0^{\delta t} \d\tau_1  \sum_{i=1,2} \sum_{\vec{j}_{p+1}\in \Gamma_i}  c'_{\vec{j}_{p+1}}\tau_1^{p} \mathscr{M}(\tau_1)
 N_{\vec{j}_{p+1}}  \\
 &=  \sum_{i=1,2} \sum_{\vec{j}_{p+1}\in \Gamma_i} \frac1{p+1} c'_{\vec{j}_{p+1}} \delta t^{p+1}  N_{\vec{j}_{p+1}} + \mathscr{M}^{*}_{\text{re}}(\delta t)\\
 &= \sum_j E_j \ket{\psi} \delta t^{p+1} + \mathscr{M}^{*}_{\text{re}}(\delta t),
 \end{aligned}
\end{equation}
where $\mathscr{M}^{*}_{\text{re}}(\delta t)= \int_0^{\delta t} \d\tau_1  \sum_{i=1,2} \sum_{\vec{j}_{p+1}\in \Gamma_i}  c'_{\vec{j}_{p+1}}\tau_1^{p} \mathscr{M}(\tau_1)
 N_{\vec{j}_{p+1}} $.
For the third line, denote the result of integrating over $\tau_2$ as $F(\tau_1,\vec{j}_{p+1}):=\int_0^{\tau_1}\d\tau_2 (\tau_1-\tau_2)^{q(\vec{j}_{p+1})-1}\tau_1^{p-q(\vec{j}_{p+1})} c_{\vec{j}_{p+1}}$. It is straightforward to check that, for all $\vec{j}_{p+1}$, $F(\tau_1,\vec{j}_{p+1})=c'_{\vec{j}_{p+1}}\tau_1^{p+1}$ for some constant $c'_{\vec{j}_{p+1}}$. Here we also incorporate the negative one coefficient into the new coefficient $c'_{\vec{j}_{p+1}}$.
In the fourth line, we simply integrate $\tau_1^p$. In the last line, we decompose $\frac1{p} \sum c'_{\vec{j}_{p+1}} N_{\vec{j}_{p+1}} $ into local operators $E_j$. 

Now we need to bound the $\mathscr{M}^{*}_{\text{re}}(\delta t)$. Specifically, 

\begin{equation}
    \begin{split}
    \norm{\mathscr{M}^{*}_{\text{re}}(\delta t)}  &= \norm{ \int_0^{\delta t} \d\tau_1  \sum_{i=1,2} \sum_{\vec{j}_{p+1}\in \Gamma_i}  c'_{\vec{j}_{p+1}}\tau_1^{p} \mathscr{M}(\tau_1)
 N_{\vec{j}_{p+1}} } \\
    &\le \int_0^{\delta t} \d\tau_1  \sum_{i=1,2} \sum_{\vec{j}_{p+1}\in \Gamma_i}  |c'_{\vec{j}_{p+1}}|\tau_1^{p} 
\norm{\mathscr{M}(\tau_1)} \norm{ N_{\vec{j}_{p+1}} }\\
&=\int_0^{\delta t} \d\tau_1  \sum_{i=1,2} \sum_{\vec{j}_{p+1}\in \Gamma_i}  |c'_{\vec{j}_{p+1}}|\tau_1^{p} 
\bigO(\alpha_{p+1} \tau_1^{p+1})\norm{ N_{\vec{j}_{p+1}} }\\
&=  \left( \sum_{i=1,2} \sum_{\vec{j}_{p+1}\in \Gamma_i}  \frac{1}{2p+2}|c'_{\vec{j}_{p+1}}|
\norm{ N_{\vec{j}_{p+1}} } \right )\bigO(\alpha_{p+1})\delta t^{2p+2} \\
&=\bigO(\alpha^2_{p+1} \delta t^{2p+2}).
    \end{split}
\end{equation}
For the third line, we have made use of Theorem 6 in \cite{chailds2021TrottertheoryPhysRevX.11.011020}, e.g., $\norm{\mathscr{M}(\tau_1)}=\bigO(\alpha_{p+1}\tau^{p+1}_1)$, where 
$\alpha_{p+1}=\mathcal O \left(\sum_{l_1,\dots,l_{p+2}=1}^L
 \norm{[H_{l_1},[H_{l_2},\dots[H_{l_{p+1}},H_{l_{p+2}}]]] } \delta t^{p+2} \right)$.

Eventually, the multiplicative error $\mathscr{M}(\delta t)$ can be expressed as
\begin{equation}\label{Mterm}
   \begin{aligned}
  \mathscr{M}(\delta t)= \sum_j E_j  \delta t^{p+1}+\mathscr{M}^*_{\text{re}} +\mathscr{M}_{\text{re}} = \sum_j E_j  \delta t^{p+1}+\mathscr{M}_{Re}
    \end{aligned}
\end{equation}
and the additive error between $\mathscr{M}(\delta t)$ and $\sum_j E_j \delta t^{p+1}$ is bound as
\begin{equation}\label{leadingM}
   \begin{aligned}
\norm{ \mathscr{M}(\delta t)-\sum_j E_j \ket{\psi} \delta t^{p+1}} &= \norm{\mathscr{M}^*_{\text{re}} +\mathscr{M}_{\text{re}} } \le \norm{\mathscr{M}^*_{\text{re}}}+\norm{\mathscr{M}_{\text{re}} }\\
 &=\bigO(\alpha^2_{p+1} \delta t^{2p+2})+\bigO(\alpha_{p+2} \delta t^{p+2})\\
&=\mathcal O \left(\sum_{l_1,\dots,l_{p+2}=1}^L
 \norm{[H_{l_1},[H_{l_2},\dots[H_{l_{p+1}},H_{l_{p+2}}]]] } \delta t^{p+2} \right).
    \end{aligned}
\end{equation}
Therefore the leading multiplicative error of $\mathscr{M}(\delta t) $ is $\mathscr{M}_{p+1}=\sum_j E_j \delta t^{p+1}$. In the following, we donate $\mathscr{M}_{Re}=\mathscr{M}^*_{\text{re}}(\delta t)+ \mathscr{M}_{\text{re}}(\delta t)$. The upper bound of $\norm{\mathscr{M}_{\text{Re}}}$ is $\mathcal O \left(\sum_{l_1,\dots,l_{p+2}=1}^L
 \norm{[H_{l_1},[H_{l_2},\dots[H_{l_{p+1}},H_{l_{p+2}}]]] } \delta t^{p+2} \right)$.

\section{Operator scrambling-based bound of observable error}

\subsection{Operator scrambling-based bound of observable error for general quantum circuits }

Here we consider the algorithm or coherence error of the quantum circuit, i.e., Trotter error. Suppose $U_0$ is the ideal unitary, and $U$ is the approximate unitary. For a given initial state $\ket{\psi}$ and the (Hermitian) observable of interest $O$ ($O=O^\dagger$), the observable error is donated as 

\begin{equation}
\epsilon_O=\abs{\langle \psi| U^\dagger_0 O U_0- U^\dagger O U \ket{\psi}}.
\end{equation}
Now let set $U^\dagger_0 O U_0- U^\dagger O U=B$, one has
$\epsilon_O=\abs{\langle \psi| B \ket{\psi}}\le \sqrt{\langle \psi| B^\dagger B \ket{\psi}}$.
Here we have used Cauchy–Schwarz inequality, e.g. $|\langle a|b\rangle|^2\le \langle a|a\rangle \langle b|b\rangle$.
 Now, let us expand the above inequality,
\begin{equation}
\label{ech}
\begin{split}
   \epsilon_O&\le \sqrt{\langle \psi| B^\dagger B \ket{\psi}} \\
   &=\sqrt{\langle \psi| (U^\dagger_0 O U_0- U^\dagger O U) (U^\dagger_0 O U_0- U^\dagger O U) \ket{\psi}}\\
   &=\sqrt{\bra{\psi}U^\dagger_0 O^2 U_0\ket{\psi}+\bra{\psi}U^\dagger O^2 U\ket{\psi}-\bra{\psi}U^\dagger_0 O U_0U^\dagger O U+U^\dagger O UU^\dagger_0 O U_0\ket{\psi}}\\
\end{split}
\end{equation}
Suppose $\mathscr{M}$ is the multiplicative error satisfying $U=(I+\mathscr{M})U_0$.
%
Substituting it to Eq. (\ref{ech}), 
\begin{equation}
\begin{split} \label{sm1}
    \epsilon_O &\le ( -\bra{\psi}U^\dagger_0 O^2 \mathscr{M}  U_0 \ket{\psi}
     - \bra{\psi}U_0^\dagger O \mathscr{M}^\dagger  O U_0\ket{\psi}-\bra{\psi}U_0^\dagger O \mathscr{M}^\dagger  O \mathscr{M}U_0\ket{\psi}\\
     &\quad  -\bra{\psi}U^\dagger O^2 \mathscr{M}^\dagger  U\ket{\psi}-\bra{\psi}U^\dagger O\mathscr{M}OU\ket{\psi}-\bra{\psi}U^\dagger O\mathscr{M}O\mathscr{M}^\dagger  U\ket{\psi})^{\frac{1}{2}}=\tilde{\epsilon}_O.\\
     \end{split}
\end{equation}

Suppose $U=U_0(1+\mathscr{M})$ and $O'=U_0^\dagger OU_0$. Since $-(\mathscr{M}^\dagger  +\mathscr{M} )=\mathscr{M}  \mathscr{M}^\dagger  =\mathscr{M}^\dagger  \mathscr{M}  $, one has 
\[
\begin{split}
    \epsilon_O^2&=\bra{\psi}O'^2+(1+{\mathscr{M}^\dagger})O'^2(1+{\mathscr{M}})-O'(1+\mathscr{M}^\dagger)O'(1+\mathscr{M})-(1+\mathscr{M}^\dagger)O'(1+\mathscr{M})O'\ket{\psi}\\
    &=\bra{\psi}\textcolor{red}{\mathscr{M}^\dagger O'^2}+\textcolor{blue}{O'^2\mathscr{M}}+\mathscr{M}^\dagger O'^2\mathscr{M}-O'\mathscr{M}^\dagger O'-\textcolor{blue}{O'^2\mathscr{M}}-O'\mathscr{M}^\dagger O'\mathscr{M}-\textcolor{red}{\mathscr{M}^\dagger O'^2}-O'\mathscr{M}O'-\mathscr{M}^\dagger O'\mathscr{M}O'\ket{\psi}\\
    &=\bra{\psi}\mathscr{M}^\dagger O'^2\mathscr{M}-O'(\mathscr{M}+\mathscr{M}^\dagger ) O'-O'\mathscr{M}^\dagger O'\mathscr{M}-\mathscr{M}^\dagger O'\mathscr{M}O'\ket{\psi}\\
    &=\bra{\psi}\mathscr{M}^\dagger O'^2\mathscr{M}+O'\mathscr{M}^\dagger \mathscr{M}O'-O'\mathscr{M}^\dagger O'\mathscr{M}-\mathscr{M}^\dagger O'\mathscr{M}O'\ket{\psi}
\end{split}
\]

In the above formula, the terms with the same color can be canceled.
Note that $\bra{\psi} O' \mathscr{M}  \mathscr{M}^\dagger     O' \ket{\psi}$ and $\bra{\psi}\mathscr{M}  O'^2 \mathscr{M}^\dagger  \ket{\psi}$ are positive for $\mathscr{M} \ne 0$. 
Now let rewrite the $\tilde{\epsilon}^2_O$ as 
\begin{equation}
 \tilde{\epsilon}^2_O  =   \bra{\psi} |[O',\mathscr{M} ]|^2  \ket{\psi} 
=\bra{\psi} [O',\mathscr{M}  ]^\dagger [O',\mathscr{M}  ]  \ket{\psi}.\label{eq:one-step_scrambling_bound}
\end{equation}
This formula has a similar form to quantum information scrambling. Thus, we have the following Lemma for observable error:

\begin{lemma} [Operator scrambling-based bound] \label{scrambling}
Given an ideal quantum circuit $U_0$ and approximate unitary $U$, the error of the observable $O$ is bounded by   
\begin{equation}
 \epsilon_O=\abs{\langle \psi| U^\dagger_0 O U_0- U^\dagger O U \ket{\psi}} \le \sqrt{\bra{\psi} |[O_{U_0},\mathscr{M} ]|^2  \ket{\psi} }=\norm{[O_{U_0},\mathscr{M} ]\ket{\psi}},
\end{equation}
where $U=U_0(I+\mathscr{M})$, and  $O_{U_0}=U^\dagger_0 O U_0$ donate the ideal evolution of observable $O$.
\end{lemma}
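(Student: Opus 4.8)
The plan is to turn the scalar error $\epsilon_O = \abs{\bra{\psi} U_0^\dagger O U_0 - U^\dagger O U \ket{\psi}}$ into the norm of a single vector and then recognize that vector as $[O_{U_0},\mathscr{M}]\ket{\psi}$. First I would set $B := U_0^\dagger O U_0 - U^\dagger O U$ and apply the Cauchy--Schwarz inequality to the pairing of $\ket{\psi}$ with $B\ket{\psi}$, giving $\epsilon_O = \abs{\bra{\psi} B \ket{\psi}} \le \norm{\ket{\psi}}\,\norm{B\ket{\psi}} = \sqrt{\bra{\psi}B^\dagger B\ket{\psi}}$ for the normalized state $\ket{\psi}$. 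This reduces the Lemma to the purely algebraic identity $\bra{\psi}B^\dagger B\ket{\psi} = \bra{\psi}[O_{U_0},\mathscr{M}]^\dagger[O_{U_0},\mathscr{M}]\ket{\psi}$, after which taking square roots yields the stated bound.

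For the algebra I would abbreviate $O' := O_{U_0} = U_0^\dagger O U_0$ and substitute $U = U_0(I+\mathscr{M})$, so that $U^\dagger O U = (I+\mathscr{M}^\dagger)O'(I+\mathscr{M})$. Expanding $B^\dagger B$ and using $U_0 U_0^\dagger = U U^\dagger = I$ to collapse the two square terms into $U_0^\dagger O^2 U_0 = O'^2$ and $U^\dagger O^2 U = (I+\mathscr{M}^\dagger)O'^2(I+\mathscr{M})$, one is left with a sum of quartic monomials in $O'$, $\mathscr{M}$, and $\mathscr{M}^\dagger$. The decisive input is the unitarity of the approximate circuit $U$: from $U^\dagger U = I$ one gets $(I+\mathscr{M}^\dagger)(I+\mathscr{M}) = I$, hence $\mathscr{M} + \mathscr{M}^\dagger + \mathscr{M}^\dagger\mathscr{M} = 0$, i.e.\ $-(\mathscr{M}+\mathscr{M}^\dagger) = \mathscr{M}^\dagger\mathscr{M}$. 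I would use this to rewrite the bare cross term $-O'(\mathscr{M}+\mathscr{M}^\dagger)O'$ as $+O'\mathscr{M}^\dagger\mathscr{M} O'$.

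After the $O'^2$, $O'^2\mathscr{M}$, and $\mathscr{M}^\dagger O'^2$ contributions cancel in pairs, the surviving expression is $\bra{\psi}\,\mathscr{M}^\dagger O'^2\mathscr{M} + O'\mathscr{M}^\dagger\mathscr{M} O' - O'\mathscr{M}^\dagger O'\mathscr{M} - \mathscr{M}^\dagger O'\mathscr{M} O'\,\ket{\psi}$, which factors exactly as $(\mathscr{M}^\dagger O' - O'\mathscr{M}^\dagger)(O'\mathscr{M} - \mathscr{M} O') = [O',\mathscr{M}]^\dagger[O',\mathscr{M}]$, completing the identification. I expect the only real obstacle to be careful bookkeeping: the raw expansion of $B^\dagger B$ does not visibly factor into a commutator, and the factorization is unlocked only by inserting the unitarity relation $-(\mathscr{M}+\mathscr{M}^\dagger)=\mathscr{M}^\dagger\mathscr{M}$ at precisely the cross term. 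Once that substitution is made, the four remaining terms reassemble into $\norm{[O_{U_0},\mathscr{M}]\ket{\psi}}^2$ with no further estimates; indeed $\bra{\psi}B^\dagger B\ket{\psi}$ equals this quantity as an \emph{exact} identity, so the only inequality in the entire argument is the opening Cauchy--Schwarz step.
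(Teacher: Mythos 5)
Your proposal is correct and follows essentially the same route as the paper's own proof: Cauchy--Schwarz on $\bra{\psi}B\ket{\psi}$ with $B=U_0^\dagger O U_0-U^\dagger O U$, expansion of $B^\dagger B$ in terms of $O'=U_0^\dagger O U_0$ and $\mathscr{M}$, and the unitarity identity $-(\mathscr{M}+\mathscr{M}^\dagger)=\mathscr{M}^\dagger\mathscr{M}$ to convert the cross term and factor the remainder exactly into $[O',\mathscr{M}]^\dagger[O',\mathscr{M}]$. You also correctly identify that the only inequality is the initial Cauchy--Schwarz step, matching the paper's argument term for term.
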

Suppose $\mathscr{M}=\sum_j M_j$, where $M_j$ is the local term ($E_j/\norm{E_j}$ is a Pauli operator), one can further bound the observable error as the summation of the operator scrambling of the local error term $E_j$ by using the triangle inequality.

\begin{corollary}
[Operator scrambling-based bound (local case)] \label{scrambling}
Given an ideal quantum circuit $U_0$ and approximate unitary $U$, the error of the observable $O$ is bounded by 
 \begin{equation}
    \epsilon_O\le \norm{[O_{U_0},\mathscr{M} ]\ket{\psi}} \le \sum_j \norm{[O_{U_0},M_j ]\ket{\psi}}=\sum_j \sqrt{\bra{\psi} |[O_{U_0},M_j ]|^2  \ket{\psi} },
\end{equation}   
where $U=U_0(I+\mathscr{M})$, $\mathscr{M}=\sum_j M_j$ and  $O_{U_0}=U^\dagger_0 O U_0$ donate the ideal evolution of observable $O$.
\end{corollary}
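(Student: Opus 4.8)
The plan is to derive the statement directly from the preceding Lemma (the operator scrambling-based bound) by decomposing the multiplicative error into its local terms. The first inequality, $\epsilon_O \le \norm{[O_{U_0},\mathscr{M}]\ket{\psi}}$, is precisely the conclusion of that Lemma, so I would simply invoke it. All the remaining work lies in refining the right-hand side.

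For the second inequality I would exploit the linearity of the commutator bracket in its second slot. Since $\mathscr{M}=\sum_j M_j$, one has $[O_{U_0},\mathscr{M}]=\sum_j [O_{U_0},M_j]$, and applying this operator identity to $\ket{\psi}$ expresses the vector $[O_{U_0},\mathscr{M}]\ket{\psi}$ as a sum of the vectors $[O_{U_0},M_j]\ket{\psi}$. The triangle inequality for the Hilbert-space norm then yields $\norm{\sum_j [O_{U_0},M_j]\ket{\psi}}\le \sum_j \norm{[O_{U_0},M_j]\ket{\psi}}$, which is exactly the second inequality. The concluding equality is purely definitional: for any operator $A$ one has $\norm{A\ket{\psi}}=\sqrt{\bra{\psi}A^\dagger A\ket{\psi}}=\sqrt{\bra{\psi}\abs{A}^2\ket{\psi}}$, and taking $A=[O_{U_0},M_j]$ recovers the square-root form written in the statement.

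Since everything reduces to the Lemma together with the triangle inequality, there is no genuine obstacle here. The single point I would take care over is that the triangle inequality must be applied at the level of the state vectors $[O_{U_0},M_j]\ket{\psi}$, i.e.\ to the induced vector norm $\norm{\cdot\,\ket{\psi}}$, rather than to the scrambling expectation values directly. Subadditivity holds for the vector norm but not for a bare quantity $\sqrt{\langle\cdot\rangle}$ unless each summand is first recognized as such a norm; rewriting each local scrambling term as $\norm{[O_{U_0},M_j]\ket{\psi}}$ before summing is what legitimizes the final bound.
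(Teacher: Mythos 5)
Your proposal is correct and matches the paper's own (very brief) justification: the first inequality is the preceding Lemma, and the refinement comes from writing $[O_{U_0},\mathscr{M}]=\sum_j[O_{U_0},M_j]$ and applying the triangle inequality to the vector norm $\norm{\,\cdot\,\ket{\psi}}$. Your remark that subadditivity must be applied at the level of the vectors $[O_{U_0},M_j]\ket{\psi}$ rather than to the raw scrambling expectations is exactly the right point of care, and nothing further is needed.
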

It should be noted that the similar results hold if $O=\sum_i O_i$, where $O_i=\alpha_i P_i$ ($P_i$ is the Pauli operator).

\subsection{Operator scrambling-based bound of observable error for a single segment}

In this section, we primarily focus on short-time evolution $e^{-iH \delta t}$ and establish a connection between the Trotter error in a single segment and operator scrambling.
Considering the target quantum evolution $U_0 =e^{-iH\delta t}$,
a $p$th-order Trotter approximation $\mathscr{U}_{p}$, and a pure quantum state $\ket{\psi}$, according to \cref{scramblingPF}, the Trotter error bound can be expressed as follows:
   \begin{equation}
           \epsilon_O=\abs{\langle \psi| e^{iH\delta t}Oe^{-iH\delta t}- \mathscr{U}_p^\dagger O \mathscr{U}_p \ket{\psi}}\le  \norm{[O(\delta t),\sum_j M_j ] \ket{\psi}}\delta t^{p+1} + 2 \norm{O} \norm{\mathscr{M}_{Re}}, 
        \end{equation}
Here $O (\delta t)=e^{iH\delta t}Oe^{-iH\delta t}$, $M_j$ are the leading-order local error terms and $\mathscr{M}_{Re}$ is the sum of the higher-order remaining terms, with $\mathscr{M}_{Re}=\mathcal O(\alpha_{p+2}\delta t^{p+2})$.

\begin{theorem}[Operator scrambling-based bound for observable error]\label{scramblingPF}
      Given ideal unitary $U_0$ and approximate unitary $\mathscr{U}_p$ satisfying $\mathscr{U}_p=U_0(I+\mathscr{M})$,  for the Hamiltonian $H=\sum_{l=1}^L H_l$ and an input pure state $\ket{\psi}$, the additive Trotter error of the $p$th-order product formula of an observable $O$ can be bound as
        \begin{equation}
           \epsilon_O=\abs{\langle \psi| e^{iH\delta t}Oe^{-iH\delta t}- \mathscr{U}_p^\dagger O \mathscr{U}_p \ket{\psi}}\le  \norm{[O(\delta t),\sum_j M_j] \ket{\psi}}\delta t^{p+1} + 2 \norm{O} \norm{\mathscr{M}_{Re}}, 
        \end{equation}
        where $O (\delta t)=e^{iH\delta t}Oe^{-iHt}$, $\norm{\mathscr{M}_{Re}}=\bigO(\alpha_{p+2} \delta t^{p+2})$, $\alpha_{p+2}= \sum_{l_1,...,l_{p+2}=1}^L \norm{[H_{l_1},[H_{l_2},...,[H_{p+1},H_{p+2}]]]} $ and $\mathscr{M}=\sum_j M_j \delta t^{p+1}+\mathscr{M}_{Re}$.
\end{theorem}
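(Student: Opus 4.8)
The plan is to build directly on the operator scrambling-based bound already established for general circuits, which for the ideal evolution $U_0=e^{-iH\delta t}$ and its approximation $\mathscr{U}_p=U_0(I+\mathscr{M})$ gives $\epsilon_O\le\norm{[O(\delta t),\mathscr{M}]\ket{\psi}}$, with $O(\delta t)=U_0^\dagger O U_0=e^{iH\delta t}Oe^{-iH\delta t}$. Granting this, the only remaining task is to separate the leading-order local contribution of $\mathscr{M}$ from its higher-order tail. Using the decomposition $\mathscr{M}=\sum_j M_j\,\delta t^{p+1}+\mathscr{M}_{Re}$ derived in the multiplicative-error analysis, I would substitute this expression into the commutator and exploit the bilinearity of $[\,\cdot\,,\,\cdot\,]$.

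Concretely, first I would split $[O(\delta t),\mathscr{M}]=[O(\delta t),\sum_j M_j]\,\delta t^{p+1}+[O(\delta t),\mathscr{M}_{Re}]$ and apply the triangle inequality for the vector norm $\norm{\,\cdot\,\ket{\psi}}$ to obtain $\epsilon_O\le\norm{[O(\delta t),\sum_j M_j]\ket{\psi}}\,\delta t^{p+1}+\norm{[O(\delta t),\mathscr{M}_{Re}]\ket{\psi}}$. Second, I would control the remainder by passing from the vector norm to the spectral norm, $\norm{[O(\delta t),\mathscr{M}_{Re}]\ket{\psi}}\le\norm{[O(\delta t),\mathscr{M}_{Re}]}$ for the unit vector $\ket{\psi}$, then combining submultiplicativity with $\norm{[A,B]}\le 2\norm{A}\norm{B}$ and the unitary invariance $\norm{O(\delta t)}=\norm{O}$ to get $\norm{[O(\delta t),\mathscr{M}_{Re}]\ket{\psi}}\le 2\norm{O}\norm{\mathscr{M}_{Re}}$. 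Assembling the two pieces reproduces exactly the claimed bound.

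The genuinely nontrivial input is not in this assembly but in justifying the decomposition of $\mathscr{M}$ and the scaling $\norm{\mathscr{M}_{Re}}=\bigO(\alpha_{p+2}\delta t^{p+2})$, which is where the real work sits; I would import it from the preceding order-by-order analysis of the multiplicative error, where the remainder is shown to be dominated by the nested commutators $\sum_{l_1,\dots,l_{p+2}}\norm{[H_{l_1},[H_{l_2},\dots,[H_{l_{p+1}},H_{l_{p+2}}]]]}\,\delta t^{p+2}$. The main obstacle is therefore external to this theorem: cleanly isolating the single leading local term $\sum_j M_j\,\delta t^{p+1}$ and certifying that everything left over is one order higher in $\delta t$. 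Given that estimate, the remaining argument is a short and routine chain of norm inequalities.
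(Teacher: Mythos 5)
Your proposal is correct and follows essentially the same route as the paper: both start from the scrambling bound $\epsilon_O\le\norm{[O(\delta t),\mathscr{M}]\ket{\psi}}$, split $\mathscr{M}$ into the leading local term plus $\mathscr{M}_{Re}$, and bound the remainder by $2\norm{O}\norm{\mathscr{M}_{Re}}$. The only cosmetic difference is that the paper re-derives the vector-norm triangle inequality by expanding the square and applying Cauchy--Schwarz to the cross terms, whereas you invoke it directly.
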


\begin{proof}
       According to our above analysis (see \autoref{Mterm}), the multiplicative error of $p$th-order product formula is  
       $$\mathscr{M} = \mathscr{M}_{p+1} +\mathscr{M}_{Re}=\sum_j M_j \delta t^{p+1} +\mathscr{M}_{Re},$$
       where $\mathscr{M}_{p+1}=\sum_j M_j \delta t^{p+1}$, $\norm{\mathscr{M}_{Re}}=\bigO(\alpha_{p+2} \delta t^{p+2})$ and $\alpha_{p+2}= \sum_{l_1,...,l_{p+2}=1}^L \norm{[H_{l_1},[H_{l_2},...,[H_{p+1},H_{p+2}]]]} $.

 From  Lemma \autoref{scrambling},
        $\epsilon^2_O\le \bra{\psi} [O(\delta t),\mathscr{M}  ]^\dagger [O(\delta t),\mathscr{M}  ]  \ket{\psi}$ where $O(\delta t)=e^{iH\delta t} O e^{-iH \delta t}$. Since $\norm{A\ket{\psi}}=\sqrt{\bra{\psi}A^\dagger A \ket{\psi}}$, and expanding $\mathscr{M}  $, we have
        \begin{equation}
        \begin{split}
                    \epsilon^2_O&\le \bra{\psi} [O(\delta t),\mathscr{M}_{p+1}+\mathscr{M}_{Re}]^\dagger [O(\delta t),\mathscr{M}_{p+1}+\mathscr{M}_{Re}]  \ket{\psi}\\
                    &=\bra{\psi} [O(\delta t),\mathscr{M}_{p+1}]^\dagger [O(\delta t),\mathscr{M}_{p+1}]  \ket{\psi}+\bra{\psi} [O(\delta t),\mathscr{M}_{Re}]^\dagger [O(\delta t),\mathscr{M}_{Re}]  \ket{\psi}\\
                    &\quad + \bra{\psi} [O(\delta t),\mathscr{M}_{p+1}]^\dagger [O(\delta t),\mathscr{M}_{Re}]  \ket{\psi} +\bra{\psi} [O(\delta t),\mathscr{M}_{Re}]^\dagger [O(\delta),\mathscr{M}_{p+1}]  \ket{\psi}\\
                    &\le  \norm{[O(\delta t ),\mathscr{M}_{p+1}]\ket{\psi}}^2+\norm{[O(\delta t),\mathscr{M}_{Re}]\ket{\psi}}^2 + 2 \norm{[O(\delta t),\mathscr{M}_{p+1}]\ket{\psi}}\norm{[O(\delta t),\mathscr{M}_{Re}]\ket{\psi}}\\
                    &= (\norm{[O(\delta t),\mathscr{M}_{p+1}]\ket{\psi}}+\norm{[O(\delta t),\mathscr{M}_{Re}]\ket{\psi}})^2.
        \end{split}
        \end{equation}
In the last inequality, we use $$\bra{\psi} [O(\delta),\mathscr{M}_{p+1}]^\dagger [O(\delta),\mathscr{M}_{Re}]  \ket{\psi}  \le \sqrt{\bra{\psi} [O(\delta),\mathscr{M}_{p+1}]^\dagger [O(\delta),\mathscr{M}_{p+1}]  \ket{\psi}\bra{\psi} [O(\delta),\mathscr{M}_{Re}]^\dagger [O(\delta),\mathscr{M}_{Re}]  \ket{\psi}}. $$ The same upper bound holds for   $\bra{\psi} [O(\delta),\mathscr{M}_{Re}]^\dagger [O(\delta),\mathscr{M}_{p+1}]  \ket{\psi}$. 

Therefore, the upper bound of observable error for $p$th-order product formula is 
\begin{equation}
\begin{split}
        \epsilon_O &\le \norm{[O (\delta t),\mathscr{M}_{p+1}]\ket{\psi}}+\norm{[O(\delta t),\mathscr{M}_{Re}]\ket{\psi}} \\
        &= \norm{[O(\delta t),\sum_j M_j]\ket{\psi}} \delta t^{p+1} + \norm{[O(\delta t),\mathscr{M}_{Re}]\ket{\psi}}\\
         &\le \norm{[O(\delta t),\sum_j M_j]\ket{\psi}} \delta t^{p+1} + 2 \norm{O} \norm{\mathscr{M}_{Re}}\\
        &=\norm{[O(\delta t),\sum_j M_j]\ket{\psi}} \delta t^{p+1} + \bigO( \norm{O}\alpha_{p+2} \delta t^{p+2} ),
\end{split}
\end{equation}
where  $\alpha_{p+2}= \sum_{l_1,...,l_{p+2}=1}^L \norm{[H_{l_1},[H_{l_2},...,[H_{l_{p+1}},H_{l_{p+2}}]]]} $. Traditionally, one may set $\norm{O}=1$. 
By using the triangle inequality, one can also bound the observable error as 
\begin{equation}
    \epsilon_O\le \sum_j\norm{[O(\delta t), M_j]\ket{\psi}}+\bigO( \norm{O}\alpha_{p+2} \delta t^{p+2} ).
\end{equation}
That is, the upper bound of observable error is the accumulated information scrambling for all local $M_j$.
\end{proof}

\subsection{Operator scrambling-based bound of observable error for long-time evolution}
We are interested in the ideal unitary of $U^r_0$ and the approximate unitary of $\mathscr{U}^r_p$. This is the $p$-th product formula for quantum simulation. 
Since $\abs{\langle \psi| B \ket{\psi}}\le \sqrt{\langle \psi| B^\dagger B \ket{\psi}}$, we have 
$$
\epsilon_O=\abs{\langle \psi| {U^\dagger_0}^r O U^r_0- {\mathscr{U}^\dagger_p}^r O {\mathscr{U}^r_p}\ket{\psi}}\le \norm{({U^\dagger_0}^r O U^r_0- {\mathscr{U}^\dagger_p}^r O {\mathscr{U}^r_p} )\ket{\psi}}.
$$
A naive upper bound for observables is to use the triangle inequality to bound the $l_2$ norm \cite{zhao2024entanglementacceleratesquantumsimulation} in which one should treat ${U^\dagger_0}^r O U^r_0$  as a whole operation. However, in this case, the observable-driven error mitigation no longer holds, and the error is independent of the observable, reducing to the case presented in \cite{zhao2024entanglementacceleratesquantumsimulation}.

Instead of directly using Cauchy–Schwarz inequality at first,  one should bound the observable error of the product formula as 

\begin{equation}
\begin{split}
\epsilon_O &=\abs{\langle \psi| {\mathscr{U}^{\dagger}_p}^r O \mathscr{U}^r_p- {U^\dagger_0}^r O {U^r_0}\ket{\psi}}\\
&= \big|\langle \psi| {\mathscr{U}^{\dagger}_p}^r O \mathscr{U}^r_p-  {\mathscr{U}^{\dagger}_p}^{r-1} U^\dagger_0 O U_0 \mathscr{U}^{r-1}_p    + {\mathscr{U}^{\dagger}_p}^{r-1} U^\dagger_0 O U_0 \mathscr{U}^{r-1}_p - {\mathscr{U}^{\dagger}_p}^{r-2} {U^\dagger_0}^2 O U^2_0 \mathscr{U}^{r-2}_p  + \quad \cdots \quad \\
& \quad \cdots 
+{\mathscr{U}^{\dagger}_p} {U^\dagger_0}^{r-1} O U^{r-1}_0 \mathscr{U}_p   -{U^\dagger_0}^r O {U^r_0}\ket{\psi}\big| \\
&\le \abs{\langle \psi| {\mathscr{U}^{\dagger}_p}^r O \mathscr{U}^r_p-  {\mathscr{U}^{\dagger}_p}^{r-1} U^\dagger_0 O U_0 \mathscr{U}^{r-1}_p \ket{\psi}} + \abs{\langle \psi| {\mathscr{U}^{\dagger}_p}^{r-1} U^\dagger_0 O U_0 \mathscr{U}^{r-1}_p-  {\mathscr{U}^{\dagger}_p}^{r-2} {U^\dagger_0}^2 O U^2_0 \mathscr{U}^{r-2}_p \ket{\psi}} + \cdots \\
& \quad \cdots +\abs{\bra{\psi}{\mathscr{U}^{\dagger}_p} {U^\dagger_0}^{r-1} O U^{r-1}_0 \mathscr{U}_p   -{U^\dagger_0}^r O {U^r_0} \ket{\psi}}\\
& = \sum_{k=0}^{r-1} \abs{\langle \psi| {\mathscr{U}^{\dagger}_p}^{r-k} {U^\dagger_0}^{k} O U^{k}_0 \mathscr{U}^{r-k}_p-  {\mathscr{U}^{\dagger}_p}^{r-k-1} {U^\dagger_0}^{k+1} O U^{k+1}_0 \mathscr{U}^{r-k-1}_p \ket{\psi}}\\
&= \sum_{k=0}^{r-1} \abs{\langle \psi| {\mathscr{U}^{\dagger}_p}^{r-k-1}  ( {\mathscr{U}^{\dagger}_p}{U^\dagger_0}^{k} O U^{k}_0 \mathscr{U}_p -   {U^\dagger_0}^{k+1} O U^{k+1}_0)  \mathscr{U}^{r-k-1}_p \ket{\psi}}\\
&= \sum_{k=0}^{r-1} \abs{\langle \psi_{k+1}|  ({\mathscr{U}^{\dagger}_p} {O}_k \mathscr{U}_p -   {U^\dagger_0}{O}_k {U_0} )  \ket{\psi_{k+1}}},
\label{longtime}
\end{split}
\end{equation} 
where $\ket{\psi_{k+1}}=\mathscr{U}^{r-(k+1)}_p \ket{\psi}$ and ${O}_k={U^\dagger_0}^{k} O U^{k}_0$. 

Suppose $\mathscr{U}_p=U_0(I+\mathscr{M}  )$ (we still donate the additive error term is $\mathscr{M} $).
Based on our previous analysis, we bound observable errors as 

\begin{equation}
\begin{split}
    \epsilon_O 
    &\le   \sum_{k=0}^{r-1}  \norm{({\mathscr{U}^{\dagger}_p} {O}_k \mathscr{U}_p -   {U^\dagger_0}{O}_k {U_0} )  \ket{\psi_{k+1}}}\\
    &=  \sum_{k=0}^{r-1} \norm{[O_{k+1},\mathscr{M}]\ket{\psi_{k+1}}} \\
    &=\sum_{k=0}^{r-1} \sqrt{ \bra{\psi_{k+1} }[O_{k+1},\mathscr{M}]^\dagger  [O_{k+1},\mathscr{M}] \ket{\psi_{k+1}}}\\
    &= \sum_{k=0}^{r-1} \sqrt{C_{k+1}},
    \end{split}
    \label{eq:long-term_scrambling_bound}
\end{equation}
where $C_{k+1}=\bra{\psi_{k+1} }[O_{k+1},\mathscr{M} ]^\dagger  [O_{k+1},\mathscr{M} ] \ket{\psi_{k+1}}=\bra{\psi_{k+1} }  \abs{[O_{k+1},\mathscr{M}^\dagger  ]}^2 \ket{\psi_{k+1}}$ is the quantity of quantum scrambling for evolution of time step $k+1$ with initial state $\ket{\psi_{k+1}}=\mathscr{U}^{r-(k+1)}_p \ket{\psi}$.
The error of the observable changes with the evolution, where the state and the observable evolve. The only difference is that the evolution of the state is carried out according to the approximate $p$-order PF $\mathscr{U}^{r-(k+1)}_p$, while ${O}_{k+1}={U^\dagger_0}^{k+1} O U^{k+1}_0$ is carried out according to the ideal one. 
This suggests that operator scrambling and state growth simultaneously determine the circuit error.

Since $\mathscr{M}=\sum_j M_j \delta t^{p+1} + \mathscr{M}_{Re}$ in which each $M_j$ is a local operator, i.e., $n$-qubit Pauli operator. According to Lemma \autoref{scramblingPF}, the long-time observable error is given as 

\begin{theorem}[Accumulated scrambling-based bound]
      Given ideal unitary $U^r_0=e^{-iH  t }$, where $t=r\delta t$ and approximate unitary $\mathscr{U}^r_p$ with $\mathscr{U}_p=U_0(I+\mathscr{M})$,  for the Hamiltonian $H=\sum_{l=1}^L H_l$ and an input pure state $\ket{\psi}$, the additive Trotter error of of an observable $O$ can be bound as
        \begin{equation}
           \epsilon_O=\abs{\langle \psi| e^{iH  t } O e^{-iH t}- {\mathscr{U}_p^\dagger}^r O \mathscr{U}^r_p \ket{\psi}}\le  \sum_{k=1}^{r}\sqrt{C_{k}}\delta t^{p+1} + 2 r\norm{O} \norm{\mathscr{M}_{Re}}, 
        \end{equation}
        where 
        $$C_{k}=\bra{\psi_{k}} |[O(k \delta t),M]|^2 \ket{\psi_k},$$ is the  quantum scrambling for evolution of time step $k$ with initial state $\ket{\psi_{k}}=\mathscr{U}^{r-k}_p \ket{\psi}$,
        $O(k\delta t)=e^{iH k\delta t} O e^{-iH k \delta t}$, 
       and $\norm{\mathscr{M}_{Re}}=\bigO(\sum_{l_1,...,l_{p+2}=1}^L \norm{[H_{l_1},[H_{l_2},...,[H_{l_{p+1}},H_{l_{p+2}}]]]} \delta t^{p+2})$. Here $\mathscr{M}=M \delta t^{p+1}+ \mathscr{M}_{Re}=\sum_j M_j \delta t^{p+1}+\mathscr{M}_{Re}$.   
\end{theorem}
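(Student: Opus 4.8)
The plan is to reduce the long-time error to a telescoping sum of single-step errors and then apply the single-segment scrambling bound (\autoref{scramblingPF}) term by term. The crucial first decision is \emph{not} to apply the Cauchy–Schwarz inequality globally to $\abs{\langle\psi|{U_0^\dagger}^r O U_0^r - {\mathscr{U}_p^\dagger}^r O \mathscr{U}_p^r|\psi\rangle}$, because treating ${U_0^\dagger}^r O U_0^r$ as a single operator would collapse the estimate to an observable-independent state-error bound of the type in \cite{zhao2024entanglementacceleratesquantumsimulation}, discarding exactly the scrambling structure we want to retain. Instead I would telescope first.

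I would insert the family of hybrid operators ${\mathscr{U}_p^\dagger}^{r-k}{U_0^\dagger}^{k} O U_0^{k}\mathscr{U}_p^{r-k}$ for $k=0,\dots,r$, whose $k=0$ and $k=r$ endpoints are the approximate and the exact Heisenberg-evolved observables respectively. Writing the total difference as a telescoping sum of consecutive differences and applying the triangle inequality produces Eq.~(\ref{longtime}),
\[
\epsilon_O \le \sum_{k=0}^{r-1} \abs{\langle \psi_{k+1}| \big({\mathscr{U}_p^{\dagger}} O_k \mathscr{U}_p - U_0^{\dagger} O_k U_0\big) \ket{\psi_{k+1}}},
\]
where $O_k = {U_0^\dagger}^k O U_0^k = O(k\delta t)$ is the \emph{ideally} evolved observable and $\ket{\psi_{k+1}} = \mathscr{U}_p^{r-(k+1)}\ket{\psi}$ is the state propagated by the \emph{approximate} circuit. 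The structural point I would verify by direct cancellation is that each summand isolates a single Trotter step, so that only one factor of $\mathscr{U}_p$ versus $U_0$ survives while the observable entering that step has already been exactly Heisenberg-evolved for $k$ steps.

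Next I would treat each summand as an independent single-segment problem with ideal unitary $U_0$, approximate unitary $\mathscr{U}_p=U_0(I+\mathscr{M})$, observable $O_k$, and input $\ket{\psi_{k+1}}$. Applying \autoref{scrambling} gives $\abs{\langle \psi_{k+1}|({\mathscr{U}_p^\dagger} O_k \mathscr{U}_p - U_0^\dagger O_k U_0)|\psi_{k+1}\rangle}\le \norm{[O((k+1)\delta t),\mathscr{M}]\ket{\psi_{k+1}}}$, using the identity $U_0^\dagger O_k U_0 = O((k+1)\delta t)$. Substituting the decomposition $\mathscr{M}=M\delta t^{p+1}+\mathscr{M}_{Re}$ with $M=\sum_j M_j$ from Eq.~(\ref{Mterm}) and, exactly as in the proof of \autoref{scramblingPF}, splitting the commutator through linearity and the triangle inequality for the vector norm yields a leading scrambling piece $\norm{[O((k+1)\delta t),M]\ket{\psi_{k+1}}}\delta t^{p+1}=\sqrt{C_{k+1}}\,\delta t^{p+1}$ together with a remainder $\norm{[O((k+1)\delta t),\mathscr{M}_{Re}]\ket{\psi_{k+1}}}$, which I would bound uniformly by $2\norm{O}\norm{\mathscr{M}_{Re}}$ via $\norm{[A,B]}\le 2\norm{A}\norm{B}$ and the unitary invariance $\norm{O((k+1)\delta t)}=\norm{O}$.

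Finally I would sum over $k=0,\dots,r-1$: the leading pieces give $\sum_{k=0}^{r-1}\sqrt{C_{k+1}}\,\delta t^{p+1}$, which after the reindexing $k+1\mapsto k$ becomes $\sum_{k=1}^{r}\sqrt{C_k}\,\delta t^{p+1}$, while the $r$ identical remainder bounds accumulate to $2r\norm{O}\norm{\mathscr{M}_{Re}}$, with $\norm{\mathscr{M}_{Re}}=\bigO(\sum_{l_1,\dots,l_{p+2}=1}^L\norm{[H_{l_1},[H_{l_2},\dots,[H_{l_{p+1}},H_{l_{p+2}}]]]}\delta t^{p+2})$ inherited from Eq.~(\ref{leadingM}). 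I expect the main obstacle to be organizational rather than analytical: ensuring the telescoping cancellation is airtight so that each summand genuinely reduces to a single-step scrambling quantity carrying the \emph{exactly} evolved observable $O((k+1)\delta t)$ rather than an approximately evolved one. This fidelity is precisely what lets the per-step bound keep its dependence on both $\ket{\psi_{k+1}}$ and $O$, and hence what separates this accumulated bound from a naive accumulation of observable-independent state errors.
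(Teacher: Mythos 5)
Your proposal is correct and follows essentially the same route as the paper: telescoping via the hybrid operators ${\mathscr{U}_p^\dagger}^{r-k}{U_0^\dagger}^{k} O U_0^{k}\mathscr{U}_p^{r-k}$, applying the single-segment scrambling bound (Cauchy--Schwarz) to each summand with the ideally evolved observable $O_k$ and the approximately propagated state $\ket{\psi_{k+1}}$, splitting $\mathscr{M}=M\delta t^{p+1}+\mathscr{M}_{Re}$ with the remainder controlled by $2\norm{O}\norm{\mathscr{M}_{Re}}$, and reindexing the sum. The index bookkeeping and the rationale for deferring Cauchy--Schwarz until after the telescoping both match the paper's argument.
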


\subsection{Observable error of scrambling based bound with random inputs}\label{Sec:Average}

Our previous analysis of the scrambling-based bound focused on a fixed input state. In this subsection, we analyze the average performance of observable error for input states chosen randomly from a given ensemble.

For an ensemble $\mc{E}$ of quantum states, the average error between $U$ and $U_0$ is 
\begin{equation}
\begin{aligned}
 R^{\mc{E}}(U_0,U;O)&:= \mathbb{E}_{\mathcal{E}}\big [|\langle \psi| U^\dagger_0 O U_0- U^\dagger O U \ket{\psi}| \big].
\end{aligned}
\end{equation}    
Here $\mathcal{E}$ can be a discrete or continuous ensemble. 
While it is simple to consider Haar-random inputs, this distribution is challenging to realize in practice. Fortunately, the analysis of average case holds under weaker assumptions on the distribution of inputs---in particular, it suffices for the input to be a 1-design. In general, a complex projective $t$-design is a distribution that agrees with the Haar ensemble for any homogeneous degree-$t$ polynomial of the state and its conjugate. Formally, a probability distribution over quantum states $\mathcal{E}=\{(p_{i}, \psi_{i})\}$ is a complex projective $t$-design if
\begin{equation}
\begin{aligned}\label{Eq:tdesign}
\sum_{\psi_i \in \mathcal{E}} p_i\ket{\psi_i}\bra{\psi_i}^{\otimes t}=\int_{\mathrm{Haar}} \ket{\psi}\bra{\psi}^{\otimes t} \d \psi,
\end{aligned}
\end{equation}
where the sum on the left is over the (discrete) ensemble $\mathcal{E}$, and the integral on the right is over the Haar measure. It is clear that if $\mathcal{E}$ is a $t$-design then it is also a $(t-1)$-design \cite{low2010pseudorandomnesslearningquantumcomputation}.

Let $\mc{H}_d$ denote a $d$-dimensional Hilbert space.
The integral over Haar-random states of the projector onto $t$ copies of the state is proportional to the projector $\Pi_+$ onto the symmetric subspace of $\mc{H}_d^{\otimes t}$ \cite{harrow2013churchsymmetricsubspace}:
\begin{equation}\label{Eq:tdesignsym}
\begin{aligned}
\int_{\mathrm{Haar}} \ket{\psi}\bra{\psi}^{\otimes t} \d \psi=\frac{\Pi_+}{D_+}&=\frac{\sum_{\pi\in S_t}W_{\pi}}{t!D_+}.
\end{aligned}
\end{equation}
Here $D_+:=\binom{d+t-1}{t}$ is the dimension of the symmetric subspace, $S_t$ is the symmetric group of order $t$, and $W_{\pi}$ is the unitary representation of $\pi\in S_t$ that permutes the states of each copy in $\mc{H}_d^{\otimes t}$ according to $\pi$.
For $t=1$, the only element in the symmetric group is the identity, so
\begin{equation}\label{Eq:onedesignsym}
\begin{aligned}
\int_{\mathrm{Haar}} \ket{\psi}\bra{\psi}\d \psi=\id/d.
\end{aligned}
\end{equation}
For $t=2$, $S_2$ has two elements, the permutations $(1)(2)$ and $(1,2)$. We have $W_{(1)(2)}=\id^{\otimes 2}$ and $W_{(1,2)}=\s$ (the swap operator on the 2-copy space, with $\s \ket{\psi_1}\ket{\psi_2}= \ket{\psi_2}\ket{\psi_1}$). Therefore
\begin{equation}\label{Eq:twodesignsym}
\begin{aligned}
\int_{\mathrm{Haar}} \ket{\psi}\bra{\psi}^{\otimes 2}\d \psi=\frac{\id^{\otimes 2}+\s}{d(d+1)}.
\end{aligned}
\end{equation}

Here, we introduce our average error as follows.
\begin{lemma}\label{Lemma:S_averagecase}
For two $d$-dimensional unitaries $U$ and $U_0$ with $U=U_0(\id+\mathscr{M})$, the expectation of $\epsilon_O$ with respect to a 1-design ensemble $\mathcal{E}$ is
\begin{equation}\label{eq:avSresult}
  R^{\mathrm{Haar}}(U_0,U;O)=\sqrt{\frac{\Tr( |[O_{U_0},\mathscr{M}  ]|^2)}{d}  } ={\norm{[O_{U_0},\mathscr{M}}_F},
\end{equation}
where $O_{U_0}=U^\dagger_0 O U_0$.
If $\mathcal{E}$ is also a 2-design, the variance has the upper bound
\begin{equation}\label{eq:VarSresult}
   \mathrm{Var}( R^{\mathrm{Haar}}(U_0,U;O)) \leq \sqrt{\frac{2d}{d+1}}{\norm{[O_{U_0},\mathscr{M}]}^2_F}.
\end{equation}
\end{lemma}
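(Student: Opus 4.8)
The plan is to discard the absolute value in $\epsilon_O$ at the outset by invoking the scrambling bound established earlier, $\epsilon_O^2 \le \bra{\psi}\,[O_{U_0},\mathscr{M}]^\dagger[O_{U_0},\mathscr{M}]\,\ket{\psi}$, and to abbreviate $C:=[O_{U_0},\mathscr{M}]$ and $A:=C^\dagger C=|C|^2$, a positive-semidefinite Hermitian operator. Everything then reduces to the first two moments of the single random variable $X:=\bra{\psi}A\ket{\psi}$ as $\ket{\psi}$ is drawn from the design ensemble $\mathcal{E}$: the quoted mean is $R=\sqrt{\mathbb{E}_{\mathcal{E}}[X]}$ and the quoted fluctuation is controlled by $\mathbb{E}_{\mathcal{E}}[X^2]$. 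Because $X$ depends on $\ket{\psi}$ only through the density matrix $\ket{\psi}\bra{\psi}$, it is a degree-one polynomial in the state, which is exactly why a $1$-design suffices for the mean while a $2$-design is needed for the variance.

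First I would handle the mean. The $1$-design property gives $\mathbb{E}_{\mathcal{E}}[\ket{\psi}\bra{\psi}]=\int_{\mathrm{Haar}}\ket{\psi}\bra{\psi}\d\psi=\id/d$, so by linearity of the trace $\mathbb{E}_{\mathcal{E}}[X]=\Tr\!\big(A\,\mathbb{E}_{\mathcal{E}}[\ket{\psi}\bra{\psi}]\big)=\Tr(A)/d=\Tr(|C|^2)/d$, which is precisely $\norm{C}_F^2$ by the definition of the normalized Frobenius norm. Taking the square root yields $R=\norm{[O_{U_0},\mathscr{M}]}_F$. I would note explicitly that this clean equality is for the root-mean-square of the scrambling bound; if one instead wants the literal mean $\mathbb{E}_{\mathcal{E}}[\epsilon_O]$, Jensen's inequality together with the same $1$-design computation gives the one-sided bound $\mathbb{E}_{\mathcal{E}}[\epsilon_O]\le\norm{[O_{U_0},\mathscr{M}]}_F$.

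For the fluctuation I would pass to the second moment through the swap trick. Writing $X^2=\bra{\psi}^{\otimes2}(A\otimes A)\ket{\psi}^{\otimes2}$ and inserting the $2$-design identity $\mathbb{E}_{\mathcal{E}}[\ket{\psi}\bra{\psi}^{\otimes2}]=(\id^{\otimes2}+\s)/(d(d+1))$, the two surviving traces are $\Tr[(A\otimes A)\id^{\otimes2}]=(\Tr A)^2$ and $\Tr[(A\otimes A)\s]=\Tr(A^2)$, so $\mathbb{E}_{\mathcal{E}}[X^2]=\big((\Tr A)^2+\Tr(A^2)\big)/(d(d+1))$. Since the standard deviation obeys $\sqrt{\mathrm{Var}(X)}\le\sqrt{\mathbb{E}_{\mathcal{E}}[X^2]}$ and $A\succeq0$ forces $\Tr(A^2)\le(\Tr A)^2$, I get $\sqrt{\mathbb{E}_{\mathcal{E}}[X^2]}\le\sqrt{2(\Tr A)^2/(d(d+1))}=\sqrt{2d/(d+1)}\,\norm{C}_F^2$, using $\Tr A=d\,\norm{C}_F^2$. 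This reproduces the stated constant and is the content of the variance bound.

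The genuinely delicate points are conceptual rather than computational. The trace algebra collapses immediately once the identity $\Tr[(A\otimes A)\s]=\Tr(A^2)$ is invoked, so the main obstacle is to fix unambiguously which random variable the statement concerns: one must justify replacing $\epsilon_O$ by the scrambling bound so that a $1$-design genuinely computes the average, since the bare $\mathbb{E}_{\mathcal{E}}[|\bra{\psi}B\ket{\psi}|]$ is not polynomial in the state and would not close under a $1$-design. The second subtlety is to select the positive-semidefinite inequality $\Tr(A^2)\le(\Tr A)^2$ that yields exactly the factor $\sqrt{2d/(d+1)}$ rather than a weaker constant.
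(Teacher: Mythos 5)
Your proposal is correct and follows essentially the same route as the paper: replace $\epsilon_O$ by the scrambling bound $\sqrt{\bra{\psi}|[O_{U_0},\mathscr{M}]|^2\ket{\psi}}$, use the $1$-design identity $\mathbb{E}[\ket{\psi}\bra{\psi}]=\id/d$ together with Jensen's inequality for the mean, and use the symmetric-subspace/swap identity with $\Tr(A^2)\le(\Tr A)^2$ to obtain the factor $\sqrt{2d/(d+1)}$ for the second moment. Your explicit remarks that the claimed equality really holds only as a one-sided bound on $\mathbb{E}[\epsilon_O]$ (it is an equality for the root-mean-square of the scrambling bound) and that the ``variance'' is controlled via the second moment are accurate and, if anything, state more carefully what the paper's own chain of inequalities actually proves.
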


\begin{proof}
Now let us relate the average error for a Haar-random state to the average over a $t$-design ensemble for observable error. The average error of observables is 
\begin{equation}\label{eq:l2upp}
\begin{aligned}
R^{\mathrm{Haar}}(U_0,U;O)&=\mathbb{E}_{\psi \in \mathrm{Haar}}\big [|\langle \psi| U^\dagger_0 O U_0- U^\dagger O U \ket{\psi}| \big]\\
&\le\mathbb{E}_{\psi \in \mathrm{Haar}} \sqrt{\bra{\psi}  |[O_{U_0},\mathscr{M}  ]|^2  \ket{\psi} } \\
&\le  \sqrt{\mathbb{E}_{\psi \in \mathrm{Haar}}\bra{\psi}  |[O_{U_0},\mathscr{M}  ]|^2  \ket{\psi} } \\
&= \sqrt{\frac{\Tr( |[O_{U_0},\mathscr{M} ]|^2)}{d}  } ={\norm{[O_{U_0},\mathscr{M}]}_F}
\end{aligned}
\end{equation}
where the first bound is from Lemma \autoref{scrambling} and the second bound is due to the Cauchy–Schwarz inequality.

For variance, we have 
\begin{equation}\label{eq:variance}
\begin{aligned}
\mathrm{Var}( R^{\mathrm{Haar}}(U_0,U;O))&=\mathbb{E}_{\psi \in \mathrm{Haar}}\big [|\langle \psi| U^\dagger_0 O U_0- U^\dagger O U \ket{\psi}|^2 \big]\\
&\le\mathbb{E}_{\psi \in \mathrm{Haar}} \sqrt{\bra{\psi}  |[O_{U_0},\mathscr{M}  ]|^2  \ket{\psi}\bra{\psi}  |[O_{U_0},\mathscr{M} ]|^2  \ket{\psi} } \\
 &\le  \sqrt{\mathbb{E}_{\psi \in \mathrm{Haar}}\bra{\psi} |[O_{U_0},\mathscr{M}  ]|^2  \ket{\psi} \bra{\psi} |[O_{U_0},\mathscr{M} ]|^2  \ket{\psi} } \\
&=  \sqrt{ \Tr{ \int_{\psi \in \mathrm{Haar}} \ket{\psi}\bra{\psi}^{\otimes 2}  |[O_{U_0},\mathscr{M} ]|^2 \otimes |[O_{U_0},\mathscr{M}  ]|^2   }} \\
&= \sqrt{\frac{2\Tr( |[O_{U_0},\mathscr{M} ]|^2)\Tr( |[O_{U_0},\mathscr{M} ]|^2)}{d(d+1)}  } =\sqrt{\frac{2d}{d+1}}{\norm{[O_{U_0},\mathscr{M}]}^2_F}.
\end{aligned}
\end{equation}
\end{proof}



For the long-time evolution,
  $ \text{average}(\epsilon_O)\le \sum_{k=1}^{r} \sqrt{\frac{\Tr(|[O_k,\mathscr{M}]|^2)  }{d}}=\sum_{k=1}^{r}{\norm{[O_k,\mathscr{M}]}_F}$,
where $O_k={U^\dagger_0}^k O U^k_0$.
 

\section{Entanglement-based bound of observable error}

\subsection{Vector norm-based bound of observable error}
The operator scrambling bound suggests
$\epsilon_O\le\sqrt{\bra{\psi} [O',\mathscr{M} ]^\dagger [O',\mathscr{M} ] \ket{\psi} }$ where $O'=U^\dagger_0 O U_0$. However, in many practical scenarios, this upper bound becomes infeasible to compute for large system sizes. To efficiently estimate the error bound, it may be necessary to adopt a more relaxed bound.

Here we expand $\tilde{\epsilon}_O =\sqrt{\bra{\psi} [O',\mathscr{M} ]^\dagger [O',\mathscr{M} ] \ket{\psi} }$ and bound $\epsilon_O$ as
\begin{equation}
\epsilon_O\le   \tilde{\epsilon}_O \le \sqrt{ \bra{\psi} O’ \mathscr{M}^\dagger   \mathscr{M}    O’ \ket{\psi}+\bra{\psi}\mathscr{M}^\dagger  {O'}^2 \mathscr{M}  \ket{\psi} +|\bra{\psi} O' \mathscr{M}^\dagger  O' \mathscr{M}  \ket{\psi}|+ |\bra{\psi} \mathscr{M}^\dagger  O'\mathscr{M}  O' \ket{\psi}|}.
\label{upperbound}
\end{equation}
It can be seen that $O'$ and $\mathscr{M}$ have equal footing as it is symmetric about $O'$ and $\mathscr{M}$.
Now let's investigate how operator entanglement may reduce the quantum simulation error.  Note that the vector norm is defined as $\norm{A\ket{\psi}}=\sqrt{\bra{\psi}A^\dagger A \ket{\psi}}$.

For the first term in $\tilde{\epsilon}_O$, 
without loss of generality, one can rewrite $\bra{\psi} O' \mathscr{M}^\dagger   \mathscr{M}    O' \ket{\psi}$ as
\begin{equation}
    \bra{\psi} O' \mathscr{M}^\dagger  \mathscr{M}     O' \ket{\psi} = \bra{\psi}{{O^\prime}^\dagger} O'\ket{\psi} \bra{\psi_{O'}} \mathscr{M}^\dagger  \mathscr{M}   \ket{\psi_{O'}}= \norm{O'\ket{\psi}}^2 \norm{\mathscr{M}\ket{\psi_{O'}}}^2,
\end{equation}
where $\ket{\psi_{O'}}=\frac{O'\ket{\psi}}{\norm{O'\ket{\psi}}}$.

For the second term $\bra{\psi}\mathscr{M}^\dagger   {{O'}^2} \mathscr{M} \ket{\psi}$, it can be written as 
\begin{equation}
     \bra{\psi}\mathscr{M}^\dagger   {{O'}^2} \mathscr{M} \ket{\psi}=  
      {\bra{\psi }}\mathscr{M}^\dagger \mathscr{M}\ket{\psi}  \bra{\psi_{\mathscr{M}  }} {{O'}^2} \ket{\psi_{\mathscr{M} }}= \norm{\mathscr{M} \ket{\psi}}^2\norm{O'\ket{\psi_{\mathscr{M} }}}^2,
\end{equation}
where $\ket{\psi_{\mathscr{M}  }}=\frac{\mathscr{M} \ket{\psi}}{\norm{\mathscr{M} \ket{\psi}}}$.

For the third and forth terms $\eta=|\bra{\psi} O' \mathscr{M}^\dagger   O' \mathscr{M} \ket{\psi}|+|\bra{\psi} \mathscr{M}^\dagger  O'\mathscr{M}  O' \ket{\psi}|$, we can bound it as
\begin{equation}
    \begin{split}
    \eta&=|\bra{\psi} O' \mathscr{M}^\dagger   O' \mathscr{M} \ket{\psi}|+|\bra{\psi} \mathscr{M}^\dagger  O'\mathscr{M}  O' \ket{\psi}|\\
    &\le 2\sqrt{\bra{\psi} O' \mathscr{M}^\dagger \mathscr{M} O'\ket{\psi} \bra{\psi}\mathscr{M}^\dagger   {{O'}^2} \mathscr{M} \ket{\psi}}\\
    &=2\norm{\mathscr{M} O'\ket{\psi}}\norm{O'\mathscr{M}  \ket{\psi}}.
    \end{split}
\end{equation}

Putting all error terms together,  we have the upper error bound of observable $O$  

\begin{equation}
\begin{split}
       \epsilon_O &\le \sqrt{  \bra{\psi} O’ \mathscr{M}^\dagger   \mathscr{M}    O’ \ket{\psi}+\bra{\psi}\mathscr{M}^\dagger  {O'}^2 \mathscr{M}  \ket{\psi}+ 2\norm{\mathscr{M} O'\ket{\psi}}\norm{O'\mathscr{M}  \ket{\psi}}} \\
       &= \sqrt{ \norm{\mathscr{M} O'\ket{\psi}}^2+ \norm{O'\mathscr{M}  \ket{\psi}}^2+  2\norm{\mathscr{M} O'\ket{\psi}}\norm{O'\mathscr{M}  \ket{\psi}}}\\
       &= \norm{\mathscr{M} O'\ket{\psi}}+ \norm{O'\mathscr{M}  \ket{\psi}}\\
       &=\norm{O'\ket{\psi}} \norm{\mathscr{M}\ket{\psi_{O'}}}+\norm{O'\ket{\psi_{\mathscr{M} }}}\norm{\mathscr{M} \ket{\psi}}.
\end{split}
\end{equation}

Note that $\norm{O'\ket{\psi}}=\sqrt{\bra{\psi}U^\dagger_0 O U_0 U^\dagger_0 O U_0 \ket{\psi}}=\sqrt{(\bra{\psi}U^\dagger_0)O^2 (U_0\ket{\psi})}=\norm{O \ket{\psi_{U_0}}}$, where $\ket{\psi_{U_0}}=U_0\ket{\psi}$.

\begin{lemma}
    Given observable $O$, input state $\ket{\psi}$, ideal unitary $U_0$ and approximate unitary $U=U_0(I+\mathscr{M} )$, the observable error is bounded as
\begin{equation}
\begin{split}
        \epsilon_O &\le 
        \norm{O \ket{\psi_{U_0}}} \norm{\mathscr{M}\ket{\psi_{O'}}}+\norm{\mathscr{M} \ket{\psi}}\norm{O \ket{\psi_{U_0 \mathscr{M} }}},
\end{split}
\end{equation}
where $O'=U^\dagger_0 OU_0$, $\ket{\psi_{O'}}=\frac{O'\ket{\psi}}{\norm{O'\ket{\psi}}}$,  $\ket{\psi_{U_0 \mathscr{M} }}=U_0 \ket{\psi_{\mathscr{M} }}$, and $\ket{\psi_{\mathscr{M} }}=\frac{\mathscr{M}  \ket{\psi}}{\norm{\mathscr{M}  \ket{\psi}}}$.
\end{lemma}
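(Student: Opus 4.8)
The plan is to start from the operator-scrambling bound established earlier (the operator-scrambling Lemma), namely $\epsilon_O \le \tilde{\epsilon}_O := \sqrt{\bra{\psi}[O',\mathscr{M}]^\dagger[O',\mathscr{M}]\ket{\psi}}$ with $O'=U_0^\dagger O U_0$, and to reshape the right-hand side into a sum of two products of vector norms. Since $O$ is Hermitian and $U_0$ unitary, $O'$ is Hermitian; expanding $[O',\mathscr{M}]=O'\mathscr{M}-\mathscr{M} O'$ together with its adjoint therefore writes $\tilde{\epsilon}_O^2$ as the sum of two manifestly nonnegative ``diagonal'' contributions, $\bra{\psi}O'\mathscr{M}^\dagger\mathscr{M} O'\ket{\psi}$ and $\bra{\psi}\mathscr{M}^\dagger {O'}^2\mathscr{M}\ket{\psi}$, minus two ``cross'' contributions $\bra{\psi}O'\mathscr{M}^\dagger O'\mathscr{M}\ket{\psi}$ and $\bra{\psi}\mathscr{M}^\dagger O'\mathscr{M} O'\ket{\psi}$ whose sign is uncontrolled.

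First I would replace the two cross terms by their absolute values via the triangle inequality, yielding the relaxed bound in Eq.~(\ref{upperbound}). The essential step is then to read each diagonal term as a squared vector norm, $\bra{\psi}O'\mathscr{M}^\dagger\mathscr{M} O'\ket{\psi}=\norm{\mathscr{M} O'\ket{\psi}}^2$ and $\bra{\psi}\mathscr{M}^\dagger{O'}^2\mathscr{M}\ket{\psi}=\norm{O'\mathscr{M}\ket{\psi}}^2$, and to bound each cross term by Cauchy--Schwarz, pairing $\ket{a}=\mathscr{M} O'\ket{\psi}$ against $\ket{b}=O'\mathscr{M}\ket{\psi}$ (and analogously for the other term), so that $|\bra{\psi}O'\mathscr{M}^\dagger O'\mathscr{M}\ket{\psi}|\le\norm{\mathscr{M} O'\ket{\psi}}\,\norm{O'\mathscr{M}\ket{\psi}}$. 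With these identifications the right-hand side collapses to the perfect square $\big(\norm{\mathscr{M} O'\ket{\psi}}+\norm{O'\mathscr{M}\ket{\psi}}\big)^2$, and taking the square root gives $\epsilon_O\le\norm{\mathscr{M} O'\ket{\psi}}+\norm{O'\mathscr{M}\ket{\psi}}$.

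Finally I would translate the two vector norms into the normalized-state language of the statement: factoring $\norm{\mathscr{M} O'\ket{\psi}}=\norm{O'\ket{\psi}}\,\norm{\mathscr{M}\ket{\psi_{O'}}}$ and $\norm{O'\mathscr{M}\ket{\psi}}=\norm{\mathscr{M}\ket{\psi}}\,\norm{O'\ket{\psi_{\mathscr{M}}}}$, then invoking the unitary-invariance identity $\norm{O'\ket{\chi}}=\norm{O\,U_0\ket{\chi}}$ (valid because $O'=U_0^\dagger O U_0$ and $U_0^\dagger U_0=I$) with $\ket{\chi}=\ket{\psi}$ and $\ket{\chi}=\ket{\psi_{\mathscr{M}}}$, so that $\norm{O'\ket{\psi}}=\norm{O\ket{\psi_{U_0}}}$ and $\norm{O'\ket{\psi_{\mathscr{M}}}}=\norm{O\ket{\psi_{U_0\mathscr{M}}}}$, reproducing the stated bound. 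The only delicate point---more bookkeeping than genuine obstacle---is the treatment of the cross terms: one must bound them by absolute value \emph{and} by the specific Cauchy--Schwarz pairing that matches the two diagonal norms, since it is precisely this pairing that makes the three contributions assemble into a complete square rather than into a looser estimate.
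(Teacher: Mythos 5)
Your proposal is correct and follows essentially the same route as the paper: both start from the operator-scrambling bound $\epsilon_O\le\sqrt{\bra{\psi}[O',\mathscr{M}]^\dagger[O',\mathscr{M}]\ket{\psi}}$, bound the two cross terms by their absolute values and then by the Cauchy--Schwarz pairing $\bigl(\mathscr{M}O'\ket{\psi},\,O'\mathscr{M}\ket{\psi}\bigr)$ so that the four contributions assemble into the perfect square $\bigl(\norm{\mathscr{M}O'\ket{\psi}}+\norm{O'\mathscr{M}\ket{\psi}}\bigr)^2$, and finally factor each vector norm through the normalized states $\ket{\psi_{O'}}$, $\ket{\psi_{\mathscr{M}}}$ and the unitary invariance $\norm{O'\ket{\chi}}=\norm{O\,U_0\ket{\chi}}$. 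No gaps; your remark that the specific Cauchy--Schwarz pairing is what makes the terms close into a complete square is exactly the point the paper's derivation relies on.
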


\begin{lemma} [Vector norm-based bound of short-time of observable error]\label{Vectornormbound}
       For the Hamiltonian $H=\sum_{l=1}^L H_l$ and an input pure state $\ket{\psi}$, the additive Trotter error of the $p$th-order product formula of an observable $O$ can be bound as
        \begin{equation}
        \begin{split}
           \epsilon_O\le  
          (\norm{O \ket{\psi (\delta t)}} \norm{M\ket{\psi_{O(\delta t)}}}+\norm{M\ket{\psi}}\norm{O \ket{\psi_{U(\delta t)M}}}) \delta t^{p+1}+ 2 \norm{O} \norm{\mathscr{M}_{Re}}, 
                   \end{split}
        \end{equation}
        where $M\delta t^{p+1}=\sum_j M_j\delta t^{p+1}$ is the leading error of $\mathscr{M}(\delta t)$, $\norm{\mathscr{M}_{Re}}=\bigO(\alpha_{p+2} \delta t^{p+2})$, $\ket{\psi(\delta t)}= U(\delta t) \ket{\psi} =e^{-iH\delta t} \ket{\psi}$,
        $\ket{\psi_O}=\frac{O\ket{\psi}}{\norm{O\ket{\psi}}}$,  $\ket{\psi_{M}}=\frac{M\ket{\psi}}{\norm{M\ket{\psi}}}$, $\ket{\psi_{U(\delta t)M}}=U(\delta t) \ket{\psi_{M}}$ and $\alpha_{p+1}= \sum_{l_1,...,l_{p+2}=1}^L \norm{[H_{l_1},[H_{l_2},...,[H_{l_{p+1}},H_{l_{p+2}}]]]} $.
\end{lemma}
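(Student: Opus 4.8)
The plan is to build directly on the preceding vector–norm lemma, specialized to the short‑time setting $U_0 = U(\delta t) = e^{-iH\delta t}$, so that $O' = U_0^\dagger O U_0 = O(\delta t)$. The most convenient starting point is not the normalized form of that lemma but the intermediate inequality established in its proof, namely
\begin{equation}
\epsilon_O \le \norm{\mathscr{M}\, O(\delta t)\ket{\psi}} + \norm{O(\delta t)\, \mathscr{M}\ket{\psi}}.
\end{equation}
Working with these unnormalized vector norms is essential: it lets the error operator $\mathscr{M}$ be split additively without the normalized states (whose normalizations depend on $\mathscr{M}$) entangling the leading and remainder contributions.

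Next I would insert the multiplicative‑error decomposition $\mathscr{M} = M\delta t^{p+1} + \mathscr{M}_{Re}$ established earlier (see \autoref{Mterm}), where $M = \sum_j M_j$ and $\norm{\mathscr{M}_{Re}} = \bigO(\alpha_{p+2}\delta t^{p+2})$, and apply the triangle inequality to each of the two vector norms. This yields the leading pieces $\norm{M\, O(\delta t)\ket{\psi}}\delta t^{p+1}$ and $\norm{O(\delta t)\, M\ket{\psi}}\delta t^{p+1}$, together with two remainder pieces. For the remainders I would use submultiplicativity together with $\norm{O(\delta t)} = \norm{O}$ and the fact that $\ket{\psi}$ is normalized: $\norm{\mathscr{M}_{Re}\, O(\delta t)\ket{\psi}} \le \norm{\mathscr{M}_{Re}}\,\norm{O(\delta t)\ket{\psi}} \le \norm{O}\,\norm{\mathscr{M}_{Re}}$, and symmetrically $\norm{O(\delta t)\,\mathscr{M}_{Re}\ket{\psi}} \le \norm{O}\,\norm{\mathscr{M}_{Re}}$. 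Added together these produce exactly the claimed $2\norm{O}\,\norm{\mathscr{M}_{Re}}$ term.

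Finally I would recast the two leading norms in the stated normalized‑state form using unitary invariance. The one identity doing all the work is $\norm{O(\delta t)\ket{\chi}} = \norm{O\,e^{-iH\delta t}\ket{\chi}}$ for any $\ket{\chi}$, which follows since $e^{iH\delta t}$ preserves the norm. Applying it with $\ket{\chi}=\ket{\psi}$ gives $\norm{M\,O(\delta t)\ket{\psi}} = \norm{O(\delta t)\ket{\psi}}\,\norm{M\ket{\psi_{O(\delta t)}}} = \norm{O\ket{\psi(\delta t)}}\,\norm{M\ket{\psi_{O(\delta t)}}}$, and applying it with $\ket{\chi}=\ket{\psi_M}$ gives $\norm{O(\delta t)\,M\ket{\psi}} = \norm{M\ket{\psi}}\,\norm{O(\delta t)\ket{\psi_M}} = \norm{M\ket{\psi}}\,\norm{O\ket{\psi_{U(\delta t)M}}}$, where $\ket{\psi_{U(\delta t)M}} = e^{-iH\delta t}\ket{\psi_M}$ as defined. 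Collecting these reproduces the lemma verbatim. The computation is essentially bookkeeping; the only genuine care‑points are (i) performing the $\mathscr{M}=M\delta t^{p+1}+\mathscr{M}_{Re}$ split at the level of the unnormalized norms, and (ii) recognizing the unitary‑invariance identity that lets the bound be expressed in terms of the physically natural evolved states $\ket{\psi(\delta t)}$, $\ket{\psi_{O(\delta t)}}$, and $\ket{\psi_{U(\delta t)M}}$ rather than the Heisenberg operator $O(\delta t)$.
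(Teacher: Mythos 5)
Your proposal is correct and follows essentially the same route as the paper: it starts from the intermediate bound $\epsilon_O\le\norm{\mathscr{M}O(\delta t)\ket{\psi}}+\norm{O(\delta t)\mathscr{M}\ket{\psi}}$ from the preceding vector-norm lemma, splits $\mathscr{M}=M\delta t^{p+1}+\mathscr{M}_{Re}$ by the triangle inequality exactly as the paper does for the scrambling-based bound, and recovers the normalized-state form via unitary invariance. Your explicit observation that the split must be performed on the unnormalized vector norms (so the remainder cleanly yields $2\norm{O}\norm{\mathscr{M}_{Re}}$) is a detail the paper leaves implicit, and it is handled correctly.
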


\subsection{Entanglement-based bound of observable error}
Here we would like to investigate how the information scrambling of observable $O$ and $M$, i.e, operator growth, can influence the observable error. 
Here we introduce the following Lemma, presented in \cite{zhao2024entanglementacceleratesquantumsimulation}. 

\begin{lemma}(Distance-based bound \cite{zhao2024entanglementacceleratesquantumsimulation}) \label{SMLemma:local}
Let $A=\sum_j A_j$ act on $N$ qubits, where $A_j$ acts nontrivially on the subsystem with $\support(A_j)$. Then
\begin{align}
  | \bra{\psi}A^{\dagger}A \ket{\psi}|\le \|A\|_F^2+    \sum_{j,j'} \|A_j^{\dag} A_j'\| ~\tr|\rho_{j,j'}- \mathbb{I}_{\support(A_j^{\dag}A_{j'})}/d_{\support(A_j^{\dag}A_{j'})} |,
\end{align}
where $\|A\|^2_F:= \tr(A^{\dagger}A) /d$ is the (square) of the normalized Frobenius norm, and $\rho_{j,j'}:=\tr_{[N]\setminus \support(A_j A_j')}(\ket{\psi}\bra{\psi})$ is the reduced density matrix of $\ket{\psi}\bra{\psi}$ on the subsystem of $\support(A_jA_j')$.
\end{lemma}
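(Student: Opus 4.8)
The plan is to expand the quadratic form into a double sum over the local pieces and to control each cross term by comparing the relevant reduced state to the maximally mixed state on its support. First I would write, using $A=\sum_j A_j$,
\begin{equation}
\bra{\psi}A^\dagger A\ket{\psi}=\sum_{j,j'}\bra{\psi}A_j^\dagger A_{j'}\ket{\psi},
\end{equation}
and note that each operator $A_j^\dagger A_{j'}$ acts nontrivially only on $S\equiv\support(A_j^\dagger A_{j'})$, so that its expectation in $\ket{\psi}$ is determined entirely by the reduced state $\rho_{j,j'}=\tr_{[N]\setminus S}(\ket{\psi}\bra{\psi})$, namely $\bra{\psi}A_j^\dagger A_{j'}\ket{\psi}=\tr(\rho_{j,j'}\,A_j^\dagger A_{j'})$, where on the right $A_j^\dagger A_{j'}$ is understood as its restriction to $S$. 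Since $A^\dagger A\succeq 0$ the left-hand side is real and nonnegative, so the outer absolute value in the statement is automatic.

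The core step is then to add and subtract the value each term would take on the maximally mixed state $\id_S/d_S$ (with $d_S=d_{\support(A_j^\dagger A_{j'})}$). Using that $A_j^\dagger A_{j'}$ acts as identity outside $S$, so $\tr(A_j^\dagger A_{j'})/d=\tr(\id_S/d_S\cdot A_j^\dagger A_{j'})$, I would rewrite
\begin{equation}
\bra{\psi}A_j^\dagger A_{j'}\ket{\psi}=\frac{\tr(A_j^\dagger A_{j'})}{d}+\tr\!\left(\Big(\rho_{j,j'}-\frac{\id_S}{d_S}\Big)A_j^\dagger A_{j'}\right).
\end{equation}
Applying the matrix Hölder inequality $|\tr(XY)|\le\tr|X|\,\norm{Y}$ to the second term bounds its modulus by $\norm{A_j^\dagger A_{j'}}\,\tr|\rho_{j,j'}-\id_S/d_S|$. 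Summing over $j,j'$, the first contributions assemble into $\sum_{j,j'}\tr(A_j^\dagger A_{j'})/d=\tr(A^\dagger A)/d=\norm{A}^2_F$, while the triangle inequality collects the remaining deviations into exactly the stated entanglement-sensitive sum, giving the claimed bound.

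The main obstacle I anticipate is purely the dimensional bookkeeping rather than any deep inequality: one must verify carefully that the full-space trace normalized by $d$ coincides with the subsystem trace normalized by $d_S$ (which holds precisely because $A_j^\dagger A_{j'}$ is supported on $S$), and that the spectral norm of $A_j^\dagger A_{j'}$ evaluated on the full Hilbert space equals the norm of its restriction to $S$. Once the reduced-state identity $\bra{\psi}A_j^\dagger A_{j'}\ket{\psi}=\tr(\rho_{j,j'}A_j^\dagger A_{j'})$ is set up with consistent conventions for the partial trace and for the restriction of operators, the remainder is a one-line Hölder estimate per term followed by resummation, so no further subtlety should arise.
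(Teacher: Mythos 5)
Your proposal is correct and follows essentially the same route as the paper's proof: expand $\bra{\psi}A^\dagger A\ket{\psi}$ into the double sum, reduce each term $\bra{\psi}A_j^\dagger A_{j'}\ket{\psi}$ to a trace against $\rho_{j,j'}$ on $\support(A_j^\dagger A_{j'})$, add and subtract the maximally mixed contribution $\tr(A_j^\dagger A_{j'})/2^N$, and apply the H\"older bound $|\tr(XY)|\le \|X\|\tr|Y|$ before resumming. The dimensional bookkeeping you flag is exactly the identity $2^{-N}\tr(A_j^\dagger A_{j'})=d_{\support(A_j^\dagger A_{j'})}^{-1}\tr(L_{j,j'})$ used in the paper, so no gap remains.
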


\begin{proof}
The term $A_j^{\dag}A_{j'}$ in the expression for $A^{\dag}A$ only acts nontrivially on $\support(A_j^{\dag}A_{j'})$. We denote its nontrivial part by $L_{j,j'}:=\tr_{[N]\setminus \support(A_j A_j')} (A_j^{\dag}A_{j'})$. Since $2^{-N}\tr(A_j^{\dagger} A_{j'})=d_{\support(A_j^{\dag}A_{j'})}^{-1}\tr(L_{j,j'})$, we have
\begin{equation}
\begin{aligned}
\bra{\psi}A_j^{\dagger}A_{j'} \ket{\psi}&=\tr(L_{j,j'}\rho_{j,j'})=\tr[L_{j,j'}(\rho_{j,j'}- \mathbb{I}_{\support(A_j^{\dag}A_{j'})}/d_{\support(A_j^{\dag}A_{j'})})]+\tr(L_{j,j'})/d_{\support(A_j^{\dag}A_{j'})}\\
&= \tr[L_{j,j'}(\rho_{j,j'}- \mathbb{I}_{\support(A_j^{\dag}A_{j'})}/d_{\support(A_j^{\dag}A_{j'})})]+\tr(A_j^{\dagger}A_{j'})/2^N\\
&\leq \|L_{j,j'}\|\tr|\rho_{j,j'}- \mathbb{I}_{\support(A_j^{\dag}A_{j'})}/d_{\support(A_j^{\dag}A_{j'})}|+\tr(A_j^{\dagger}A_{j'})/2^N\\
&= \|A_j^{\dag}A_{j'}\| \tr|\rho_{j,j'}- \mathbb{I}_{\support(A_j^{\dag}A_{j'})}/d_{\support(A_j^{\dag}A_{j'})}|+\tr(A_j^{\dagger}A_{j'})/2^N,
\end{aligned}
\end{equation}
and the result follows by summing the indices $j,j'$.
\end{proof}

Moreover, the trace distance of $\rho_{j,j'}$ and $\mathbb{I}_{\support(A_j^{\dag}A_{j'})}/d_{\support(A_j^{\dag}A_{j'})}$ can be bounded by the relative entropy, the quantum Pinsker inequality, as
\begin{align}
    \tr|\rho_{j,j'}- \mathbb{I}_{\support(A_j^{\dag}A_{j'})}/d_{\support(A_j^{\dag}A_{j'})}| \le \sqrt{2S(\rho_{j,j'}\|\mathbb{I}_{\support(A_j^{\dag}A_{j'})}/d_{\support(A_j^{\dag}A_{j'})}) }=
    \sqrt{2\log(d_{\support(A_j^{\dag}A_{j'})})-2S(\rho_{j,j'})},
\end{align}
which leads to the following entanglement-based bound,

\begin{lemma} (Entanglement-based bound \cite{zhao2024entanglementacceleratesquantumsimulation})\label{SMLemma:local}
Let $A=\sum_j A_j$ act on $N$ qubits, where $A_j$ acts nontrivially on the subsystem with $\support(A_j)$. Then
\begin{align}
  | \bra{\psi}A^{\dagger}A \ket{\psi}|=\norm{A\ket{\psi}}^2\le \|A\|_F^2+    \sum_{j,j'} \|A_j^{\dag} A_j'\| ~\sqrt{2\log(d_{\support(A_j^{\dag}A_{j'})})-2S(\rho_{j,j'})},,
\end{align}
where $\|A\|^2_F:= \tr(A^{\dagger}A) /d$ is the (square) of the normalized Frobenius norm,  $\rho_{j,j'}:=\tr_{[N]\setminus \support(A_j A_j')}(\ket{\psi}\bra{\psi})$ is the reduced density matrix of $\ket{\psi}\bra{\psi}$ on the subsystem of $\support(A_jA_j')$, and $\text{S}(\rho_{j,j'})$ is the entanglement entropy of $\rho_{j,j'}$.
\end{lemma}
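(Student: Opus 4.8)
The plan is to obtain this entanglement-based bound as a direct consequence of the distance-based bound established in the preceding lemma, simply by replacing the trace distance appearing there with an entropic quantity via the quantum Pinsker inequality. The preceding lemma already gives
\begin{equation}
\norm{A\ket{\psi}}^2 \le \|A\|_F^2 + \sum_{j,j'} \|A_j^{\dagger} A_{j'}\|\,\tr\big|\rho_{j,j'} - \mathbb{I}_{\support(A_j^{\dagger} A_{j'})}/d_{\support(A_j^{\dagger} A_{j'})}\big|,
\end{equation}
so it suffices to upper bound each trace-distance term $\tr|\rho_{j,j'} - \mathbb{I}/d_{\support(A_j^{\dagger}A_{j'})}|$ by $\sqrt{2\log(d_{\support(A_j^{\dagger} A_{j'})}) - 2S(\rho_{j,j'})}$ and then resum over the index pairs.

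First I would invoke the quantum Pinsker inequality in the form $\tr|\rho - \sigma| \le \sqrt{2\,S(\rho\|\sigma)}$, taking $\rho = \rho_{j,j'}$ and $\sigma = \mathbb{I}_{\support(A_j^{\dagger}A_{j'})}/d_{\support(A_j^{\dagger}A_{j'})}$ the maximally mixed state on the support subsystem. Next I would evaluate the relative entropy against this maximally mixed reference: since $\log\sigma = -\log(d_{\support(A_j^{\dagger}A_{j'})})\,\mathbb{I}$, the expansion $S(\rho\|\sigma) = \tr(\rho\log\rho) - \tr(\rho\log\sigma)$ collapses to the identity
\begin{equation}
S\big(\rho_{j,j'}\,\big\|\,\mathbb{I}/d_{\support(A_j^{\dagger}A_{j'})}\big) = \log\big(d_{\support(A_j^{\dagger}A_{j'})}\big) - S(\rho_{j,j'}),
\end{equation}
where $S(\rho_{j,j'})$ is the von Neumann entropy of the reduced state. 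Substituting this into the Pinsker bound reproduces exactly the square-root summand of the target expression; resumming over $j,j'$ and inserting into the distance-based bound completes the argument.

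I do not expect any genuine obstacle, as the statement is a routine specialization of the previous lemma. The only points requiring care are bookkeeping ones: fixing the normalization so that the Pinsker constant yields precisely the factor of $2$ inside the square root, and confirming that the relative entropy is finite, which holds automatically because the full-rank maximally mixed state dominates the support of any reduced density matrix $\rho_{j,j'}$. I would also remark that the quantity under each root is nonnegative, since $\log(d_{\support(A_j^{\dagger}A_{j'})})$ is the maximal achievable von Neumann entropy on that subsystem and hence bounds $S(\rho_{j,j'})$ from above; this guarantees the bound is always real and well-defined.
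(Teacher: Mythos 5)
Your proposal is correct and follows exactly the paper's own route: it specializes the preceding distance-based lemma by bounding each trace-distance term via the quantum Pinsker inequality and evaluating the relative entropy against the maximally mixed state as $\log(d_{\support(A_j^{\dagger}A_{j'})})-S(\rho_{j,j'})$. The extra remarks on finiteness and nonnegativity of the quantity under the root are sound but not needed beyond what the paper already does.
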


Combined with the Lemma \ref{Vectornormbound} and the Lemma \ref{SMLemma:local}, we have the following theorem:

\begin{theorem}
    [Entanglement-based bound for short-time evolution of observable error]
    For the Hamiltonian $H=\sum_{l=1}^L H_l$ and an input pure state $\ket{\psi}$, 
    $\mathscr{U}_p=e^{-iH\delta t}(I+M t^{p+1}+\bigO(\delta t^{p+2}))$ with ($M=\sum_j M_j $), the additive Trotter error of the $p$th-order product formula of an observable $O=\sum_j O_j$ can be bound as 
    \begin{equation}
         \epsilon_O \le  \Bigg[\sqrt{\norm{O}^2_F+\Delta_{O}(\ket{\psi(\delta t)})} \sqrt{\norm{M}^2_F+\Delta_{M}( \ket{\psi_{O(\delta t)}})}+\sqrt{\norm{O}^2_F+\Delta_{O}(\ket{\psi_{U(\delta t)M}})} \sqrt{\norm{M}^2_F+\Delta_{M}(\ket{\psi})}\bigg] \delta t^{p+1} +\bigO (\delta t^{p+2}),
    \end{equation}
    where  
    $\|A\|^2_F:= \tr(A^{\dagger}A) /d$ is the (square) of the normalized Frobenius norm,
    $$\Delta_A(\ket{\chi})= \sum_{j,j'} \|A_j^{\dag} A_j'\| ~\sqrt{2\log(d_{\support(A_j^{\dag}A_{j'})})-2S(\rho_{j,j'})},$$ 
    and   $\rho_{j,j'}:=\tr_{[N]\setminus \support(A_j A_j')}(\ket{\chi}\bra{\chi})$ is the reduced density matrix of $\ket{\chi}\bra{\chi}$ on the subsystem of $\support(A_jA_j')$, and $\text{S}(\rho_{j,j'})$ is the entanglement entropy of $\rho_{j,j'}$. Here  $M\delta t^{p+1}=\sum_j M_j\delta t^{p+1}$ is the leading error of $\mathscr{M}(\delta t)$, $\norm{\mathscr{M}_{Re}}=\bigO(\alpha_{p+2} \delta t^{p+2})$, $\ket{\psi(\delta t)}= U(\delta t) \ket{\psi} =e^{-iH\delta t} \ket{\psi}$,
        $\ket{\psi_O}=\frac{O\ket{\psi}}{\norm{O\ket{\psi}}}$,  $\ket{\psi_{M}}=\frac{M\ket{\psi}}{\norm{M\ket{\psi}}}$, $\ket{\psi_{U(\delta t)M}}=U(\delta t) \ket{\psi_{M}}.$
\end{theorem}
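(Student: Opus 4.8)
The plan is to prove the statement by directly combining the vector-norm decomposition of Lemma~\ref{Vectornormbound} with the entanglement-based estimate of Lemma~\ref{SMLemma:local}, applied separately to each of the four vector-norm factors. No fresh analytic input is needed: the theorem is essentially the composition of these two results.

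First I would invoke Lemma~\ref{Vectornormbound}, which already bounds the short-time observable error by
\begin{equation}
\epsilon_O \le \big(\norm{O\ket{\psi(\delta t)}}\,\norm{M\ket{\psi_{O(\delta t)}}} + \norm{M\ket{\psi}}\,\norm{O\ket{\psi_{U(\delta t)M}}}\big)\delta t^{p+1} + 2\norm{O}\norm{\mathscr{M}_{Re}}.
\end{equation}
This reduces the task to controlling the four quantities $\norm{O\ket{\psi(\delta t)}}$, $\norm{M\ket{\psi_{O(\delta t)}}}$, $\norm{O\ket{\psi_{U(\delta t)M}}}$, and $\norm{M\ket{\psi}}$, each of which is a vector norm of a summed operator ($O=\sum_j O_j$ or $M=\sum_j M_j$) acting on a specified normalized state.

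Next, for each factor I would write $\norm{A\ket{\chi}}^2 = \bra{\chi}A^\dagger A\ket{\chi}$ and apply Lemma~\ref{SMLemma:local} with the operator $A\in\{O,M\}$ and the corresponding state $\ket{\chi}$, obtaining $\norm{A\ket{\chi}}^2 \le \norm{A}_F^2 + \Delta_A(\ket{\chi})$, so that $\norm{A\ket{\chi}} \le \sqrt{\norm{A}_F^2 + \Delta_A(\ket{\chi})}$. The four pairings are $(O,\ket{\psi(\delta t)})$, $(M,\ket{\psi_{O(\delta t)}})$, $(O,\ket{\psi_{U(\delta t)M}})$, and $(M,\ket{\psi})$. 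Substituting these four square-root bounds into the displayed inequality reproduces exactly the two product terms of the theorem. Finally, since $\norm{\mathscr{M}_{Re}} = \bigO(\alpha_{p+2}\delta t^{p+2})$ with $\norm{O}$ and $\alpha_{p+2}$ treated as fixed, the remainder $2\norm{O}\norm{\mathscr{M}_{Re}}$ is absorbed into $\bigO(\delta t^{p+2})$, completing the estimate.

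The step that deserves the most care is the bookkeeping in the application of Lemma~\ref{SMLemma:local}: one must verify that the local decomposition $A=\sum_j A_j$ used to define $\Delta_A$ is identical to the decomposition appearing in $O$ and $M$, and that each reduced density matrix $\rho_{j,j'}$ is computed from the \emph{correct} normalized state $\ket{\chi}$ attached to that particular factor. The content is entirely in matching the four operator--state pairs and confirming that the cross terms already handled inside Lemma~\ref{Vectornormbound} do not reintroduce contributions at order $\delta t^{p+1}$; once the pairings are fixed, the rest is direct substitution.
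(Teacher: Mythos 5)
Your proposal is correct and follows essentially the same route as the paper: the authors likewise obtain this theorem by substituting the entanglement-based vector-norm estimate $\norm{A\ket{\chi}}\le\sqrt{\norm{A}_F^2+\Delta_A(\ket{\chi})}$ of Lemma~\ref{SMLemma:local} into each of the four factors of Lemma~\ref{Vectornormbound}, with the same operator--state pairings, and absorb $2\norm{O}\norm{\mathscr{M}_{Re}}$ into $\bigO(\delta t^{p+2})$. No gap.
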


This suggests, if the entanglement entropies of subsystems $\text{supp}(O^\dagger_i O_{i^{\prime}})$ and $\text{supp}(M^\dagger_j M_{j^{\prime}})$ satisfy $S(\rho_{i,i^\prime}^{(O)})\ge \text{supp}(O^\dagger_i O_{i^{\prime}})-\bigO(\frac{\norm{O}^4_F}{\sum_i \norm{O}_i})$ and $S(\rho_{i,i^\prime}^{(M)})\ge \text{supp}(M^\dagger_i M_{i^{\prime}})-\bigO(\frac{\norm{M}^4_F}{\sum_j \norm{M}_j})$ respectively, 
one has 
\begin{equation}
    \epsilon_O =\bigO( \norm{O}_F \norm{M }_F \delta t^{p+1}) .
\end{equation}
A concrete result is, 
\begin{equation}
    \epsilon_O \le 2 \norm{O}_F \norm{M }_F \delta t^{p+1} + \bigO(\delta t^{p+2}).
\end{equation}

For long-time evolution analysis, utilizing Eq. \ref{eq:long-term_scrambling_bound} and the Lemma \ref{Vectornormbound}, the following result is obtained.

\begin{lemma}\label{eq:bound_for_lt}
    [Vector norm-based bound for long time evolution of observable error] For a quantum circuit $U_0^r$ and an approximate unitary $\mathscr{U}^r_p$, 
   the error of observable $O$ with  input state $\ket{\psi}$ satisfies
\begin{equation}
\begin{split}
        \epsilon_O &\le \sum_{k=0}^{r-1} \norm{O_k\ket{\psi_k}} \norm{\mathscr{M}  \ket{{\psi_k}_{O_k}}}+\norm{\mathscr{M}  \ket{\psi_k}}\norm{O_k\ket{{\psi_k}_{\mathscr{M} }}}\\
        &=\sum_{k=0}^{r-1} \norm{O \ket{\tilde{\psi}_{r_k}}} \norm{\mathscr{M} \ket{{\psi_k}_{O_k}}} + \norm{\mathscr{M}  \ket{\psi_k}}\norm{O \ket{{\tilde{\psi_{r_k}}}_\mathscr{M}}} ,
\end{split}
\end{equation}
where $\ket{\tilde{\psi}_{r_k}}=(U^k_0\mathscr{U}^{r-k}_p) \ket{\psi}$ and $\ket{\tilde{\psi_{r_k}}_\mathscr{M}}=U^k_0\mathscr{M}\mathscr{U}^{r-k}_p \ket{\psi}/ \norm{U^k_0\mathscr{M}\mathscr{U}^{r-k}_p \ket{\psi}}$ with
 $\mathscr{U}_p=(1+\mathscr{M})U_0$,
$\ket{\psi_k}=\mathscr{U}^{r-k}_p\ket{\psi}$,  $O_k={U_0^\dagger}^k O U^k_0$, 
$\ket{{\psi_k}_{O_k}}=\frac{O_k\ket{\psi_k}}{\norm{O_k\ket{\psi_k}}}=\frac{{U^\dagger_0}^k O\ket{\tilde{\psi_r}}}{\norm{O\ket{\tilde{\psi_r}}}}$ and  $\ket{{\psi_k}_{\mathscr{M} }}=\frac{\mathscr{M} \ket{\psi_k}}{\norm{\mathscr{M} \ket{\psi_k}}}$.
\end{lemma}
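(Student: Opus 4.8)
The plan is to assemble the long-time bound from two ingredients that are already established in the preceding text: the telescoping decomposition of the $r$-step observable error, and the single-step vector-norm bound. First I would invoke the telescoping identity \eqref{longtime}, which writes
$\epsilon_O \le \sum_{k=0}^{r-1}\abs{\bra{\psi_{k+1}}(\mathscr{U}_p^\dagger O_k \mathscr{U}_p - U_0^\dagger O_k U_0)\ket{\psi_{k+1}}}$
with $\ket{\psi_{k+1}}=\mathscr{U}_p^{r-(k+1)}\ket{\psi}$ and $O_k={U_0^\dagger}^k O U_0^k$. The point is that each summand is \emph{exactly} a single-step observable error for the ideally $k$-fold evolved observable $O_k$ on the input $\ket{\psi_{k+1}}$, with reference observable $O'=U_0^\dagger O_k U_0=O_{k+1}$. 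It therefore falls directly under the single-step vector-norm bound (the unnamed Lemma preceding Lemma~\ref{Vectornormbound}, in its general multiplicative-error form), and no new estimate is needed.

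Applying that bound term by term yields, for each $k$,
\begin{align*}
&\abs{\bra{\psi_{k+1}}(\mathscr{U}_p^\dagger O_k \mathscr{U}_p - U_0^\dagger O_k U_0)\ket{\psi_{k+1}}} \\
&\quad \le \norm{O_k\ket{(\psi_{k+1})_{U_0}}}\,\norm{\mathscr{M}\ket{(\psi_{k+1})_{O_{k+1}}}} \\
&\qquad + \norm{\mathscr{M}\ket{\psi_{k+1}}}\,\norm{O_k\ket{(\psi_{k+1})_{U_0\mathscr{M}}}}.
\end{align*}
The remaining work is purely rewriting these vector norms into the stated reference states, for which I would lean on the unitary invariance $\norm{V\ket{\chi}}=\norm{\ket{\chi}}$. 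Using the identity $\norm{O\ket{\psi_{U_0}}}=\norm{O'\ket{\psi}}$ already noted before Lemma~\ref{Vectornormbound}, the first factor collapses to $\norm{O_{k+1}\ket{\psi_{k+1}}}$, and then stripping the leading ${U_0^\dagger}^{k+1}$ gives $\norm{O\,U_0^{k+1}\mathscr{U}_p^{r-(k+1)}\ket{\psi}}=\norm{O\ket{\tilde{\psi}_{r_k}}}$ after reindexing; analogously $\norm{O_k\ket{(\psi_{k+1})_{U_0\mathscr{M}}}}=\norm{O\,U_0^{k+1}\mathscr{M}\mathscr{U}_p^{r-(k+1)}\ket{\psi}}/\norm{\mathscr{M}\ket{\psi_{k+1}}}=\norm{O\ket{\tilde{\psi_{r_k}}_\mathscr{M}}}$, where the normalization cancels because $\norm{U_0^{k+1}\mathscr{M}\mathscr{U}_p^{r-(k+1)}\ket{\psi}}=\norm{\mathscr{M}\ket{\psi_{k+1}}}$. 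Shifting the summation label $k+1\mapsto k$ then reproduces the two equivalent expressions in the statement.

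The only place that needs genuine care is the index and reference-state bookkeeping: one must track that the single-step reference observable is $O_{k+1}$ rather than $O_k$, that the $U_0^{k+1}$/$\mathscr{M}$ normalization cancels exactly, and that the reindexing maps the telescoping range $0,\dots,r-1$ onto the labels used in the statement. Conceptually — which I would emphasize rather than belabor the algebra — the structural payoff is that unitary invariance lets \emph{both} $O$-factors be pulled back onto the full-length evolved states $\ket{\tilde{\psi}_{r_k}}$ and $\ket{\tilde{\psi_{r_k}}_\mathscr{M}}$, whereas the $\mathscr{M}$-factors remain tethered to the intermediate state $\ket{\psi_k}$ of that step; this asymmetry between "final-state" and "intermediate-state" dependence is precisely what the subsequent observation exploits.
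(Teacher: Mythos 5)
Your proposal is correct and follows essentially the same route as the paper, which obtains this lemma by combining the telescoping decomposition of Eq.~(\ref{longtime}) with the single-step vector-norm bound and then using unitary invariance of the vector norm to pull the $O$-factors back onto the full-length evolved states. The only delicate point is the index bookkeeping (the telescoping naturally produces the pairs $(O_{k+1},\psi_{k+1})$ before relabeling), which you handle explicitly and correctly.
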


\begin{theorem}
    [Entanglement-based bound for long-time evolution of observable error]
    For the Hamiltonian $H=\sum_{l=1}^L H_l$ and an input pure state $\ket{\psi}$, 
    $\mathscr{U}_p=e^{-iH\delta t}(I+M t^{p+1}+\bigO(\delta t^{p+2}))$ with ($M=\sum_j M_j $), the additive Trotter error of the $p$th-order product formula of an observable $O=\sum_j O_j$ can be bound as 
    \begin{equation}
         \epsilon_O \le  \sum_{k=0}^{r-1} \Bigg[\sqrt{\norm{O}^2_F+\Delta_{O}(\ket{\tilde{\psi}_{r_k}}}) \sqrt{\norm{M}^2_F+\Delta_{M}( \ket{{\psi_k}_{O_k}})}+\sqrt{\norm{O}^2_F+\Delta_{O}(\ket{\tilde{\psi_{r_k}}_\mathscr{M}})} \sqrt{\norm{M}^2_F+\Delta_{M}(\ket{\psi_k})}\bigg] \delta t^{p+1} +\bigO (\delta t^{p+2}),
    \end{equation}

    where  
    $\|A\|^2_F:= \tr(A^{\dagger}A) /d$ is the (square) of the normalized Frobenius norm,
    $$\Delta_A(\ket{\chi})= \sum_{j,j'} \|A_j^{\dag} A_j'\| ~\sqrt{2\log(d_{\support(A_j^{\dag}A_{j'})})-2S(\rho_{j,j'})},$$ 
    and   $\rho_{j,j'}:=\tr_{[N]\setminus \support(A_j A_j')}(\ket{\chi}\bra{\chi})$ is the reduced density matrix of $\ket{\chi}\bra{\chi}$ on the subsystem of $\support(A_jA_j')$, and $\text{S}(\rho_{j,j'})$ is the entanglement entropy of $\rho_{j,j'}$. Here $\ket{\tilde{\psi}_{r_k}}=(U^k_0\mathscr{U}^{r-k}_p) \ket{\psi}$ and $\ket{\tilde{\psi_{r_k}}_\mathscr{M}}=U^k_0\mathscr{M}\mathscr{U}^{r-k}_p \ket{\psi}/ \norm{U^k_0\mathscr{M}\mathscr{U}^{r-k}_p \ket{\psi}}$,
$\ket{\psi_k}=\mathscr{U}^{r-k}_p\ket{\psi}$,  $O_k={U_0^\dagger}^k O U^k_0$, 
$\ket{{\psi_k}_{O_k}}=\frac{O_k\ket{\psi_k}}{\norm{O_k\ket{\psi_k}}}=\frac{{U^\dagger_0}^k O\ket{\tilde{\psi_r}}}{\norm{O\ket{\tilde{\psi_r}}}}$ and  $\ket{{\psi_k}_{\mathscr{M} }}=\frac{\mathscr{M} \ket{\psi_k}}{\norm{\mathscr{M} \ket{\psi_k}}}$.
\end{theorem}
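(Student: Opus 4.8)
The plan is to obtain this bound by composing two results already established in the excerpt: the vector norm-based long-time bound (Lemma~\ref{eq:bound_for_lt}) and the entanglement-based estimate of the vector norm $\norm{A\ket{\chi}}$ (Lemma~\ref{SMLemma:local}). The former already reduces the accumulated observable error to a telescoped sum of four vector-norm factors per segment, while the latter converts each such factor into a normalized Frobenius norm plus an entanglement-entropy correction $\Delta_A$. So the entire argument is a substitution followed by careful bookkeeping of the $\delta t$ powers.

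First I would start from the conclusion of the vector norm-based long-time lemma,
\begin{equation*}
\epsilon_O \le \sum_{k=0}^{r-1}\norm{O\ket{\tilde{\psi}_{r_k}}}\,\norm{\mathscr{M}\ket{{\psi_k}_{O_k}}} + \norm{\mathscr{M}\ket{\psi_k}}\,\norm{O\ket{\tilde{\psi_{r_k}}_\mathscr{M}}}.
\end{equation*}
In each factor involving $\mathscr{M}$, I would insert the leading-order decomposition $\mathscr{M}=M\delta t^{p+1}+\mathscr{M}_{Re}$ with $\norm{\mathscr{M}_{Re}}=\bigO(\delta t^{p+2})$ (established in the analysis leading to Eq.~\eqref{Mterm}), and use the triangle inequality for the vector norm to write $\norm{\mathscr{M}\ket{\chi}}\le\delta t^{p+1}\norm{M\ket{\chi}}+\bigO(\delta t^{p+2})$. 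This pulls out a common factor $\delta t^{p+1}$ from every summand and isolates the higher-order remainder. Next I would apply $\norm{A\ket{\chi}}\le\sqrt{\norm{A}_F^2+\Delta_A(\ket{\chi})}$ separately to the four surviving factors, taking $A=O$ for the two observable factors and $A=M$ for the two error-operator factors, with $\ket{\chi}$ equal to $\ket{\tilde{\psi}_{r_k}}$, $\ket{{\psi_k}_{O_k}}$, $\ket{\psi_k}$, and $\ket{\tilde{\psi_{r_k}}_\mathscr{M}}$ respectively. Regrouping the two products per segment then reproduces exactly the two-term summand of the claimed bound, multiplied by $\delta t^{p+1}$.

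The main obstacle is the control of the higher-order remainders rather than the leading term. When $\mathscr{M}_{Re}$ is multiplied against the observable factors $\norm{O\ket{\cdot}}$, I need those factors to be uniformly $\bigO(1)$ so the cross terms are genuinely $\bigO(\delta t^{p+2})$; this is immediate from $\norm{O\ket{\chi}}\le\norm{O}$. A subtler point is that the normalized states carrying an $\mathscr{M}$ in their definition (such as $\ket{\tilde{\psi_{r_k}}_\mathscr{M}}$) are, to leading order, governed by $M$ rather than the full $\mathscr{M}$, because the $\delta t^{p+1}$ scaling cancels between numerator and normalization. I would verify that retaining $\mathscr{M}$ inside these states perturbs $\Delta_O$ only at higher order, so that the stated bound, written with $\mathscr{M}$-labelled states but $M$-dependent Frobenius and entanglement contributions, is consistent to the advertised $\bigO(\delta t^{p+2})$ accuracy.

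Finally, I would record the standard caveat, already used for the accumulated scrambling-based bound, that the per-segment $\bigO(\delta t^{p+2})$ errors accumulate over the $r$ segments; this does not affect the structure of the leading $\sum_{k}$ term, which is the content of the theorem. No new analytic input is required beyond the two lemmas, so once the remainder bookkeeping is settled the result follows directly.
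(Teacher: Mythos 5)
Your proposal is correct and follows essentially the same route as the paper: the theorem is obtained by inserting the entanglement-based vector-norm estimate $\norm{A\ket{\chi}}\le\sqrt{\norm{A}_F^2+\Delta_A(\ket{\chi})}$ into each factor of the vector norm-based long-time bound, after extracting the leading term $M\delta t^{p+1}$ from $\mathscr{M}$ and absorbing the remainders into $\bigO(\delta t^{p+2})$. Your explicit bookkeeping of the higher-order cross terms is in fact more careful than what the paper writes out, but it introduces no new ingredient beyond the two cited lemmas.
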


Remarkably, the error term associated with the vector norm of $O$ in Lemma \ref{eq:bound_for_lt} is largely independent of the specific segment and is primarily determined by the final state. This is fundamentally different from the state error described by \cite{zhao2024entanglementacceleratesquantumsimulation}. Specifically, even if the entanglement is minimal during intermediate stages, a large final-state entanglement allows us to consistently leverage the advantage conferred by the Frobenius norm of $O$. This is evident as, for any segment $k$, we have
$\Delta_{O}(\ket{\tilde{\psi}_{r_k}})\approx\Delta_{O}(\ket{\tilde{\psi_{r_k}}_\mathscr{M}})\approx0$, i.e., 
$ \epsilon_O \approx  \norm{O}_F.$

\section{Concrete upper bounds of observable error for PF$1$ and PF$2$}

\subsection{Concrete upper bound for PF1}

\begin{lemma}\label{SMLemma:AB}
(Operator scrambling-based bound for PF1)
For a two-term Hamiltonian $H=A+B$, consider the first-order product formula $\mathscr{U}_1(\delta t)=e^{-iA\delta t}e^{-iB\delta t}$ with initial state $\ket{\psi}$. Let $M=-\frac{1}{2}[A,B]=\sum_j M_j$. Then the trotter error for the observable $O$ is upper bounded as
     \begin{equation}
           \epsilon_O\le  \frac{1}{2}\norm{[O(\delta t), [A,B]] \ket{\psi}}\delta t^{2} + \mathscr{O}_{Re}, 
        \end{equation}
        where $O(\delta t)=e^{iH\delta t}Oe^{-iH\delta t}$ and
   $$\mathscr{O}_{Re}=\norm{O}(1+\norm{[A,B]}^2 \delta t^2) \left( \frac{\delta t^3}{6}\|[A,[A,B]]\|+ \frac{\delta t^3}{3}\|[B,[A,B]]\| \right) + \frac{\delta t^4}{2} \norm{O}\norm{[A,B]}^2.$$   
   \end{lemma}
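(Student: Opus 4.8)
The plan is to specialize the operator scrambling bound (Lemma~\ref{scrambling}) to the two-term case. For $H=A+B$ and $\mathscr{U}_1(\delta t)=e^{-iA\delta t}e^{-iB\delta t}$ the multiplicative error is $\mathscr{M}=e^{iH\delta t}\mathscr{U}_1(\delta t)-I$, and the bound reads $\epsilon_O\le\norm{[O(\delta t),\mathscr{M}]\ket{\psi}}$ with $O(\delta t)=e^{iH\delta t}Oe^{-iH\delta t}$. Everything then reduces to an explicit description of $\mathscr{M}$: I would show its leading part is $M\delta t^2=-\frac{1}{2}[A,B]\delta t^2$, write $\mathscr{M}=M\delta t^2+\mathscr{M}_{Re}$, and apply the triangle inequality $\norm{[O(\delta t),\mathscr{M}]\ket{\psi}}\le\frac{1}{2}\norm{[O(\delta t),[A,B]]\ket{\psi}}\delta t^2+\norm{[O(\delta t),\mathscr{M}_{Re}]\ket{\psi}}$. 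The first summand is exactly the stated leading term, so the entire content lies in bounding the remainder by $\mathscr{O}_{Re}$.

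To pin down $\mathscr{M}$ without Baker--Campbell--Hausdorff bookkeeping, I would differentiate $V(\tau)=e^{iH\tau}e^{-iA\tau}e^{-iB\tau}$. Using $iH-iA=iB$ one obtains $V'(\tau)=-e^{iH\tau}W(\tau)e^{-iA\tau}e^{-iB\tau}$ with $W(\tau)=\int_0^\tau e^{-iAs}[A,B]e^{iAs}\,\d s$, so that $\mathscr{M}=\int_0^{\delta t}V'(\tau)\,\d\tau$. Replacing $e^{iH\tau}\to I$, $W(\tau)\to[A,B]\tau$, and $e^{-iA\tau}e^{-iB\tau}\to I$ integrates to exactly $-\frac{1}{2}[A,B]\delta t^2$, fixing $M$. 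Carrying the integrand to the next order, its $\bigO(\tau^2)$ part equals $i\tau^2\big(\frac{1}{2}[A,[A,B]]+[B,[A,B]]\big)$; the asymmetric $\frac{1}{2}$ versus $1$ arises because the inner conjugation inside $W$ already supplies $[A,[A,B]]$, whereas $e^{iH\tau}$ and $e^{-iA\tau}e^{-iB\tau}$ jointly contribute the full $[H,[A,B]]=[A,[A,B]]+[B,[A,B]]$. Integrating reproduces precisely the stated coefficients $\frac{1}{6}$ and $\frac{1}{3}$ in $\mathscr{M}_{Re}$.

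To turn this asymptotic expansion into the rigorous closed form $\mathscr{O}_{Re}$, each replacement must be recorded as an exact integral remainder rather than truncated: $\norm{W(\tau)-[A,B]\tau}\le\frac{\tau^2}{2}\norm{[A,[A,B]]}$, the conjugation deviations bounded through the integral of $[H,[A,B]]$, and the factor $e^{-iA\tau}e^{-iB\tau}=e^{-iH\tau}(I+\mathscr{M}(\tau))$ controlled recursively via the crude rigorous estimate $\norm{\mathscr{M}(\tau)}\le\frac{1}{2}\norm{[A,B]}\tau^2$. It is convenient to note that $\mathscr{M}_{Re}$ is anti-Hermitian to leading order, so that $\mathscr{M}_{Re}^\dagger O(\delta t)+O(\delta t)\mathscr{M}_{Re}$ remains a commutator, which one bounds directly in terms of $\norm{O}$ and $\norm{\mathscr{M}_{Re}}$. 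Compounding the deviations without discarding higher-order pieces is what produces the multiplicative envelope $(1+\norm{[A,B]}^2\delta t^2)$ in front of the double-commutator terms, while the surviving quadratic contribution, of the form $\mathscr{M}^\dagger O(\delta t)\mathscr{M}$ bounded using $\norm{\mathscr{M}}\le\frac{1}{2}\norm{[A,B]}\delta t^2$, yields the $\frac{\delta t^4}{2}\norm{O}\norm{[A,B]}^2$ term. I expect the main obstacle to be exactly this non-asymptotic bookkeeping: keeping every cross term and higher-order remainder under explicit control so that the constants $\frac{1}{6},\frac{1}{3}$ and the closed-form prefactor emerge cleanly rather than as an unspecified $\bigO(\delta t^3)$.
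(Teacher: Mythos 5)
Your proposal is correct and follows essentially the same route as the paper: it specializes the operator-scrambling bound, extracts the leading term $-\tfrac{1}{2}[A,B]\delta t^2$ of $\mathscr{M}$ from an integral representation, and controls the remainder via nested-commutator estimates together with the recursive substitution $e^{-iA\tau}e^{-iB\tau}=e^{-iH\tau}(I+\mathscr{M}(\tau))$. The integrand $-e^{iH\tau}W(\tau)e^{-iA\tau}e^{-iB\tau}$ you obtain by differentiating $V(\tau)=e^{iH\tau}e^{-iA\tau}e^{-iB\tau}$ is algebraically identical to the representation the paper imports from the product-formula error theory, and your remainder bookkeeping (the $\tfrac{1}{6},\tfrac{1}{3}$ coefficients, the multiplicative envelope, and the $\delta t^4$ cross term arising from the product of $\mathscr{M}(\tau)$ with the leading commutator) reproduces the paper's $\mathscr{O}_{Re}$.
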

\begin{proof}

By Ref. \cite{chailds2021TrottertheoryPhysRevX.11.011020} and our analysis before, the multiplicative error $\mathscr{M}(\delta t)$ satisfying $\mathscr{U}_1(\delta t)=U_0(I+\mathscr{M}(\delta t))$ is
\begin{equation}
\begin{aligned}\label{Mhigh}
\mathscr{M}(\delta t)&=\int_0^{\delta t} \d\tau_1 \,  e^{iH\tau_1} 
e^{-iA\tau_1}e^{-iB\tau_1}\left(e^{iB\tau_1}e^{iA\tau_1}iBe^{-iA\tau_1}e^{-iB\tau_1}    -iB\right)\\
&=\int_0^{\delta t} \d\tau_1 \, e^{iH\tau_1} e^{-iA\tau_1}e^{-iB\tau_1} [iA,iB]\tau_1 \\
&\quad+
\int_0^{\delta t} \d\tau_1
\int_0^{\tau_1} \d\tau_2 \,  e^{iH\tau_1} e^{-iA\tau_1}e^{-iB\tau_1} e^{iB(\tau_1-\tau_2)}[iB,[iA, iB]]e^{-iB(\tau_1-\tau_2)}   \tau_1 \\
&\quad+
\int_0^{\delta t} \d\tau_1  \int_0^{\tau_1} \d\tau_2 \, e^{iH\tau_1} e^{-iA\tau_1}e^{-iB\tau_1} e^{iB\tau_1}e^{iA(\tau_1-\tau_2)} [iA,[iA,iB]]e^{-iA(\tau_1-\tau_2)}   e^{-iB\tau_1}  \tau_2\\
&= \mathscr{M}_1+ \mathscr{M}_{\re},
\end{aligned}
\end{equation}
where $\mathscr{M}_1=\int_0^{\delta t} \d\tau_1 \, e^{iH\tau_1} e^{-iA\tau_1}e^{-iB\tau_1} [iA,iB]\tau_1$ and $\mathscr{E}_{\re}$ is the sum of the other integrals.
Here the second line is due to the equation
\begin{equation}
\begin{aligned}
 e^{iB\tau_1}e^{iA\tau_1}iBe^{-iA\tau_1}e^{-iB\tau_1}
&=  iB+e^{iB\tau_1}[iA, iB] \tau_1 e^{-iB\tau_1} + \int_0^{\tau_1} \d\tau_2  e^{iB\tau_1}e^{iA(\tau_1-\tau_2)} [iA,[iA,iB]] \tau_2 e^{-iA(\tau_1-\tau_2)}   e^{-iB\tau_1}.\\
&=iB+[iA, iB] \tau_1  +
\int_0^{\tau_1} \d\tau_2  e^{iB(\tau_1-\tau_2)}[iB,[iA, iB]]e^{-iB(\tau_1-\tau_2)}   \tau_1 \\
&\quad+\int_0^{\tau_1} \d\tau_2  e^{iB\tau_1}e^{iA(\tau_1-\tau_2)} [iA,[iA,iB]] \tau_2 e^{-iA(\tau_1-\tau_2)}   e^{-iB\tau_1}.
\end{aligned}
\end{equation}

Since $\norm{UXV}=\norm{X}$ where $U$ and $V$ is unitary, the second term $ \mathscr{M}_{\re}$ can be bounded as
\begin{align}\label{eq:remain_upper_pf1}
 \|\mathscr{M}_{\re}\|\le \frac{\delta t^3}{6}\|[A,[A,B]]\|+ \frac{\delta t^3}{3}\|[B,[A,B]]\|.
\end{align}
According to Eq.(\autoref{leadingM}), $\mathscr{M}(\delta t)=\sum_j M_j + \mathscr{M}_{\text{re}}$.
Similarly,  $\mathscr{M}_2$ can be bounded as $\mathscr{M}_2 \le \frac{\delta t^2}{2} \norm{ [A,B]}$, resulting
\begin{equation}
  \norm{\mathscr{M}(\delta t)}\le  \|\mathscr{M}_2\|+\|\mathscr{M}_{\re}\| = \frac{\delta t^2}{2} \norm{ [A,B]}+\frac{\delta t^3}{6}\|[A,[A,B]]\|+ \frac{\delta t^3}{3}\|[B,[A,B]]\|.
\end{equation}
Thus, for $0\le\tau_1\le \delta t$, $\norm{\mathscr{M}(\tau_1)}\le \norm{\mathscr{M}(\delta t)}$.
Since $\mathscr{U}_1(\tau_1)=e^{-iA\tau_1}e^{-iB\tau_1}=e^{-iH\tau_1} (I+\mathscr{M}(\tau_1))$,
\begin{equation}
    \begin{split}
  \mathscr{M}_2&=\int_0^{\delta t} \d\tau_1 \, e^{iH\tau_1} e^{-iA\tau_1}e^{-iB\tau_1} [iA,iB]\tau_1 \\
&=  \int_0^{\delta t} \d\tau_1 \, (I+e^{iH\tau_1}\mathscr{M}{(\tau_1)}) [iA,iB]\tau_1\\
&=-\frac{1}{2}[A,B]\delta t^2+ \int_0^{\delta t} \d\tau_1 \, e^{iH\tau_1}\mathscr{M}{(\tau_1)} [iA,iB]\tau_1\\
&= -\frac{1}{2}[A,B]\delta t^2+\mathscr{M}^{*}_{\re},
\end{split}
\end{equation}
where $\mathscr{M}^{*}_{\re}$ can be bounded as
\begin{equation}
\begin{split}
        \norm{\mathscr{M}^{*}_{\re}}&\le  \int_0^{\delta t} \d\tau_1 \, \mathscr{M}{(\tau_1)} \norm{\mathscr{M}( \tau_1)} \norm{[iA,iB]}\tau_1 
         \le  \norm{\mathscr{M}( \delta t)} \int_0^{\delta t} \d\tau_1 \, \mathscr{M}{(\tau_1)}  \norm{[iA,iB]}\tau_1 \\
        &= \frac{\delta t^4}{4} \norm{[A,B]}^2 + \frac{\delta t^5}{12}\|[A,[A,B]]\|\norm{[A,B]}+ \frac{\delta t^5}{6}\|[B,[A,B]]\|\norm{[A,B]}.
\end{split}
\end{equation}
Therefore, the multiplicative error of PF1 is
 $$\mathscr{M}(\delta t)= -\frac{1}{2}[A,B]\delta t^2+ \mathscr{M}^{*}_{\re} +\mathscr{M}_{\re}$$ and its the leading error term is $-\frac{1}{2}[A,B]\delta t^2=\sum_j M_j \delta t^2$. 
Define $\mathscr{M}_{Re}=\mathscr{M}^*_{\text{re}}+\mathscr{M}_{\text{re}}$, so 
according to \autoref{scramblingPF}, for PF1, we have
    \begin{equation}
    \begin{split}
              \epsilon_O&=\abs{\langle \psi| U^\dagger_0 O U_0- \mathscr{U}_1^\dagger O \mathscr{U}_1 \ket{\psi}}\le  \frac{1}{2}\norm{[O(\delta t), [A,B] ] \ket{\psi}}\delta t^2 + 2 \norm{O} \norm{\mathscr{M}_{Re}}\\
              &\le  \frac{1}{2}\norm{[O(\delta t), [A,B] ] \ket{\psi}}\delta t^2 + 2 \norm{O} (\norm{\mathscr{M}_{\re}}+\norm{\mathscr{M}^*_{\re}})\\
              &\le \frac{1}{2}\norm{[O(\delta t), [A,B] ] \ket{\psi}}\delta t^2 +\norm{O}(1+\norm{[A,B]}^2 \delta t^2) \left( \frac{\delta t^3}{6}\|[A,[A,B]]\|+ \frac{\delta t^3}{3}\|[B,[A,B]]\| \right) + \frac{\delta t^4}{2} \norm{O}\norm{[A,B]}^2.
    \end{split}
        \end{equation}
\end{proof}

\subsection{Concrete upper bound for PF2}

\begin{lemma}\label{SMLemma:AB}
(Operator scrambling-based bound for PF2)
For a two-term Hamiltonian $H=A+B$, consider the first-order product formula $\mathscr{U}_1(\delta t)=e^{-iA\delta t}e^{-iB\delta t}$ with initial state $\ket{\psi}$. Let $M=\frac{1}{12}[-iB,[-iB,-iA]]+\frac{1}{24}[iA,[iA,iB]]=\sum_j M_j$. Then the Trotter error for the observable $O$ is upper bounded as
     \begin{equation}
           \epsilon_O\le  \frac{1}{12}\norm{[O(\delta t), [B[B,A]]] \ket{\psi}}\delta t^{3} +\frac{1}{24}\norm{[O(\delta t), [A[A,B]]] \ket{\psi}}\delta t^{3} + \mathscr{O}_{Re}, 
        \end{equation}
        where $O(\delta t)=e^{iH\delta t}Oe^{-iH\delta t}$ and
  $\mathscr{O}_{Re}= 2\norm{O} (\zeta_{\re,1}+\zeta_{\re,2}+\zeta^*_{\re,(2,1)}+\zeta^*_{\re,(2,2)})
  $ with
  \begin{equation}      
  \begin{split}
     & \zeta_{\re,1}=\frac{{\delta t}^4}{32}\|\left[A,\left[B,\left[B,A\right]\right]\right]\|+  \frac{{\delta t}^4}{12} \|\left[B,\left[B,\left[B,A\right]\right]\right]\|, \\
     &\zeta_{\re,2}= \frac{{\delta t}^4}{32}\|\left[B,\left[A,\left[A,B\right]\right]\right]\|+  \frac{{\delta t}^4}{48}\|\left[A,\left[A,\left[A,B\right]\right]\right]\|,\\
     &\zeta^*_{\re,(2,1)}= \big( \frac{{\delta t}^6}{144} \| \left[B,\left[B,A\right]\right]\|+ \frac{{\delta t}^6}{288} \|\left[A,\left[A,B\right]\right]\|+\frac{{\delta t}^7}{384}\|\left[A,\left[B,\left[B,A\right]\right]\right]\|+  \frac{{\delta t}^7}{144} \|\left[B,\left[B,\left[B,A\right]\right]\right]\|\\
 &\quad + \frac{{\delta t}^7}{384}\|\left[B,\left[A,\left[A,B\right]\right]\right]\|+  \frac{{\delta t}^7}{576}\|\left[A,\left[A,\left[A,B\right]\right]\right]\| \big) \norm{\left[B,\left[B,A\right]\right] }\\
 &\zeta^*_{\re,(2,2)}=\big( \frac{{\delta t}^6}{288} \| \left[B,\left[B,A\right]\right]\|+ \frac{{\delta t}^6}{576} \|\left[A,\left[A,B\right]\right]\|+\frac{{\delta t}^7}{768}\|\left[A,\left[B,\left[B,A\right]\right]\right]\|+  \frac{{\delta t}^7}{288} \|\left[B,\left[B,\left[B,A\right]\right]\right]\|\\
 &\quad + \frac{{\delta t}^7}{768}\|\left[B,\left[A,\left[A,B\right]\right]\right]\|+  \frac{{\delta t}^7}{1152}\|\left[A,\left[A,\left[A,B\right]\right]\right]\| \big) \norm{\left[A,\left[A,B\right]\right]}.
  \end{split}
\end{equation}
   \end{lemma}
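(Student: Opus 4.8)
The plan is to mirror the argument of the preceding PF1 lemma, replacing the two-factor splitting by the symmetric Strang formula $\mathscr{U}_2(\delta t)=e^{-iA\delta t/2}e^{-iB\delta t}e^{-iA\delta t/2}$ and then invoking the operator-scrambling bound of Theorem~\ref{scramblingPF}. Concretely, I first write $\mathscr{U}_2(\delta t)=U_0(\delta t)(I+\mathscr{M}(\delta t))$ and isolate the leading $\delta t^{3}$ piece of the multiplicative error as $M\delta t^{3}=\big(\tfrac{1}{12}[-iB,[-iB,-iA]]+\tfrac{1}{24}[iA,[iA,iB]]\big)\delta t^{3}=\sum_j M_j\delta t^{3}$, collecting everything else into a spectral-norm-controlled remainder $\mathscr{M}_{Re}$. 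Theorem~\ref{scramblingPF} then gives $\epsilon_O\le\norm{[O(\delta t),M]\ket{\psi}}\delta t^{3}+2\norm{O}\norm{\mathscr{M}_{Re}}$, and a single application of the triangle inequality to $[O(\delta t),M]=\tfrac{1}{12}[O(\delta t),[-iB,[-iB,-iA]]]+\tfrac{1}{24}[O(\delta t),[iA,[iA,iB]]]$, in which the unit-modulus $i$ factors only contribute phases, reproduces the two scrambling terms displayed in the statement.

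The first substantive step is the explicit expansion of $\mathscr{M}(\delta t)$, specializing the general $\text{PF}p$ analysis to the two-term case $L=2$, $p=2$. Following the Duhamel identity $e^{iX\tau}Ye^{-iX\tau}=Y+\int_0^{\tau}\d s\,e^{iX(\tau-s)}[iX,Y]e^{-iX(\tau-s)}$ already used in the leading-order analysis, I peel off the exponentials of $\mathscr{U}_2(\tau_1)$ layer by layer. Because the formula is symmetric with the half-step factors $e^{-iA\tau_1/2}$ on both sides of $e^{-iB\tau_1}$, the first-order commutators cancel and the integrand $\mathscr{N}(\tau_1)$ starts at order $\tau_1^{2}$, so the first nonvanishing multiplicative error is $\mathcal{O}(\delta t^{3})$ with leading coefficient exactly $M$. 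All deeper nestings, namely the fourth-order commutators $[A,[B,[B,A]]]$, $[B,[B,[B,A]]]$, $[B,[A,[A,B]]]$, and $[A,[A,[A,B]]]$, are gathered into $\mathscr{M}_{\re}$; bounding each conjugated nested commutator by its spectral norm via $\norm{UXV}=\norm{X}$ for unitaries $U,V$, together with the elementary monomial integrals, produces $\zeta_{\re,1}$ and $\zeta_{\re,2}$.

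Next, exactly as in the PF1 case, I use the self-consistency relation $e^{iH\tau_1}\mathscr{U}_2(\tau_1)=I+\mathscr{M}(\tau_1)$ to evaluate the prefactor multiplying the leading commutator inside the outer $\tau_1$-integral. The identity part integrates to the clean $M\delta t^{3}$ term, while the $\mathscr{M}(\tau_1)$ part gives a product-type remainder $\mathscr{M}^*_{\re}$. Substituting the established bound $\norm{\mathscr{M}(\tau_1)}=\mathcal{O}(\tau_1^{3})$, whose leading content is $\|[B,[B,A]]\|\tau_1^{3}$ and $\|[A,[A,B]]\|\tau_1^{3}$ plus fourth-order commutators at order $\tau_1^{4}$, and splitting according to the two leading-commutator channels yields the cross-term contributions $\zeta^*_{\re,(2,1)}$ and $\zeta^*_{\re,(2,2)}$. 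These carry precisely the $\delta t^{6}$ pieces (from $\tau_1^{3}\times\tau_1^{2}$, then integrated) and the $\delta t^{7}$ pieces (from the $\tau_1^{4}$ part) appearing in the statement, each multiplied by the overall prefactor $\|[B,[B,A]]\|$ or $\|[A,[A,B]]\|$ respectively. Setting $\mathscr{M}_{Re}=\mathscr{M}_{\re}+\mathscr{M}^*_{\re}$ and $\mathscr{O}_{Re}=2\norm{O}\,\norm{\mathscr{M}_{Re}}$ then closes the estimate.

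The main obstacle is combinatorial bookkeeping rather than any conceptual subtlety: the three factors and the half-step weight $\tfrac12$ generate a proliferation of nested-commutator terms and rational coefficients through the repeated Duhamel expansions, and one must verify both that the order-$\delta t$ and order-$\delta t^{2}$ contributions cancel so the leading term is precisely $M\delta t^{3}$, and that every higher-order term is assigned to the correct $\zeta$ with the stated power of $\delta t$ and commutator-norm prefactor. Pinning down the exact rationals $\tfrac{1}{32},\tfrac{1}{12},\tfrac{1}{48}$ and the $\delta t^{6}/\delta t^{7}$ weights in $\zeta^*_{\re,(2,1)}$ and $\zeta^*_{\re,(2,2)}$ amounts to carrying out the iterated integrals $\int_0^{\delta t}\d\tau_1\int_0^{\tau_1}\d\tau_2$ against the right monomials, but requires no estimate beyond the spectral-norm triangle inequality and the Duhamel bound already in hand.
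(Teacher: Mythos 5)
Your proposal is correct and follows essentially the same route as the paper's proof: a Duhamel/nested-commutator expansion of the multiplicative error of the Strang splitting, isolation of the leading $\tfrac{\delta t^{3}}{12}[-iB,[-iB,-iA]]+\tfrac{\delta t^{3}}{24}[iA,[iA,iB]]$ term, spectral-norm bounds on the conjugated fourth-order commutators for $\zeta_{\re,1},\zeta_{\re,2}$, the self-consistency relation $e^{iH\tau_1}\mathscr{U}_2(\tau_1)=I+\mathscr{M}(\tau_1)$ to produce the cross-term remainders $\zeta^*_{\re,(2,1)},\zeta^*_{\re,(2,2)}$, and a final application of the scrambling-based bound with the triangle inequality. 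The only cosmetic difference is that the paper imports the triple-integral representation of $\mathscr{M}(\delta t)$ directly from the general product-formula theory rather than re-deriving the order-$\tau_1^{2}$ onset of $\mathscr{N}(\tau_1)$ by hand, which does not change the substance.
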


\begin{proof}
According to Appendix L of  Ref.~\cite{childs2019nearly,chailds2021TrottertheoryPhysRevX.11.011020} and 
\begin{align}
    e^{iA\tau_1}B e^{-iA\tau_1}= B + \int_0^{\tau_1}  \d\tau_2 e^{iA(\tau_1-\tau_2)} [A, B] e^{-iA(\tau_1-\tau_2)},
\end{align}
the multiplicative error $\mathscr{M}(\delta t)$ for PF2 is
\begin{equation}
\begin{aligned}\label{Eq:PF2additive}
&\mathscr{M}(\delta t)=e^{iH\delta t}(\mathscr{U}_2(\delta t)-e^{-iH\delta t})\\
&=\int_0^{\delta t} \d\tau_1\int_0^{\tau_1}\d\tau_2\int_0^{\tau_2} \d\tau_3 \, e^{i \tau_1 H}e^{-i\tau_1A/2}
e^{-i\tau_1B}
e^{-i\tau_1A/2}
e^{i\tau_1A/2}e^{i\tau_1B}\\
&\quad\cdot \left( e^{-i\tau_3B}\left[-iB,\left[-iB,-i\frac{A}{2}\right]\right]e^{i\tau_3B}+ e^{i\tau_3A/2}\left[i\frac{A}{2},\left[i\frac{A}{2},iB\right]\right]e^{-i\tau_3A/2}\right) e^{-i\tau_1B} e^{-i\tau_1A/2}\\
&= \int_0^{\delta t} \d\tau_1\int_0^{\tau_1}\d\tau_2\int_0^{\tau_2} \d\tau_3 \, e^{i\tau_1H}e^{-i\tau_1A/2}
e^{-i\tau_1B}
e^{-i\tau_1A/2}\left(\left[-iB,\left[-iB,-i\frac{A}{2}\right]\right]+ \left[i\frac{A}{2},\left[i\frac{A}{2},iB\right]\right]+ R_1+R_2 \right)\\
&= \mathscr{M}_{2,1}+\mathscr{M}_{2,2} + \mathscr{M}_{\mathrm{re},1}+ \mathscr{M}_{\mathrm{re},2},
\end{aligned}
\end{equation}
where
\begin{equation}
\begin{aligned}
    R_1&= e^{i\tau_1A/2}e^{i\tau_1B}  e^{-i\tau_3B}\left[-iB,\left[-iB,-i\frac{A}{2}\right]\right]e^{i\tau_3B} e^{-i\tau_1B} e^{-i\tau_1A/2} - \left[-iB,\left[-iB,-i\frac{A}{2}\right]\right],\\
    R_2&=e^{i\tau_1A/2}e^{i\tau_1B} e^{i\tau_3A/2}\left[i\frac{A}{2},\left[i\frac{A}{2},iB\right]\right]e^{-i\tau_3A/2} e^{-i\tau_1B} e^{-i\tau_1A/2}- \left[i\frac{A}{2},\left[i\frac{A}{2},iB\right]\right],
\end{aligned}
\end{equation}
and
\begin{equation}
\begin{aligned}
    &\mathscr{M}_{2,1}= \int_0^{\delta t} \d\tau_1\int_0^{\tau_1}\d\tau_2\int_0^{\tau_2} \d\tau_3 \, e^{i\tau_1 H}e^{-i\tau_1A/2}
e^{-i\tau_1B}
e^{-i\tau_1A/2}\left[-iB,\left[-iB,-i\frac{A}{2}\right]\right],\\
&\mathscr{M}_{2,2}= \int_0^{\delta t} \d\tau_1\int_0^{\tau_1}\d\tau_2\int_0^{\tau_2} \d\tau_3 \, e^{i\tau_1 H}e^{-i\tau_1A/2}
e^{-i\tau_1B}
e^{-i\tau_1A/2}  \left[i\frac{A}{2},\left[i\frac{A}{2},iB\right]\right],\\
&\mathscr{M}_{\mathrm{re},1}= \int_0^{\delta t} \d\tau_1\int_0^{\tau_1}\d\tau_2\int_0^{\tau_2} \d\tau_3 \, e^{i\tau_1 H}e^{-i\tau_1A/2}
e^{-i\tau_1B}e^{-i\tau_1A/2}R_1,\\
&\mathscr{M}_{\mathrm{re},2}=\int_0^{\delta t} \d\tau_1\int_0^{\tau_1}\d\tau_2\int_0^{\tau_2} \d\tau_3 \, e^{i\tau_1 H}e^{-i\tau_1A/2}
e^{-i\tau_1B}e^{-i\tau_1A/2}R_2 .
\end{aligned}
\end{equation}
We have
\begin{equation}
\begin{aligned}
  R_1&= e^{i\tau_1A/2}e^{i\tau_1B}  e^{-i\tau_3B}\left[-iB,\left[-iB,-i\frac{A}{2}\right]\right]e^{i\tau_3B} e^{-i\tau_1B} e^{-i\tau_1A/2 }-\left[-iB,\left[-iB,-i\frac{A}{2}\right]\right]\\
  & =e^{i\tau_1A/2}e^{i\tau_1B}  \left[-iB,\left[-iB,-i\frac{A}{2}\right]\right]e^{-i\tau_1B} e^{-i\tau_1A/2}-\left[-iB,\left[-iB,-i\frac{A}{2}\right]\right]\\
   &\quad+ \int_0^{\tau_3}\d\tau_4 \,
 e^{i\tau_1A/2}e^{i\tau_1B}  e^{-i(\tau_3-\tau_4)B}\left[-iB,\left[-iB,\left[-iB,-i\frac{A}{2}\right]\right]\right]e^{i(\tau_3-\tau_4)B} e^{-i\tau_1B} e^{-i\tau_1A/2}\\
 &=e^{i\tau_1A/2}\left[-iB,\left[-iB,-i\frac{A}{2}\right]\right] e^{-i\tau_1A/2}-\left[-iB,\left[-iB,-i\frac{A}{2}\right]\right]\\
 &\quad+ \int_0^{\tau_1} \d \tau_2 \, e^{i\tau_1A/2}e^{i(\tau_1-\tau_2)B}\left[iB,\left[-iB,\left[-iB,-i\frac{A}{2}\right]\right]\right] e^{-i(\tau_1-\tau_2)B}e^{-i\tau_1A/2}\\
 &\quad+ \int_0^{\tau_3}\d\tau_4 \,
 e^{i\tau_1A/2}e^{i\tau_1B}  e^{-i(\tau_3-\tau_4)B}\left[-iB,\left[-iB,\left[-iB,-i\frac{A}{2}\right]\right]\right]e^{i(\tau_3-\tau_4)B} e^{-i\tau_1B} e^{-i\tau_1A/2}\\
&=\int_0^{\tau_1} \d \tau_1' \, e^{i(\tau_1-\tau_1')A/2}\left[i\frac{A}{2},\left[-iB,\left[-iB,-i\frac{A}{2}\right]\right]\right] e^{-i(\tau_1-\tau_1')A/2}\\
&\quad+ \int_0^{\tau_1} \d \tau_1' \, e^{i\tau_1A/2}e^{i(\tau_1-\tau_1')B}\left[iB,\left[-iB,\left[-iB,-i\frac{A}{2}\right]\right]\right] e^{-i(\tau_1-\tau_1')B}e^{-i\tau_1A/2}\\
 &\quad+ \int_0^{\tau_3}\d\tau_4 \,
 e^{i\tau_1A/2}e^{i\tau_1B}  e^{-i(\tau_3-\tau_4)B}\left[-iB,\left[-iB,\left[-iB,-i\frac{A}{2}\right]\right]\right]e^{i(\tau_3-\tau_4)B} e^{-i\tau_1B} e^{-i\tau_1A/2}.
\end{aligned}
\end{equation}
Therefore
\begin{equation}
     \|R_1\| \le \tau_1 \left \|\left[i\frac{A}{2},\left[-iB,\left[-iB,-i\frac{A}{2}\right]\right]\right]\right\|+ (\tau_1+ \tau_3) \left \|\left[iB,\left[-iB,\left[-iB,-i\frac{A}{2}\right]\right]\right]\right\|,
\end{equation}
which implies
\begin{equation}      \|\mathscr{M}_{\mathrm{re},1}\| \le \frac{{\delta t}^4}{32}\|\left[A,\left[B,\left[B,A\right]\right]\right]\|+  \frac{{\delta t}^4}{12} \|\left[B,\left[B,\left[B,A\right]\right]\right]\|=: \zeta_{\re,1}.
\label{eq:remain_pf2_re1}
\end{equation}
Similarly, we have
\begin{equation}
\begin{aligned}
R_2&=e^{i\tau_1A/2}e^{i\tau_1B} e^{i\tau_3A/2}\left[i\frac{A}{2},\left[i\frac{A}{2},iB\right]\right]e^{-i\tau_3A/2} e^{-i\tau_1B} e^{-i\tau_1A/2}-\left[i\frac{A}{2},\left[i\frac{A}{2},iB\right]\right] \\
&=e^{i\tau_1A/2}e^{i\tau_1B} \left[i\frac{A}{2},\left[i\frac{A}{2},iB\right]\right] e^{-i\tau_1B} e^{-i\tau_1A/2}-\left[i\frac{A}{2},\left[i\frac{A}{2},iB\right]\right]
\\
&\quad+ \int_0^{\tau_3}\d \tau_4 \, e^{i\tau_1A/2}e^{i\tau_1B}  e^{i(\tau_3-\tau_4)A/2}   \left[i\frac{A}{2}, \left[i\frac{A}{2},\left[i\frac{A}{2},iB\right]\right]\right]e^{-i(\tau_3-\tau_4)A/2} e^{-i\tau_1B} e^{-i\tau_1A/2}\\
&=e^{i\tau_1A/2}\left[i\frac{A}{2},\left[i\frac{A}{2},iB\right]\right] e^{-i\tau_1A/2}-\left[i\frac{A}{2},\left[i\frac{A}{2},iB\right]\right] \\
&\quad+ \int_0^{\tau_1} \d \tau_1' \,  e^{i\tau_1A/2}e^{i(\tau_1-\tau_1')B}\left[iB, \left[i\frac{A}{2},\left[i\frac{A}{2},iB\right]\right] \right] e^{-i(\tau_1-\tau_1')B} e^{-i\tau_1A/2}\\
&\quad+ \int_0^{\tau_3}\d \tau_4 \, e^{i\tau_1A/2}e^{i\tau_1B}  e^{i(\tau_3-\tau_4)A/2}  \left[i\frac{A}{2}, \left[i\frac{A}{2},\left[i\frac{A}{2},iB\right]\right]\right]e^{-i(\tau_3-\tau_4)A/2} e^{-i\tau_1B} e^{-i\tau_1A/2}\\
&=\int_0^{\tau_1} \d \tau_1' \, e^{i(\tau_1-\tau_1')A/2} \left[i\frac{A}{2}, \left[i\frac{A}{2},\left[i\frac{A}{2},iB\right]\right]\right] e^{-i(\tau_1-\tau_1')A/2} \\
&\quad+ \int_0^{\tau_1} \d \tau_1' \,  e^{i\tau_1A/2}e^{i(\tau_1-\tau_1')B}\left[iB, \left[i\frac{A}{2},\left[i\frac{A}{2},iB\right]\right] \right] e^{-i(\tau_1-\tau_1')B} e^{-i\tau_1A/2}\\
&\quad+ \int_0^{\tau_3}\d \tau_4 \, e^{i\tau_1A/2}e^{i\tau_1B}  e^{i(\tau_3-\tau_4)A/2}  \left[i\frac{A}{2}, \left[i\frac{A}{2},\left[i\frac{A}{2},iB\right]\right]\right]e^{-i(\tau_3-\tau_4)A/2} e^{-i\tau_1B} e^{-i\tau_1A/2},
\end{aligned}
\end{equation}
so
\begin{equation}
\|R_2\| \le (\tau_1+ \tau_3) \left\|\left[i\frac{A}{2}, \left[i\frac{A}{2},\left[i\frac{A}{2},iB\right]\right]\right] \right\|+ \tau_1\left\|\left[B, \left[i\frac{A}{2},\left[i\frac{A}{2},iB\right]\right] \right]\right\|,
\end{equation}
which implies
\begin{equation}
\|\mathscr{M}_{\mathrm{re},2}\| \le \frac{{\delta t}^4}{32}\|\left[B,\left[A,\left[A,B\right]\right]\right]\|+  \frac{{\delta t}^4}{48}\|\left[A,\left[A,\left[A,B\right]\right]\right]\|=: \zeta_{\re,2}.
\label{eq:remain_pf2_re2}
\end{equation}

Similarly, $\mathscr{M}_{2,1}$ and $\mathscr{M}_{2,2}$ can be bounded as

\begin{align}
\norm{\mathscr{M}_{2,1}}
&\le \frac{{\delta t}^3}{12} \| \left[B,\left[B,A\right]\right]\|, \\
\norm{\mathscr{M}_{2,2}}
&\le \frac{{\delta t}^3}{24} \|\left[A,\left[A,B\right]\right]\|.
\end{align}

By the triangle inequality,
\begin{equation}
\begin{aligned}
\|{\mathscr{M}(\delta t)}\|&\le \|\mathscr{M}_{2,1}(\delta t)\| + \|{\mathscr{M}_{2,2}(\delta t)}\| +\|\mathscr{M}_{\mathrm{re},1}\|+\|\mathscr{M}_{\mathrm{re},2}\|\\
&\le\frac{{\delta t}^3}{12} \| \left[B,\left[B,A\right]\right]\|+ \frac{{\delta t}^3}{24} \|\left[A,\left[A,B\right]\right]\|+\frac{{\delta t}^4}{32}\|\left[A,\left[B,\left[B,A\right]\right]\right]\|+  \frac{{\delta t}^4}{12} \|\left[B,\left[B,\left[B,A\right]\right]\right]\|\\
 &\quad + \frac{{\delta t}^4}{32}\|\left[B,\left[A,\left[A,B\right]\right]\right]\|+  \frac{{\delta t}^4}{48}\|\left[A,\left[A,\left[A,B\right]\right]\right]\|.
\end{aligned}  \label{eq:triangle_pf2}
\end{equation}
Thus, for $0\le\tau_1\le \delta t$, $\norm{\mathscr{M}(\tau_1)}\le \norm{\mathscr{M}(\delta t)}$.
Since $e^{i\tau_1 H}e^{-i\tau_1A/2}
e^{-i\tau_1B}
e^{-i\tau_1A/2}=e^{i\tau_1 H}\mathscr{U}_2(\tau_1)=(I+\mathscr{M}(\tau_1))$,
$\mathscr{M}_{2,1}$ can be rewritten as
\begin{equation}
    \begin{split}
        \mathscr{M}_{2,1}&=\int_0^{\delta t} \d\tau_1\int_0^{\tau_1}\d\tau_2\int_0^{\tau_2} \d\tau_3 \, (I+\mathscr{M}(\tau_1))\left[-iB,\left[-iB,-i\frac{A}{2}\right]\right]\\
        &=\int_0^{\delta t} \d\tau_1\int_0^{\tau_1}\d\tau_2\int_0^{\tau_2} \d\tau_3 \, \left[-iB,\left[-iB,-i\frac{A}{2}\right]\right]+\int_0^{\delta t} \d\tau_1\int_0^{\tau_1}\d\tau_2\int_0^{\tau_2} \d\tau_3 \, \mathscr{M}(\tau_1)\left[-iB,\left[-iB,-i\frac{A}{2}\right]\right]\\
        &=\frac{\delta t^3}{12}\left[-iB,\left[-iB,-iA\right]\right]+\int_0^{\delta t} \d\tau_1 \frac{\tau^2_1}{4}\, \mathscr{M}(\tau_1)\left[-iB,\left[-iB,-iA\right]\right]\\
        &=\frac{\delta t^3}{12}\left[-iB,\left[-iB,-iA\right]\right] + \mathscr{M}^*_{\re,(2,1)},
    \end{split}
\end{equation}
where $\mathscr{M}^*_{\re,(2,1)}$ can be bounded as
\begin{equation}
\begin{split}
      \| \mathscr{M}^*_{\re,(2,1)}\| &\le \int_0^{\delta t} \d\tau_1 \frac{\tau^2_1}{4}\, \|\mathscr{M}(\tau_1)\| \norm{\left[-iB,\left[-iB,-iA\right]\right] } \\
      &\le \|\mathscr{M}(\delta t)\| \int_0^{\delta t} \d\tau_1 \frac{\tau^2_1}{4}\,  \norm{\left[-iB,\left[-iB,-iA\right]\right] } \\
      &=\frac{\delta t^3}{12}\|\mathscr{M}(\delta t) \|\norm{\left[-iB,\left[-iB,-iA\right]\right] }\\
      &\le \big( \frac{{\delta t}^6}{144} \| \left[B,\left[B,A\right]\right]\|+ \frac{{\delta t}^6}{288} \|\left[A,\left[A,B\right]\right]\|+\frac{{\delta t}^7}{384}\|\left[A,\left[B,\left[B,A\right]\right]\right]\|+  \frac{{\delta t}^7}{144} \|\left[B,\left[B,\left[B,A\right]\right]\right]\|\\
 &\quad + \frac{{\delta t}^7}{384}\|\left[B,\left[A,\left[A,B\right]\right]\right]\|+  \frac{{\delta t}^7}{576}\|\left[A,\left[A,\left[A,B\right]\right]\right]\| \big) \norm{\left[B,\left[B,A\right]\right] }=: \zeta^*_{\re,(2,1)}.
\end{split}
\end{equation}
Similarly, $\mathscr{M}_{2,2}$ can be expressed as 
\begin{equation}
    \begin{split}
        \mathscr{M}_{2,2}&= \int_0^{\delta t} \d\tau_1\int_0^{\tau_1}\d\tau_2\int_0^{\tau_2} \d\tau_3 \, e^{i\tau_1 H}e^{-i\tau_1A/2}
e^{-i\tau_1B}
e^{-i\tau_1A/2}  \left[i\frac{A}{2},\left[i\frac{A}{2},iB\right]\right]\\
&=\frac{1}{4}\int_0^{\delta t} \d\tau_1\int_0^{\tau_1}\d\tau_2\int_0^{\tau_2} \d\tau_3 \, \left[iA,\left[iA,iB\right]\right]+\frac{1}{4}\int_0^{\delta t} \d\tau_1\int_0^{\tau_1}\d\tau_2\int_0^{\tau_2} \d\tau_3 \, \left[iA,\left[iA,iB\right]\right]\\
        &=\frac{\delta t^3}{24}\left[iA,\left[iA,iB\right]\right]+\int_0^{\delta t} \d\tau_1 \frac{\tau^2_1}{8}\, \mathscr{M}(\tau_1)\left[iA,\left[iA,iB\right]\right]\\
        &=\frac{\delta t^3}{24}\left[iA,\left[iA,iB\right]\right] + \mathscr{M}^*_{\re,(2,2)},
    \end{split}
\end{equation}
where 

\begin{equation}
    \begin{split}
     \|\mathscr{M}^*_{\re,(2,2)} \|&\le \norm{\mathscr{M}(\delta t)} \int_0^{\delta t} \d\tau_1 \frac{\tau^2_1}{8}\, \norm{\left[iA,\left[iA,iB\right]\right]}\\
        &=\frac{\delta t^3}{24}  \norm{\mathscr{M}(\delta t)}\norm{\left[iA,\left[iA,iB\right]\right]}\\
        &\le \big( \frac{{\delta t}^6}{288} \| \left[B,\left[B,A\right]\right]\|+ \frac{{\delta t}^6}{576} \|\left[A,\left[A,B\right]\right]\|+\frac{{\delta t}^7}{768}\|\left[A,\left[B,\left[B,A\right]\right]\right]\|+  \frac{{\delta t}^7}{288} \|\left[B,\left[B,\left[B,A\right]\right]\right]\|\\
 &\quad + \frac{{\delta t}^7}{768}\|\left[B,\left[A,\left[A,B\right]\right]\right]\|+  \frac{{\delta t}^7}{1152}\|\left[A,\left[A,\left[A,B\right]\right]\right]\| \big) \norm{\left[A,\left[A,B\right]\right]}=:\zeta^*_{\re,(2,2)}.
    \end{split}
\end{equation}

Therefore, the multiplicative error $\mathscr{M}(\delta t)$ of PF2 is 
\begin{equation}
    \mathscr{M}(\delta t) = \frac{\delta t^3}{12}\left[-iB,\left[-iB,-iA\right]\right] + \frac{\delta t^3}{24}\left[iA,\left[iA,iB\right]\right] + \mathscr{M}^*_{\re,(2,1)}+ \mathscr{M}^*_{\re,(2,2)} +\mathscr{M}_{\re,1}+\mathscr{M}_{\re,2}.
\end{equation}
Clearly, $ \frac{\delta t^3}{12}\left[-iB,\left[-iB,-iA\right]\right] + \frac{\delta t^3}{24}\left[iA,\left[iA,iB\right]\right]$ is the leading error term of $\mathscr{M}(\delta t)$ since 
 $$\mathscr{M}_{Re}:=\mathscr{M}^*_{\re,(2,1)}+ \mathscr{M}^*_{\re,(2,2)} +\mathscr{M}_{\re,1}+\mathscr{M}_{\re,2}$$ with
$\|\mathscr{M}_{Re}\| \le \|\mathscr{M}^*_{\re,(2,1)}\|+\|\mathscr{M}^*_{\re,(2,2)} \|+\norm{\mathscr{M}_{\re,1}}+\norm{\mathscr{M}_{\re,2}}=\bigO(\delta t^4)$. 

According to \cref{scramblingPF} and the leading term of $\mathscr{M}(\delta t)$,  we obtain
\begin{equation}
\begin{split}
       \epsilon_O &\le \frac{1}{12}\norm{[O(\delta t), [B[B,A]]] \ket{\psi}}\delta t^{3} +\frac{1}{24}\norm{[O(\delta t), [A[A,B]]] \ket{\psi}}\delta t^{3} + 2\norm{O}\norm{\mathscr{M}_{Re}} \\
       & \le \frac{1}{12}\norm{[O(\delta t), [B[B,A]]] \ket{\psi}}\delta t^{3} +\frac{1}{24}\norm{[O(\delta t), [A[A,B]]] \ket{\psi}}\delta t^{3} + 2\norm{O} \big(\|\mathscr{M}^*_{\re,(2,1)}\|+\|\mathscr{M}^*_{\re,(2,2)} \|+\norm{\mathscr{M}_{\re,1}}+\norm{\mathscr{M}_{\re,2}} \big)\\
      &= \frac{1}{12}\norm{[O(\delta t), [B[B,A]]] \ket{\psi}}\delta t^{3} +\frac{1}{24}\norm{[O(\delta t), [A[A,B]]] \ket{\psi}}\delta t^{3} + 2\norm{O}(\zeta_{\re,1}+\zeta_{\re,2}+\zeta^*_{\re,(2,1)}+\zeta^*_{\re,(2,2)}).
\end{split}
    \end{equation}

\end{proof}

\section{Numerical results}
 
\begin{figure*}[htb]
    \centering
    \subfloat[]{
    \includegraphics[width=0.45\textwidth]{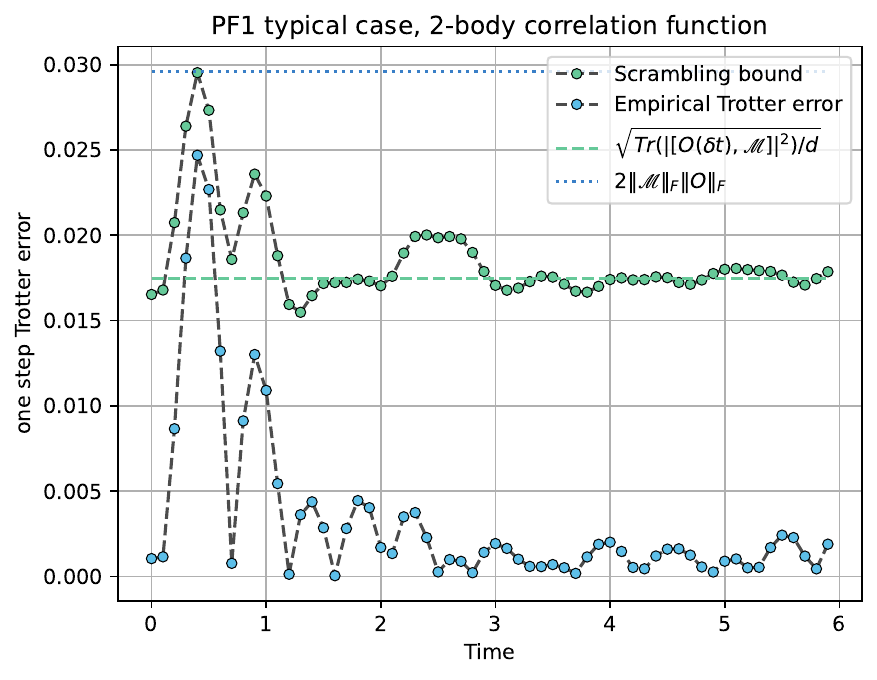}
    }
    \subfloat[]{
    \includegraphics[width=0.46\textwidth]{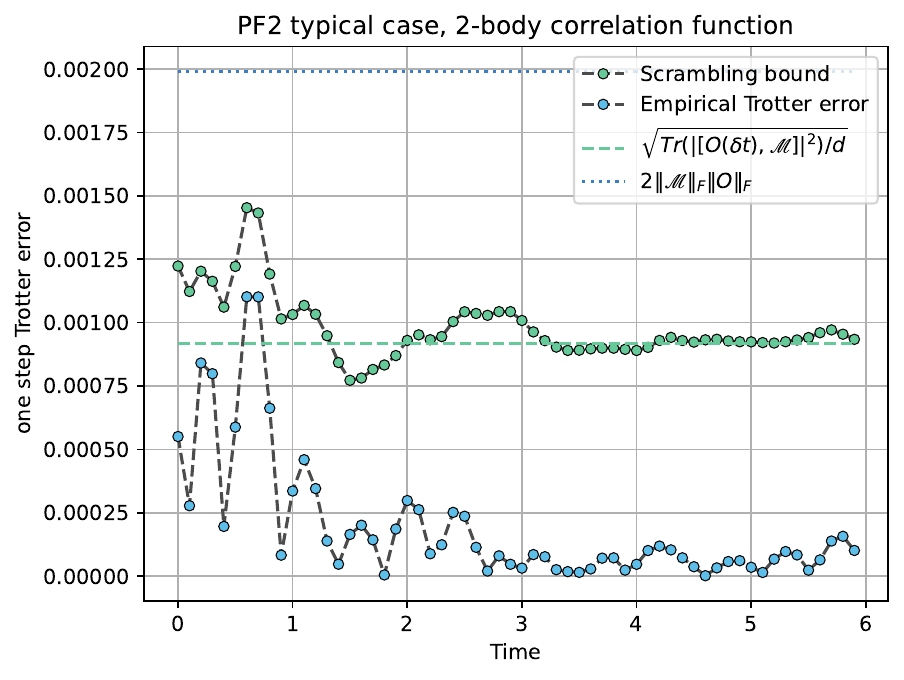}
    }
    
    \subfloat[]{
    \includegraphics[width=0.45\textwidth]{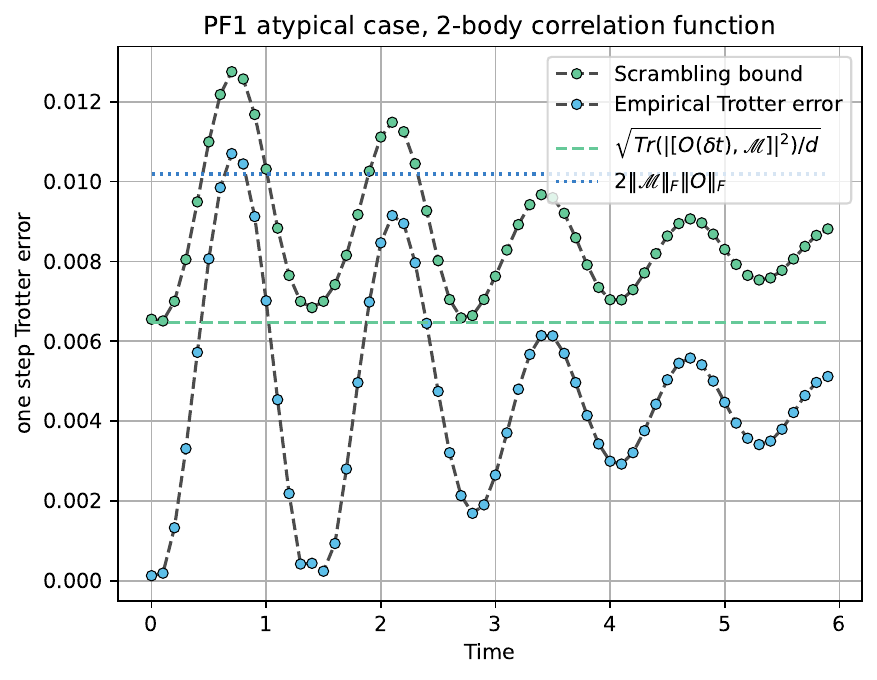}
    }
    \subfloat[]{
    \includegraphics[width=0.46\textwidth]{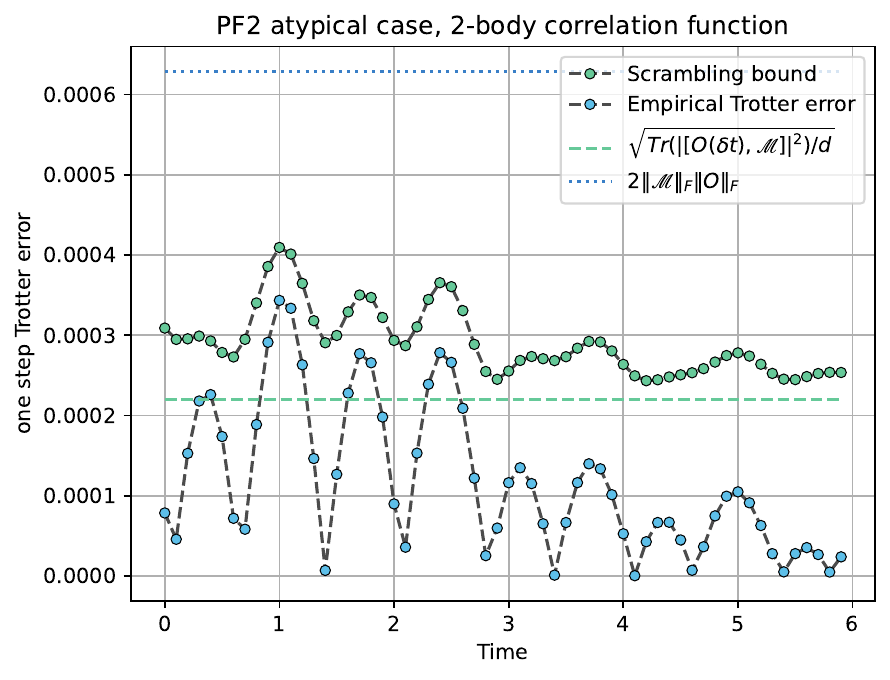}
    }
    \caption{One-step Trotter error of a 10-qubit 2-local observable $\langle\psi(t)|\mathscr U_p^\dagger O\mathscr U_p-U_0^\dagger OU_0|\psi(t)\rangle$ and the scrambling bound given by Eq.\ \ref{eq:one-step_scrambling_bound}, where O is set as $O=\dfrac{\sum_{i=1}^{N-1}X_iX_{i+1}}{\|\sum_{i=1}^{N-1}X_iX_{i+1}\|_{\infty}}$. The state is initialized as $|01\rangle^{\otimes N/2}$ and then evolved by a Hamiltonian of QIMF. The length of one segment is $t/r=0.1$. (a) and (b) show results of PF1 and PF2 method, respectively, with a typical set of parameter $(h_x,h_y,J)=(0.809, 0.9045, 1)$. (c) and (d) show results of PF1 and PF2 method, respectively, with an atypical set of parameter $(h_x,h_y,J)=(0,0.9045,0.4)$, which prevent the entanglement entropy of state $|\psi(t)\rangle$ increasing to its maximum. } 
    \label{fig:scramble}
\end{figure*}

We consider a one-dimensional quantum Ising spin model with mixed field (QIMF), governed by the Hamiltonian:
\begin{equation}
H=h_x\sum_{j=1}^{N}X_j+h_y\sum_{j=1}^NY_j+J\sum_{j=1}^{N-1}X_jX_{j+1},
\label{main:Hamil}
\end{equation}
where $X_j$ and $Y_j$ denote Pauli operators on site $j$, $h_x$ and $h_y$ represent transverse and longitudinal field strengths, respectively, and $J$ is the nearest-neighbor spin coupling. The evolution operator $\mathscr U_0=e^{-iHt}$ can be approximated using a first-order product formula (PF1) $\mathscr U_1=e^{-iA\delta t}e^{-iB\delta t}$ or a second-order product formula (PF2) $\mathscr U_2=e^{-iA\delta t/2}e^{-iB\delta t}e^{-iA\delta t/2}$, where $A=h_x\sum_{j=1}^N X_j+J\sum_{j=1}^{N-1}X_jX_{j+1}$ and $B=h_y\sum_{j=1}^NY_j$.
The multiplicative Trotter error $\mathscr{M}$ is calculated according to $\mathscr{M}=U_0^\dagger \mathscr U_1-I$ and $\mathscr{M}=U_0^\dagger \mathscr U_2-I$ for PF1 and PF2 respectively. The leading error term $\sum_j M_j$ consists of terms $M_j$ proportional to commutators. For PF1, $M=\sum_jM_j=[A,B]$ and $M_j$ are 2-local terms $M_j=2ih_xh_yZ_j+2iJh(Z_jX_{j+1}+X_jZ_{j+1})$. For PF2, the leading term of multiplicative error is 3-local: 
\begin{equation}
\begin{split}
    [A,[A,B]]=&4h_x^2h_y\sum_{j=1}^NY_j+4J^2h_y\sum_{j=1}^{N-1}Y_j+4J^2h_y\sum_{j=2}^NY_j\\&+8Jh_xh_y\sum_{j=1}^{N-1}(Y_jX_{j+1}+X_jY_{j+1})+8J^2h_y\sum_{j=1}^{N-2}(X_jY_{j+2}X_{j+2}),\\
    [B,[A,B]]=&-4h_xh_y^2\sum_{j=1}^NX_j+8Jh_y^2\sum_{j=1}^{N-1}(Z_jZ_{j+1}-X_jX_{j+1}).
\end{split}
\end{equation}

\subsection{The one-step Trotter error of various observables}

Here we show some more numerical results analogous to Figure.\ \ref{main:fig:scramble_H} in the main part but with some different observables. 
Fig.\ \ref{fig:scramble} shows the error in the expectation value of the results for observable $O=\dfrac{\sum_j{X_jX_{j+1}}}{\|\sum_j{X_jX_{j+1}}\|}$. These figures share some similarities. In typical cases, both the simulation error and scrambling decrease rapidly due to the increasing entanglement. The scrambling bound converges to the average value $\sqrt{\dfrac{\text{Tr}(|[O(\delta t),\mathscr M]|^2}d}$. In atypical cases, an atypical set of parameters $(h_x,h_y,J)=(0.809,0.9045,1)$ is chosen, which prevents the entanglement entropy from increasing to its maximum. As a result, the scrambling bound remains a bit larger than the average case.

\begin{figure}[tb]
\centering
\subfloat[]{
\includegraphics[width=0.45\textwidth]{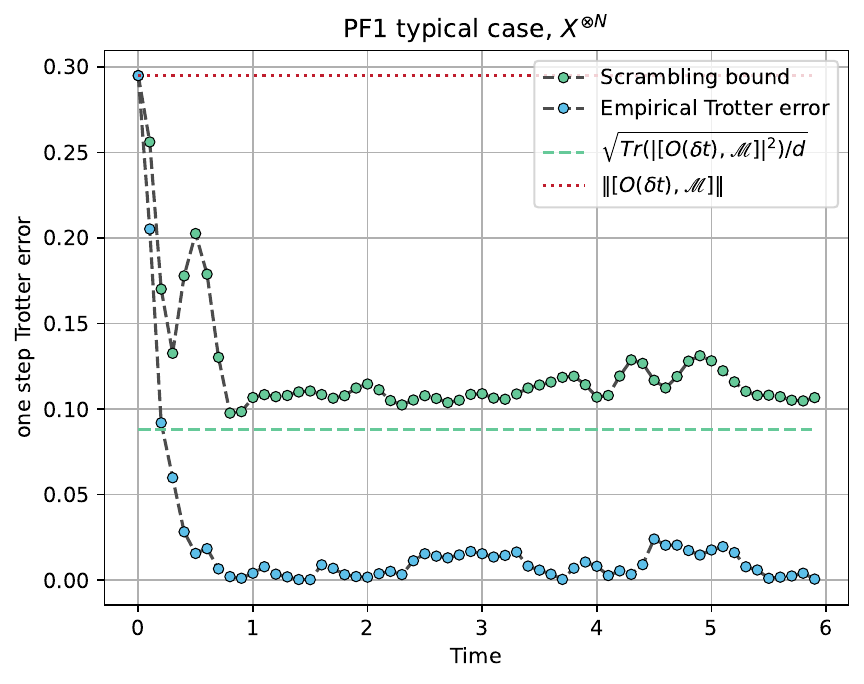}
}
\subfloat[]{
\includegraphics[width=0.45\textwidth]{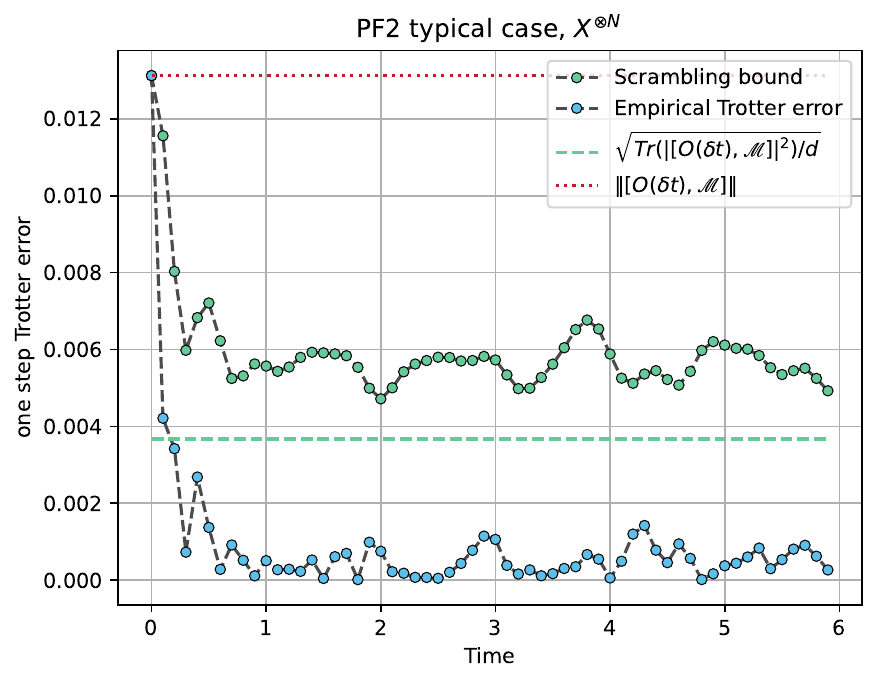}
}

\subfloat[]{
\includegraphics[width=0.45\textwidth]{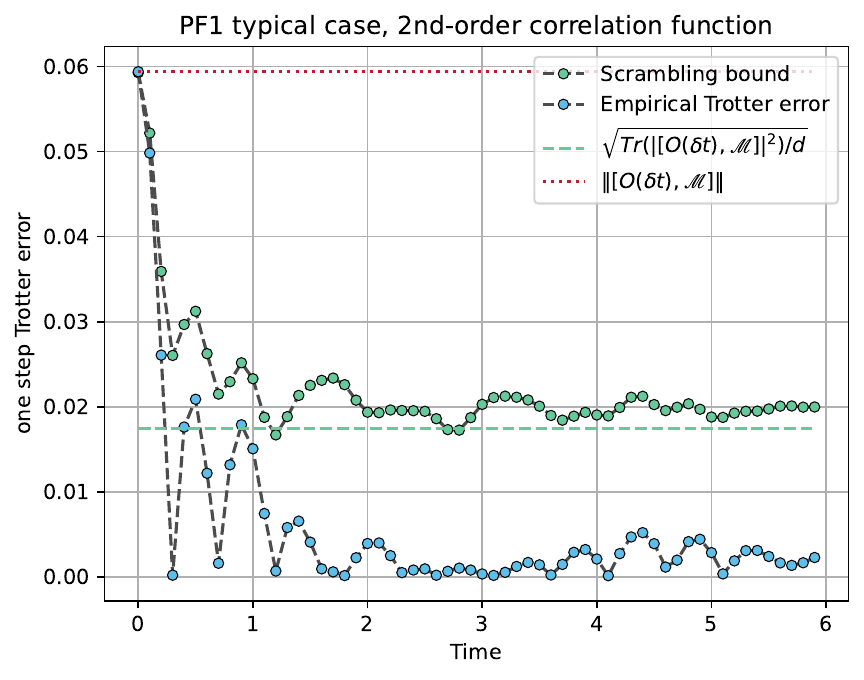}
}
\subfloat[]{
\includegraphics[width=0.45\textwidth]{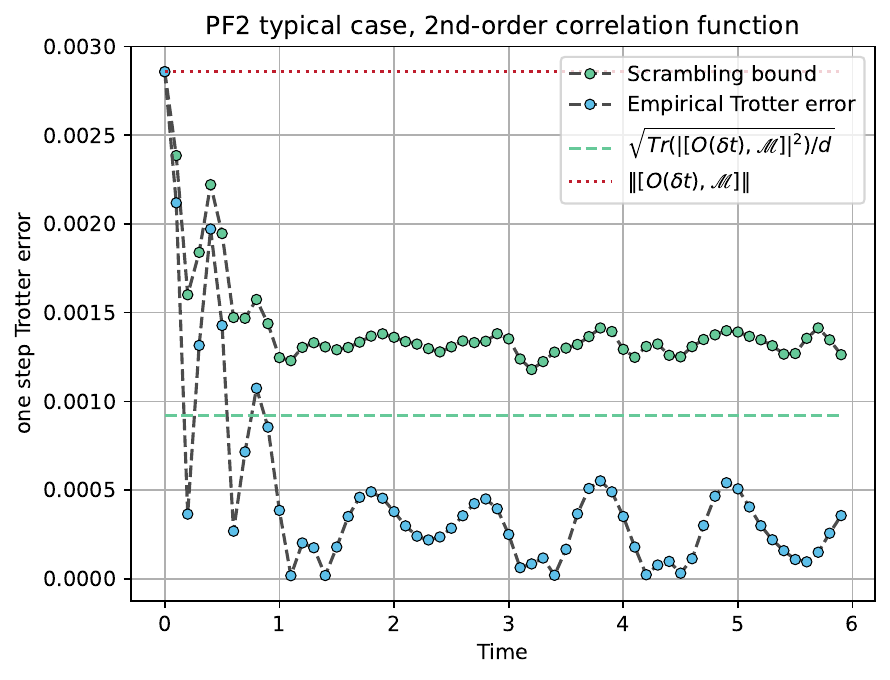}
}
\caption{One-step Trotter error of observable $O$ defined as $\bra{\psi(t)}\mathscr U_p^\dagger(t)O\mathscr U_p(t)-U_0^\dagger (t)OU_0(t)\ket{\psi(t)}$, comparing to our scrambling bound given by Eq.\ \ref{eq:one-step_scrambling_bound} and the worst-case bound (LB bound) $\|[O(\delta t),\mathscr M]\|$. The state is initialized as the worst case state, i.e., the eigenstate of $U_p^\dagger(t)O\mathscr U_p(t)-U_0^\dagger (t)OU_0(t)$ with eigenvalue with maximal absolute value, and then evolved by a Hamiltonian of QIMF with parameter $(h_x,h_y,J)=(0.809,0.9045,1)$. The length of one segment is t/r=0.1. Figures (a) and (b) show the results of the PF1 and PF2 methods, respectively, with setting $O=X^{\otimes N}$. Figures (c) and (d) show the results of the PF1 and PF2 methods respectively, with the setting $O=\sum_{i=1}^{N-1}X_iX_{i+1}/{\|\sum_{i=1}^NX_iX_{i+1}\|_\infty}$.}
\label{fig:LB_bound}
\end{figure}

Figure.\ \ref{fig:LB_bound} shows the improvement of our scrambling bound over the worst-case bound (Lieb-Robinson bound). Setting the initial state as the worst-case state, we observe that as the system evolves, the error reduces, and our scrambling bound correspondingly reduces. 
This indicates that our scrambling bound provides a tighter bounding than the LB bound, i.e., the operator norm case \cite{LiPhysRevA.110.062614}.

\subsection{Operator-induced entanglement}
To show that induced entropy can cause a reduction in Trotter error, we compare the Trotter error and expectation value for the same set of observables as shown in Figure.\ \ref{fig:comparison}. Figure.\ \ref{fig:comparison}(a) shows the one-step Trotter error for Pauli operators $P_1=X^{\otimes N},P_2=(X\otimes Y)^{\otimes N/2}$ and for a 2-local operators $O_1=\dfrac{\sum_{i=1}^{N-1}X_iX_{i+1}}{\|\sum_{i=1}^{N-1}X_iX_{i+1}\|_{\infty}}, O_2=\dfrac{\sum_{i=1}^{N}X_i}{\|\sum_{i=1}^{N}X_i\|_{\infty}}$. Figure.\ \ref{fig:comparison}(b) shows expectation values of Pauli operators $P_1,P_2$ and operators $O_1,O_2$ during the simulation. The initial state is set as $|+\rangle^{\otimes N}$ so that its entanglement entropy remains at a low level. Although the expectation values of observables $O_1,O_2$ are larger than those of the Pauli operators $P_1,P_2$. This difference is caused by $\mathscr M$-induced entropy (See Figure.\ \ref{fig:comparison}(c)): although the entanglement of quantum state $\ket{\psi(t)}$ is weak, the entanglement of $\ket{\tilde{\psi_{r_k}}_\mathscr M}$ suppresses the simulation error of $O_1,O_2$.

\begin{figure}[tb]
    \centering
    \subfloat[]{
    \includegraphics[width=0.42\textwidth]{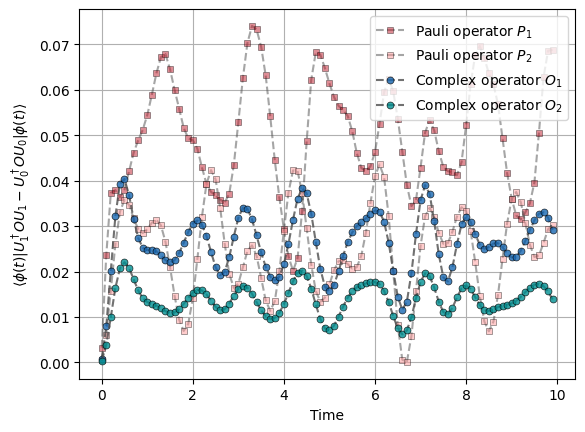}
    }
    \subfloat[]{
    \includegraphics[width=0.42\textwidth]{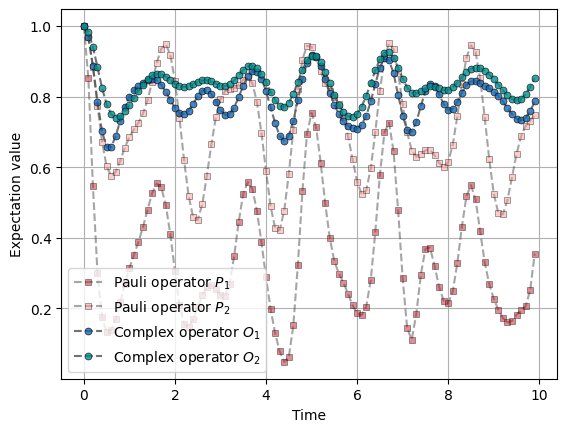}
    }
    
    \subfloat[]{
    \includegraphics[width=0.5\textwidth]{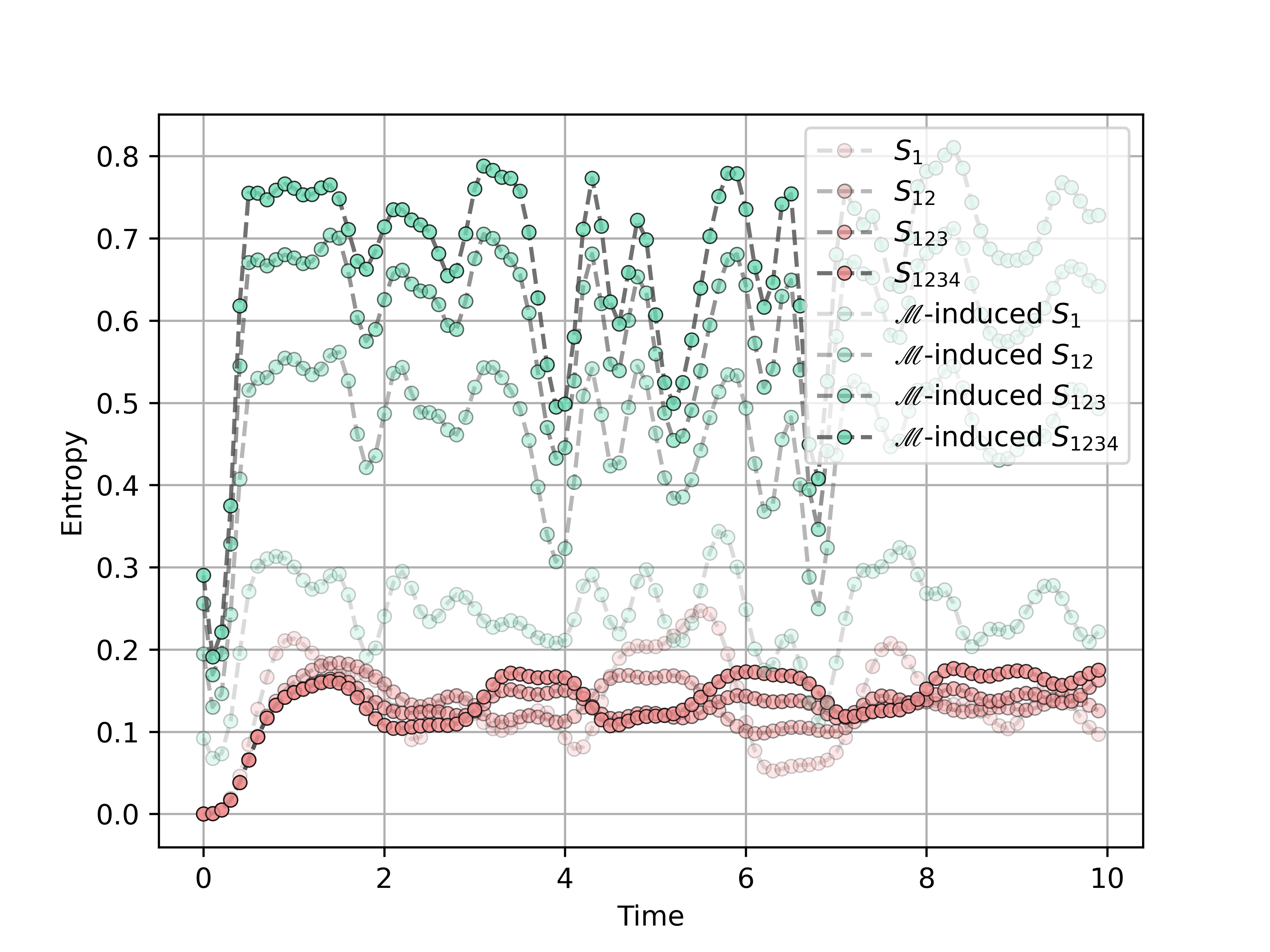}
    }
    
    \caption{Comparison of 2-local Pauli operators $P_1=X^{\otimes N},P_2=X_1\otimes X_2$ and 2-local entangled operator $O_1=\dfrac{\sum_{i=1}^{N-1}X_iX_{i+1}}{\|\sum_{i=1}^{N-1}X_iX_{i+1}\|_{\infty}}, O_2=\dfrac{\sum_{i=1}^{N}X_i}{\|\sum_{i=1}^{N}X_i\|_{\infty}}$. The initial state is set as $|\psi_0\rangle=|+\rangle^{\otimes N}$ and evolved by the Hamiltonian of the Ising model with typical parameters $(h_x,h_y,J)=(0.809,0.9045,1)$.(a)One-step Trotter error $|\langle\psi(t)|\mathscr U_1^\dagger O \mathscr U_1-U_0^\dagger OU_0|\psi(t)\rangle$ of these observables, where $|\psi(t)\rangle=e^{-iHt}\psi_0$. (b)Exact expectation value $\langle\psi(t)|O|\psi(t)\rangle$ of these observables. (c) shows the entanglement entropy of state $|\psi\rangle$ and $\mathscr M$-induced entropy.}
    \label{fig:comparison}
\end{figure}

\begin{figure}
    \centering
    \includegraphics[width=0.6\textwidth]{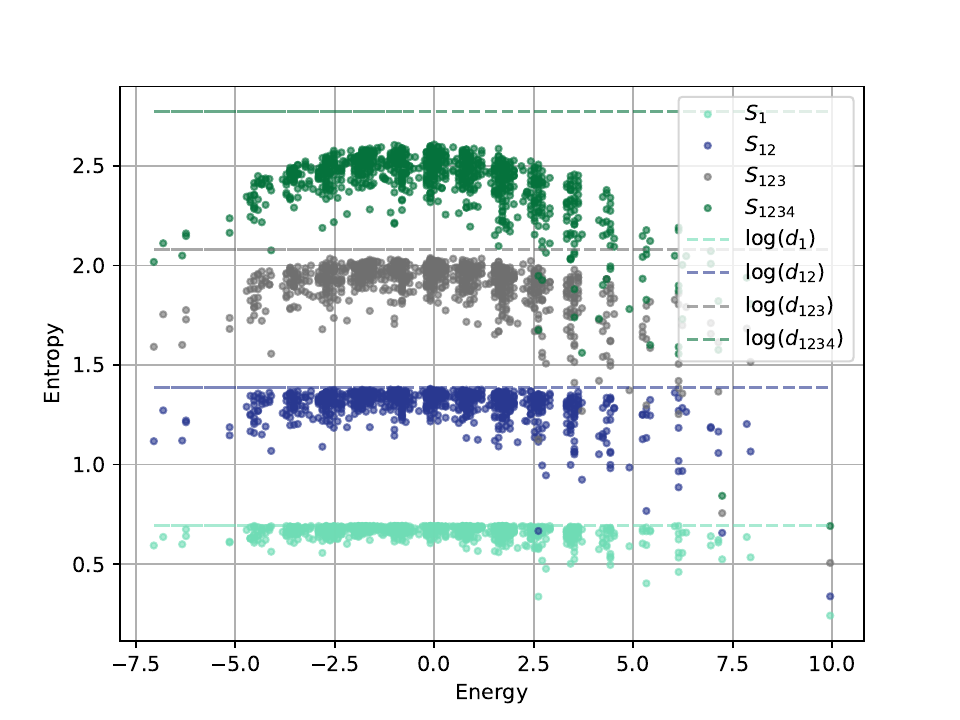}
    \caption{The relationship between energies and entanglement entropies of states. Tests of 1000 sample states are shown. The dashed lines represent the maximal entangled entropies $\log d$, where $d$ denotes the dimension of the subsystem.}
    \label{fig:ee}
\end{figure}

\subsection{Energy and entropy in quantum simulation}
Here we numerically observe that certain initial states exhibit a saturation of entanglement entropy at values below the theoretical maximum $\log(d_{\text{supp}})$, where $d_{\text{supp}}$ denotes the dimension of the corresponding subsystem. In particular, the maximum attainable entanglement entropy during evolution depends critically on the initial energy of the state. For the QIMF, entanglement entropy can increase to the highest value when the energy of the initial state is zero (See Figure.\ \ref{fig:ee}). States oriented along the Z-axis, like $\ket 0^{\otimes N}$ and $\ket{01}^{\otimes N/2}$, are examples with zero energy that can get highly entangled after evolution. In contrast, states with high or low energy, like $\ket+^{\otimes N}$,  exhibit suppressed entanglement generation even under prolonged evolution.




\subsection{Minimal Trotter steps}
Consider a set of observable of interest $\{O\}=\{O_{\mathcal{J}_1}, O_{\mathcal{J}_2}, ..., O_{\mathcal{J}_M} \}$ and input state $\ket{\psi}$ with Hamiltonian $H$, our task is to simulate these observables to the evolution $e^{-iHt}$ simultaneously. Here, we can define two qualities to quantify the performance of quantum simulation. The first is the worst simulation error in this observable set, 
    $\epsilon_{\{O\}}= \text{max}\epsilon_{O_{\mathcal{J}}}$.
We can also define the average simulation error of these observables,
    $\text{average }(\epsilon_{\{O\}})= \frac{1}{M} \sum_{\mathcal{J}=1}^M\epsilon_{O_{\mathcal{J}}} $.

Figure.\ \ref{fig:minimal_trotter_step} shows the relationship between the necessary Trotter step number and evolution time for target error $\varepsilon=10^{-4}$ in both worst cases and average cases. Our numerical experiments compare between a set $\{O\}$ that comprises combinations of Pauli operators, including Hamiltonian, magnetizations, low-order correlation functions, and some combinations of random local Pauli operators (these observables are normalized) and a set of randomly chosen Pauli operators $\{P\}$. In our example of QIMF, each Trotter step of the first-order Trotter-formula method can be implemented with 2 exponentials. Therefore, the Trotter step number indicates the gate complexity of the simulation. Theoretical bounds of the Trotter step number are also given according to the scrambling-based bound. Our results show that the gate cost is higher when dealing with a single Pauli operator compared to when dealing with a combination of various Pauli operators.

\begin{figure}[tb]
    \centering
    \subfloat[]{
    \includegraphics[width=0.45\textwidth]{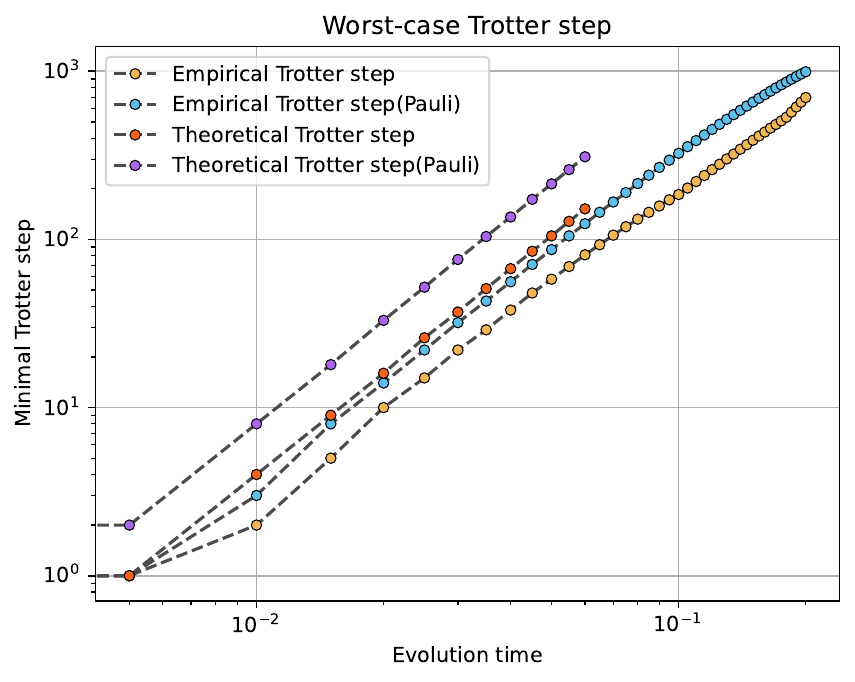}
    }
    \subfloat[]{
    \includegraphics[width=0.45\textwidth]{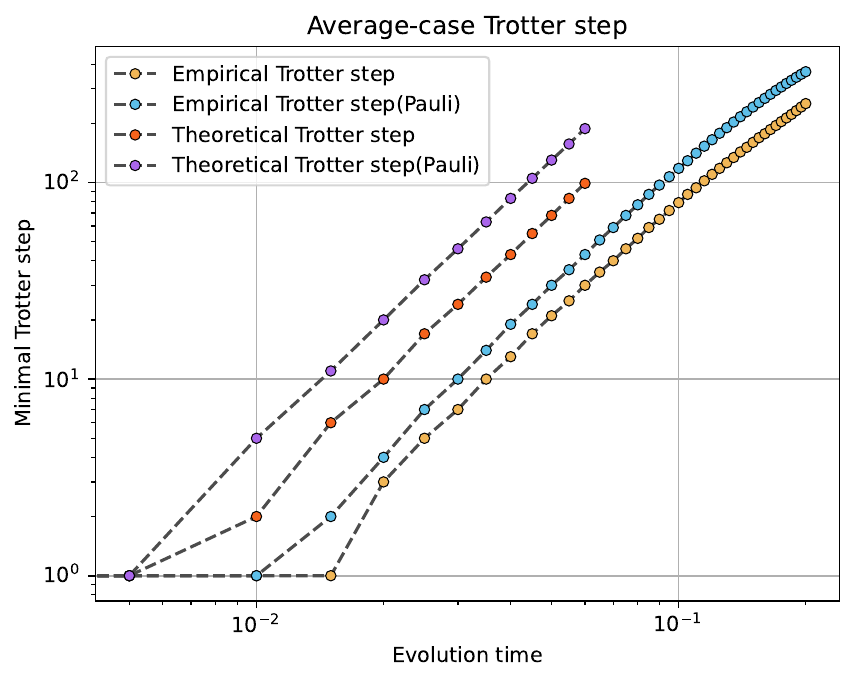}
    }
    \caption{Number of Trotter steps needed to achieve precision $\varepsilon=10^{-4}$ with first-order product formula for operators in a set $\{O\}$ and a set of random Pauli operators $\{P\}$. Observables in $\{O\}$ include Hamiltonian $\dfrac H{\|H\|_\infty}$, $\dfrac{\sum_iX_i}{\|\sum_iX_i\|_\infty}$, $\dfrac{\sum_iZ_i}{\|\sum_iZ_i\|_\infty}$, $\dfrac{\sum_iX_iX_{i+1}}{\|\sum_iX_iX_{i+1}\|_\infty}$, $\dfrac{\sum_iZ_iZ_{i+1}}{\|\sum_iZ_iZ_{i+1}\|_\infty}$, 10 combinations of random 2-local Pauli observables,10 combinations of random 3-local Pauli observables and 10 combinations of random 4-local Pauli observables. (a)Necessary number of Trotter steps for the worst-case observable in set $\{O\}$ and $\{P\}$. (b)Average necessary number of Trotter steps for the observable in set $\{O\}$ and $\{P\}$. Theoretical results are given according to the scrambling-based bound (See Theorem.\ \ref{main:theorem:scramble}).}
\end{figure}

\end{document}